\documentclass[12pt]{article}%
\usepackage[letterpaper, top=1in, bottom=1in, left=1in, right=1in]{geometry}
\usepackage{amsfonts, amsmath, amsthm, amssymb}
\usepackage{enumitem, graphicx, stmaryrd, color, bbm}
\usepackage{url, algorithm}
\usepackage{float}
\floatstyle{ruled}
\usepackage{dsfont}
\usepackage{footmisc}
\usepackage[T1]{fontenc}

\usepackage[pagebackref=true,colorlinks]{hyperref}
	\hypersetup{linkcolor=[rgb]{.7,0,.7}}
	\hypersetup{citecolor=[rgb]{.5,0,.5}}
	\hypersetup{urlcolor=[rgb]{.7,0,.7}}

\newtheorem{theorem}{Theorem}[section] 
\newtheorem{lemma}[theorem]{Lemma}

\newtheorem{definition}[theorem]{Definition}

\newtheorem{corollary}[theorem]{Corollary}
\newtheorem{remark}[theorem]{Remark}					
\newtheorem{remarks}[theorem]{Remarks}
\newtheorem{proposition}[theorem]{Proposition}

\newtheorem{fact}[theorem]{Fact}

\newcommand{\abs}[1]{\left|#1\right|}		
\newcommand{\st}{\,|\,} 				 		
\newcommand{\E}{\mathop{\mathbb{E}}}  		
\newcommand{\R}{\mathbb{R}}  		
\newcommand{\N}{\mathbb{N}} 			  		
\newcommand{\F}{\mathbb{F}}					

\newcommand{\mc}{\mathcal}

\newcommand{\set}[1]{\left\{ #1 \right\}}   
\newcommand{\ip}[1]{\langle #1 \rangle}     
\newcommand{\brac}[1]{\left( #1 \right)}    
\newcommand{\sqbrac}[1]{\left[ #1 \right]}  

\newcommand{\norm}[1]{\left\vert #1 \right\vert}			
\newcommand{\Norm}[1]{\left\Vert #1 \right\Vert}			

\newcommand{\ind}{\mathds{1}}

\newcommand{\poly}{\mathrm{poly}}

\newcommand{\val}{\textnormal{val}} 	
\newcommand{\Win}{\textnormal{Win}} 	 			
\newcommand{\up}[1]{{\left(#1\right)}}				
\newcommand{\supp}{\textnormal{supp}}				
\newcommand{\ghz}{\textnormal{GHZ}}

\begin{document}

\title{Improved Parallel Repetition for GHZ-Supported Games via Spreadness}
\author{
Yang P. Liu\thanks{Department of Computer Science, Carnegie Mellon University. Email: {\tt yangl7@andrew.cmu.edu}.} \and 
Shachar Lovett\thanks{Department of Computer Science and Engineering, University of California, San Diego. Email: {\tt slovett@ucsd.edu}. Supported by Simons Investigator Award \#929894 and NSF award CCF-2425349.} \and
Kunal Mittal\thanks{Department of Computer Science, Courant Institute of Mathematical Sciences, New York University. Email: {\tt kunal.mittal@nyu.edu}. Research supported a Simons Investigator Award.}
}
\date{}
\maketitle

\begin{abstract}
	We prove that for any 3-player game $\mathcal G$, whose query distribution has the same support as the GHZ game (i.e., all $x,y,z\in \{0,1\}$ satisfying $x+y+z=0\pmod{2}$), the value of the $n$-fold parallel repetition of $\mathcal G$ decays exponentially fast: \[ \text{val}(\mathcal G^{\otimes n}) \leq \exp(-n^c)\] for all sufficiently large $n$, where $c>0$ is an absolute constant. 
	
	We also prove a concentration bound for the parallel repetition of the GHZ game: For any constant $\epsilon>0$, the probability that the players win at least a $\left(\frac{3}{4}+\epsilon\right)$ fraction of the $n$ coordinates is at most $\exp(-n^c)$, where $c=c(\epsilon)>0$ is a constant.
	
	In both settings, our work exponentially improves upon the previous best known bounds which were only polynomially small, i.e., of the order $n^{-\Omega(1)}$. Our key technical tool is the notion of \emph{algebraic spreadness} adapted from the breakthrough work of Kelley and Meka (FOCS '23) on sets free of 3-term progressions.
\end{abstract}

\newpage
\tableofcontents
\newpage

\section{Introduction}

In a $k$-player game $\mc G$, a verifier samples a tuple of questions $ (x^\up{1},\dots,x^\up{k})$ from a distribution $Q$.
Then, for each $j\in \set{1,\dots,k}$, the verifier gives the question $x^\up{j}$ to player $j$, and the player gives back an answer $a^\up{j}$, which depends only on $x^\up{j}$.
The verifier now declares whether the players win or lose based on the evaluation of a predicate $V(x,a) \in \set{0,1}$ depending on the questions $x = (x^\up{1},\dots,x^\up{k})$ and answers $a = (a^\up{1},\dots,a^\up{k})$.
We define the game value, denoted $\val(\mc G)$, as the maximum winning probability (with respect to the distribution $Q$) over all possible player strategies; see Definitions~\ref{defn:multiplayer_games},~\ref{defn:game_value} for formal definitions.

A natural question that arises is: How does the value of the game behave under parallel repetition~\cite{FRS94}?
The $n$-fold parallel repetition, denoted $\mc G^{\otimes n}$, is a game where the players play, and try to win, $n$ independent copies of the game in parallel.
More precisely, the verifier samples questions $(x_i^\up{1},\dots,x_i^\up{k})\sim Q$ independently for $i=1,\dots,n$, and for each $j\in \set{1,\dots,k}$, sends questions $(x_1^\up{j},\dots, x_n^\up{j})$ to player $j$, to which they give back answers $(a_1^\up{j},\dots, a_n^\up{j})$.
The verifier says the players win if $V((x_i^\up{1},\dots,x_i^\up{k}), (a_i^\up{1},\dots,a_i^\up{k}))=1$ for each $i\in \set{1,\dots,n}$; see Definition~\ref{defn:game_parrep} for a formal definition.

Note that for any game $\mc G$, it holds that $\val(\mc G^{\otimes n})\geq \val(\mc G)^n$, since the players can achieve this value by repeating an optimal strategy in each of the $n$ coordinates.
Although one might expect the na\"ive bound $\val(\mc G^{\otimes n})\leq \val(\mc G)^n$ to hold as well, this turns out to be false~\cite{For89, Fei91, FV02, Raz11}.
Roughly speaking, this failure occurs because the players do not have to treat the $n$ copies of the game independently, and can instead correlate their answers among different copies.
Remarkably, it turns out that this failure is intimately connected to  the geometry of high-dimensional Euclidean tilings~\cite{FKO07, KORW08, AK09, BM21}.

Parallel repetition of 2-player games is well-understood.
Raz~\cite{Raz98} showed that for any game with value less than 1, the value of the $n$-fold parallel repetition decays exponentially in $n$.
Subsequent works have simplified this proof and strengthened the quantitative bounds~\cite{Hol09, BRRRS09, Rao11, RR12, DS14, BG15}.
These and related works have led to several applications in various domains, including the theory of interactive proofs~\cite{BOGKW88}, PCPs and hardness of approximation~\cite{FGLSS96, ABSS97, ALMSS98, AS98, BGS98, Fei98, Has01, Kho02a, Kho02b, GHS02, DGKR05, DRS05}, quantum information~\cite{CHTW04,BBLV13}, and communication complexity~\cite{PRW97, BBCR13, BRWY13}.
The reader is referred to the survey~\cite{Raz10} for more details.

Parallel repetition of multiplayer games is much less understood.
The only general bound says that for any game $\mc G$ with value less than 1, it holds that $\val(\mc G^{\otimes n}) \leq 1/\alpha(n)$, where $\alpha(n)$ is a very slowly growing inverse-Ackermann function~\cite{Ver96, FK91, Pol12}.
Recent work has made substantial progress in understanding special cases of multiplayer games~\cite{DHVY17, HR20, GHMRZ21, GHMRZ22, GMRZ22, BKM23, BBKLM24, BBKLM25, BBKLMM25}; however, the general question remains wide open, even for 3-player games.

Proving improved parallel repetition for multiplayer games has several potential applications.
It is known that a strong parallel repetition theorem for a certain class of multiplayer games implies super-linear lower bounds for non-uniform Turing machines, which is a longstanding open problem in complexity theory~\cite{MR21}.
Additionally, parallel repetition in the large answer alphabet regime is equivalent to many problems in high-dimensional extremal combinatorics, such as the density Hales-Jewett problem, and the problem of square-free sets in finite fields~\cite{FV02, HHR16, Mit25}.
Also, as stated in~\cite{DHVY17}, it is believable that improved understanding of parallel repetition can lead to a better understanding of communication complexity in the number-on-forehead (NOF) model, which is intimately connected to circuit lower bounds.

The focus of this paper is the 3-player GHZ game~\cite{GHZ89}, which proceeds as follows: The verifier samples questions $(x,y,z)\in \set{0,1}^3$ uniformly at random such that $x+y+z=0\pmod 2$, and the players' goal is to give answers $a,b,c\in \set{0,1}$ respectively satisfying $a+b+c\pmod 2 = x\lor y\lor z$.
The GHZ game has played a foundational role in quantum information theory, due in part to the fact that quantum strategies can win this game with probability 1, whereas any classical strategy wins with probability at most $3/4$.
Moreover, the GHZ game satisfies a self-testing property, which says that all quantum strategies achieving value 1 are essentially the same; this has led to applications like entanglement testing and device-independent cryptography~\cite{SB20}.

The problem of parallel repetition of the GHZ game has been discussed in several recent works.
Dinur, Harsha, Venkat and Yuen~\cite{DHVY17} extend the 2-player information-theoretic techniques of Raz~\cite{Raz98} and prove parallel repetition for a specific class of multiplayer games satisfying a certain connectivity property; they identify the GHZ game as a multiplayer game which is in some sense maximally far from this class of games; regarding the difficulty of the problem, they write
\begin{quote}
	``We believe that the strong correlations present in the GHZ question distribution represent the `hardest instance' of the multiplayer parallel repetition problem.''
\end{quote}

Subsequent research established polynomial decay bounds for the $n$-fold repetition of the GHZ game, showing $\val(\textnormal{GHZ}^{\otimes n}) \leq n^{-\Omega(1)}$~\cite{HR20, GHMRZ21}.
Notably, these results apply to any 3-player game $\mc G$ whose query distribution $Q$ has the same support as the GHZ game (i.e., $\{(x,y,z)\in \set{0,1}^3:x+y+z=0\pmod{2}\}$).

More recently, an exponential decay bound was proven for the GHZ game~\cite{BKM23}, and then extended to all 3-player XOR games\footnote{In a 3-player XOR game, the answers $a,b,c$ of the three players lie in some finite Abelian group $H$, and the predicate is of the form $a+b+c=\varphi(x,y,z)$, for some function $\varphi$ mapping questions $(x,y,z)$ of the players to elements of $H$.} satisfying a certain distributional assumption~\cite{BBKLM25}.
However, these works rely heavily on the XOR structure of the game predicate, exploiting it via sophisticated tools from Fourier analysis and additive combinatorics.
For instance, in the case of the GHZ game, they observe that the predicate corresponds to a linear equation over the group $\mathbb{Z}/4\mathbb{Z}$.
Consequently, these techniques do not extend to general game predicates lacking such XOR structure. 

In this work, we improve upon the existing polynomial decay bounds~\cite{HR20, GHMRZ21} by establishing a stretched exponential bound for all games sharing the query support of the GHZ game.
Crucially, unlike the aforementioned works on XOR games~\cite{BKM23, BBKLM25}, our result is agnostic to the answer sets and the game predicate---it solely depends on the support of the query distribution.
Our technical approach is distinct as well: we utilize spreadness-based arguments from recent works~\cite{KM23, KLM24, JLLOS25}, whose application to the context of parallel repetition is novel.

Our main result is the following:
\begin{theorem}[Parallel Repetition for the GHZ query support]\label{thm:intro_main}
	Let $\mc G$ be any 3-player game with value $\val(\mc G)<1$, whose query distribution has support \[\set{(x,y,z)\in \set{0,1}^3:x+y+z=0\pmod{2}}.\]
	Then, for all sufficiently large $n$,\footnote{i.e., $n\geq N$, where $N$ is a constant depending on the game $\mc G$.\label{footnote:suff_large_n}} it holds that
	\[\val(\mc G^{\otimes n}) \leq \exp\brac{-n^{c}},\]
	where $c>0$ is an absolute constant.
\end{theorem}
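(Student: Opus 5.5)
We follow the standard density-increment template for parallel repetition, with the ``pseudorandomness'' notion replaced by algebraic spreadness in the sense of Kelley--Meka. Identify the question space of player $1$ (resp.\ $2$, $3$) in $\mc G^{\otimes n}$ with $\F_2^n$. The queries are then a uniform-like sample $(x,y,z)$ with $x+y+z=0$, weighted coordinatewise by $Q$; since $Q$ has full support on the GHZ support, this product measure is comparable to the uniform one on $\{x+y+z=0\}$ up to a bias in each coordinate. Fix a strategy and write $W\subseteq \F_2^n\times\F_2^n$ for the set of winning pairs $(x,y)$, with $z=x+y$, and suppose $\Pr[W]\geq \exp(-n^{c})$. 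The goal is a contradiction for small $c$.

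\textbf{Step 1 (restriction to a subcube/affine subspace).} Conditioning on a ``typical'' set of coordinates and fixing questions there, we obtain a restricted game on $m\geq n^{\Omega(1)}$ free coordinates in which the winning event still has density $\delta\geq \exp(-n^{c})$. We iterate the following dichotomy on the players' question sets. For each answer profile, the set of $(x,y)$ with given answers factors as $A\times B\times C$ restricted to $x+y+z=0$ (a ``rectangle'' in each player's question space).

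\textbf{Step 2 (spreadness dichotomy).} Let $A,B,C\subseteq\F_2^m$ be the sets on which the players give a fixed answer tuple. If each of $A,B,C$ is algebraically spread, i.e.\ has no density increment by more than a $(1+\epsilon)$ factor on any affine subspace of codimension $d=\poly(\log(1/\delta))$, then a Kelley--Meka style counting lemma (via Hölder/grid-norm estimates) shows that the count of solutions to $x+y+z=0$ in $A\times B\times C$ is approximately $\mu(A)\mu(B)\mu(C)$, and more strongly that the induced distribution on any single coordinate $i$ is close to $Q$. Because $\val(\mc G)<1$, a spread winning set must lose a constant fraction in a random coordinate, and hence cannot have a large winning fraction across many coordinates. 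Otherwise, some set is not spread, and we pass to an affine subspace of codimension $d$ on which its density increases by a factor $1+\epsilon$. A density can increase only $O(\log(1/\delta)/\epsilon)$ times, so after $\poly(\log 1/\delta)=n^{O(c)}$ codimension steps we reach a spread configuration, while still retaining $m-n^{O(c)}\gg 1$ dimensions. The affine restriction of all three players must be compatible, and this is where the GHZ support is used. Fixing a linear constraint $\ip{v,x}=\alpha$ for player~1 forces nothing on $y,z$ individually. So we instead restrict along subspaces $V$ and cosets for all three consistently, using that $x+y+z=0$ is preserved under passing to $(x+V,y+V,z+V)$.

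\textbf{Step 3 (from spread to coordinatewise loss, the main obstacle).} The hard part is converting spreadness into ``the players lose in many coordinates'' for an arbitrary predicate. Spreadness gives near-uniform statistics for linear tests only, whereas the strategy's answers are arbitrary functions. We plan to handle this by partitioning by answers in a fixed coordinate $i$: the answers $a_i(x)$ split $A$ into pieces, and for spread pieces the marginal of $(x_i,y_i,z_i)$ on the solution set is $\epsilon$-close to $Q$ restricted to each answer cell. Pieces that are not spread trigger a further increment, so we use an energy/sunflower-type potential to bound the total number of increments. Then, by the argument of \cite{HR20, GHMRZ21}, the winning probability in coordinate $i$ conditioned on winning elsewhere is at most $1-\Omega(1)$. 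Repeating over $n^{\Omega(1)}$ coordinates chosen sequentially (each time refining the spread decomposition) yields $\Pr[W]\leq (1-\Omega(1))^{n^{\Omega(1)}}$. This contradicts $\delta\geq\exp(-n^{c})$ for small enough absolute $c$, and it gives the theorem for $n$ larger than a constant depending on $\mc G$.
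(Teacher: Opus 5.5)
Your outer framework (make the question sets algebraically spread, find a hard coordinate for each large product event, then a Raz-type induction) matches the paper's. The proof breaks at the step you yourself flag as the main obstacle, and that step is the heart of the theorem. The sentence in Step~2, that a spread configuration whose coordinate-$i$ marginal is close to $Q$ ``must lose a constant fraction in a random coordinate,'' is a non sequitur. Each player answers coordinate $i$ as a function of \emph{all} of its coordinates, and these can carry information about the other players' $i$-th questions. Take GHZ with $E=\set{x:x_1=x_2}$, $F=\set{y:y_2=0}$, $G=\F_2^n$, an event of measure $1/4$. The coordinate-$1$ marginal is exactly uniform on the GHZ support, yet player~3 reads $x_1=z_2$ and $y_1=z_1+x_1$, so the players win coordinate~$1$ with probability~$1$. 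Spreadness excludes this particular example, but the example shows that what you need is a structural statement, not a marginal one.

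Your repair in Step~3 does not supply that statement. The answer cells $\set{x:a_i(x)=a}$ can be correlated with the $i$-th question; for example, $a_i(x)=x_i$ puts each cell inside a hyperplane $\set{x_i=\mathrm{const}}$. Every spread refinement of such a cell then fixes coordinate $i$, and ``marginal close to $Q$'' becomes vacuous. The ``energy/sunflower potential'' is left unspecified. Finally, the argument of \cite{HR20, GHMRZ21} gives the $1-\Omega(1)$ bound only under Fourier uniformity at polynomial density, so it cannot be invoked at density $\exp(-n^{c})$; redoing it under algebraic spreadness is exactly what has to be proved.

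The missing idea is a local embedding via squares $s_{x,y,w}=\set{(x,y),(x+w,y),(x,y+w),(x+w,y+w)}$. On a square with $w_i=1$, each player's entire input is determined by its own $i$-th bit, so any strategy wins coordinate $i$ there with probability at most $\val(\mc G)$ (Lemma~\ref{lemma:non_triv_in_extens_hard}). The paper then shows that when $E,F,G$ are \emph{all three} algebraically spread, the uniform distribution on $S(E,F,G)$ is $\ell_1$-close to ``uniform $(x,y,w)$ with $s_{x,y,w}\subseteq S$, then a uniform point of it.'' It does this by estimating $\Norm{\Gamma}_1$ and $\Norm{\Gamma}_2^2$ of the square-count function with the graph-counting lemma of \cite{FHHK24}, after deriving combinatorial spreadness and lower-bounded marginals from Kelley--Meka, and finishing with Cauchy--Schwarz. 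Since a random square has about $n/2$ nontrivial coordinates, this gives $\E_i\Pr[\Win_i\mid\mc E]\le\frac12+\frac{\val(\mc G)}{2}+\epsilon$.

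Applying this to an arbitrary product event needs a \emph{decomposition} of almost all of $S(E,F,G)$ into pieces $S(E_t,F_t,G_t)$, with all three sets spread in compatible cosets $e_t+\mc V_t$, $f_t+\mc V_t$, $e_t+f_t+\mc V_t$. Your single-path, one-set density increment does not give this. Restricting to the coset where one set increments can make the other two sparse or non-spread, and it says nothing about the mass of $S$ outside that coset. The paper obtains the decomposition from the two-set uniformization of \cite{JLLOS25}, plus recursive rounds in which Kelley--Meka's one- and two-spread counting bounds guarantee that a constant fraction of $S$ lands in fully spread pieces.

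Two smaller points. For non-uniform $Q$, the measure $Q^{\otimes n}$ is not comparable to the uniform one; the density ratio can be $2^{\Theta(n)}$, so you need the reduction of Lemma~\ref{lemma:wlog_unif_dist}. Step~1 is unjustified and unnecessary: Raz's criterion (Lemma~\ref{lemma:inductive_criteria}) turns a hard coordinate for every product event of measure $\ge 2^{-n^c}$ directly into the bound.
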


We complement this result with a concentration bound for games with the GHZ query distribution.
To the best of our knowledge, the exponential decay bounds established in prior works~\cite{BKM23, BBKLM24} do not imply such concentration.
We prove the following:

\begin{theorem}[A Concentration Bound; restated and proved as Theorem~\ref{thm:ghz_conc}]\label{thm:intro_conc}
	Let $\mc G$ be any 3-player game with value $\val(\mc G)<1$, whose query distribution is uniform over the set \[\set{(x,y,z)\in \set{0,1}^3:x+y+z=0\pmod{2}}.\]
	Then, for every constant $\epsilon>0$, there exists $c=c(\epsilon)>0$, such that for all sufficiently large $n$,\footnote{i.e., $n\geq N$, where $N$ is a constant depending on the game $\mc G$ and the parameter $\epsilon$.\label{footnote:suff_large_n_eps}} the probability that the players can win at least $\val(\mc G)+\epsilon$ fraction of the $n$ coordinates in the game $\mc G^{\otimes n}$ is at most $\exp\brac{-n^{c}}$.
\end{theorem}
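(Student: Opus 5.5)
The plan is to derive Theorem~\ref{thm:intro_conc} from the spreadness-based machinery behind Theorem~\ref{thm:intro_main}, run in the usual ``threshold'' form of parallel repetition. Fix a strategy and let $W\subseteq[n]$ be the random set of coordinates won. We want to bound $\Pr[|W|\ge(\val(\mc G)+\epsilon)n]$. Union bounding over all candidate sets would be far too lossy. Instead I would use the standard reduction that sits behind Rao/Raz-style threshold theorems: it suffices to show that for a uniformly random subset $S\subseteq[n]$ of size $k=n^{c'}$, the probability of winning every coordinate in $S$ is at most $(\val(\mc G)+\epsilon/2)^{k}$ up to small errors.

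The reduction itself is a counting argument. If $|W|\ge(\val+\epsilon)n$, then a random $k$-subset lies inside $W$ with probability at least about $(\val+\epsilon)^k$. Comparing this with the upper bound gives
\[
\Pr[|W|\ge(\val+\epsilon)n]\le \left(\frac{\val+\epsilon/2}{\val+\epsilon}\right)^{k}+\text{error}\le \exp(-\Omega(\epsilon k)),
\]
which is $\exp(-n^{c})$ once $k$ is polynomial in $n$.

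The core step is therefore the inductive ``one more coordinate'' statement, done with spreadness rather than information theory. We process the coordinates of $S$ one at a time. We maintain the event $E$ that all coordinates processed so far were won, conditioned on a small amount of fixed information. Suppose $E$ has probability at least $\exp(-n^{c})$. Then, after a density-increment step that passes to an affine subspace (or structured subcube) of $\F_2^n$ of codimension $\poly\log$ in $1/\Pr[E]$, the question sets of the three players are algebraically spread in the Kelley--Meka sense. Spreadness should imply that the distribution of questions in a fresh random coordinate $i$, conditioned on $E$, is close in total variation to the original query distribution. In that coordinate the players then effectively play one copy of $\mc G$, so they win it with probability at most $\val(\mc G)+\epsilon/4$. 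The uniform GHZ distribution is what allows this ``close to uniform on the GHZ support'' conclusion, since the pseudorandomness is measured against it.

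Iterating gives the product bound $(\val+\epsilon/2)^k$. For the iteration to run, each step must lose only a little density and codimension, so that $k=n^{c'}$ steps keep the total codimension well below $n$. I expect the main obstacle to be exactly this accounting. We need a spreadness/sifting lemma whose codimension cost is polylogarithmic in the density, and we must show that conditioning on winning a coordinate, a non-product event involving answers, degrades spreadness only mildly. Without the XOR structure this cannot be done by Fourier methods on the predicate. My first attempt would be to condition on each player's answers in the processed coordinates, which are functions of that player's own questions. Then $E$ becomes a union of product sets $A\times B\times C$ restricted to the GHZ subspace, and the spreadness counting lemma applies to each such box. Boxes of negligible density are discarded, which costs an additive $\exp(-n^{c})$ in the error term.
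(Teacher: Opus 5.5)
There is a genuine gap in your core step. The inference ``the coordinate-$i$ question marginal under $Q^{\otimes n}$ conditioned on the event is close to $Q$, so in coordinate $i$ the players effectively play one copy of $\mc G$'' is not valid. To bound $\Pr[\Win_i\mid E\times F\times G]$ by $\val(\mc G)$ you must \emph{embed} one copy of $\mc G$: each player has to generate its entire input vector from its own coordinate-$i$ question plus shared randomness, with the correct joint law. Marginal closeness gives nothing like this. For example, take $j\neq i$ and condition on the product event $\set{x:x_j=0}\times\set{y:y_j=y_i}\times\F_2^n$, which has measure $1/4$. This leaves $(x_i,y_i,z_i)$ exactly uniform on the GHZ support. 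Yet player 3 sees $z_i$ and $z_j=y_i$, hence all of $(x_i,y_i,z_i)$, and wins coordinate $i$ of GHZ with certainty.

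The missing idea is the square structure. For algebraically spread $E,F,G$, the uniform distribution on $S(E,F,G)$ is $\ell_1$-close to the following mixture: choose a uniformly random square $s_{x_0,y_0,w}\subseteq S(E,F,G)$ with shared randomness, then a uniform point of it (Proposition~\ref{prop:sq_cover}). This is proved via the Kelley--Meka estimates, the graph-counting Theorem~\ref{thm:graph_count}, and an $\ell_2$ collision bound. On a square with $w_i=1$, the three inputs are $x_\alpha,y_\beta,z_\gamma$, each determined by that player's own coordinate-$i$ bit alone; this is exactly the embedding of Lemma~\ref{lemma:non_triv_in_extens_hard}. Relatedly, a single density increment to one affine subspace does not suffice. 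You need Proposition~\ref{prop:uniformization}, which partitions all but an $\eta$-fraction of $S(E,F,G)$ into pieces where all three sets are simultaneously spread on compatible cosets $x_t+\mc V_t$, $y_t+\mc V_t$, $x_t+y_t+\mc V_t$.

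Even with squares, the naive bound is only $\frac{1+\val(\mc G)}{2}+\epsilon$. A random square in $S(E,F,G)$ is non-trivial in only about half of the coordinates (Lemma~\ref{lemma:many_non_triv_coord}), which is useless at threshold $\val(\mc G)+\epsilon$. The paper removes this factor-$2$ loss by drawing the square conditioned on $w_i=1$. It shows that, on average over $i$, this changes the resulting point distribution by only $o(1)$ (Lemma~\ref{lemma:distribution_cond_non_trivial}), via an entropy argument that uses only $\abs{\mc T}\geq 2^{3n-o(n)}$ (Lemma~\ref{lemma:cond_marginal}); this gives Lemma~\ref{lemma:improved_hard_coor_general}.

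The rest of your outline matches the paper: the random $t$-subset counting reduction, and the induction that conditions on the questions and answers $R=r$ in the processed coordinates (a product event), discarding values $r$ of probability below $2^{-n^c}$. The cumulative-codimension accounting you anticipate as the main obstacle does not actually arise. Each event $R=r$ is a fresh product event of measure at least $2^{-n^c}$, and the decomposition and square covering are re-applied to it from scratch, so spreadness never has to be maintained across steps.
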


As a direct consequence, we obtain a concentration bound for the standard GHZ game.

\begin{corollary}
	For every constant $\epsilon>0$, there exists $c=c(\epsilon)>0$, such that the probability of the players winning at least $\brac{\frac{3}{4}+\epsilon}n$ coordinates in the game $\ghz^{\otimes n}$ is at most $\exp\brac{-n^{c}}$.
\end{corollary}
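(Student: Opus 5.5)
This corollary should follow by specializing Theorem~\ref{thm:intro_conc} to $\mc G=\ghz$. The plan has two checks, after which the statement is a direct consequence.

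First, the query distribution of $\ghz$ is, by definition, uniform over $\set{(x,y,z)\in\set{0,1}^3 : x+y+z=0\pmod 2}$. This is exactly the hypothesis on the query distribution in Theorem~\ref{thm:intro_conc}.

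Second, we need $\val(\ghz)=\frac34<1$. For the upper bound, take any deterministic strategy $a(x),b(y),c(z)$. Summing the four winning constraints $a(x)+b(y)+c(z)=x\lor y\lor z \pmod 2$ over the four queries $(0,0,0),(0,1,1),(1,0,1),(1,1,0)$, every answer bit appears exactly twice on the left. So the left-hand sides sum to $0$, while the right-hand sides sum to $0+1+1+1=1\pmod 2$. Hence the four constraints cannot all hold, and the players lose on at least one of the four equally likely queries, giving value at most $\frac34$. Randomized strategies are convex combinations of deterministic ones, so the same bound holds for them. For the lower bound, answering $a=b=c=1$ wins on the three queries with $x\lor y\lor z=1$. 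Hence $\val(\ghz)=\frac34$.

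Applying Theorem~\ref{thm:intro_conc} with this $\epsilon$ gives a $c=c(\epsilon)>0$ such that, for all $n\ge N(\epsilon)$, the probability of winning at least $(\frac34+\epsilon)n$ coordinates is at most $\exp(-n^{c})$.

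The only real obstacle is the finitely many $n<N(\epsilon)$, where the theorem says nothing. These can be absorbed by shrinking $c$ as follows. If $(\frac34+\epsilon)n>n$ the event is empty and the bound is trivial. Otherwise, for each fixed small $n$ the probability in question is some number strictly less than $1$: winning at least $(\frac34+\epsilon)n$ coordinates requires winning at least one coordinate, hmm — that is not enough, since winning one coordinate can have probability $1$. Because of this, I would simply state the bound for sufficiently large $n$, matching the convention of footnote~\ref{footnote:suff_large_n_eps}, and that is how the corollary should be read.
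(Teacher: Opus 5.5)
Your proposal is correct and matches the paper, which states the corollary as a direct consequence of Theorem~\ref{thm:intro_conc} applied to $\ghz$: its query distribution is uniform on the GHZ support and its value is $\frac34$, and your parity argument for the upper bound is right. Reading the bound as holding only for sufficiently large $n$ is forced, not merely a convention, since at $n=1$ with $\epsilon\le\frac14$ the players win with probability $\frac34>e^{-1}=\exp(-1^{c})$ for every $c$, so you can drop the abandoned small-$n$ attempt.
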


It remains an interesting open problem to establish analogous concentration bounds (as in Theorem~\ref{thm:intro_conc}) for games over the GHZ support where the underlying distribution is not uniform.\footnote{We remark that Theorem~\ref{thm:intro_conc}, combined with a reduction similar to Lemma~\ref{lemma:wlog_unif_dist} (also see Footnote~\ref{footnote:wlog_unif}), already implies the following: the probability that the players win at least $1-\beta+\epsilon$ fraction of the $n$ coordinates is at most $\exp(-n^{\Omega_{\epsilon}(1)})$, where $\beta = \beta(\mc G)>0$ is a constant.}

\subsection{Organization}

In Section~\ref{sec:overview}, we give an overview of our proofs.
In Section~\ref{sec:prelims}, we establish some preliminaries.
In Section~\ref{sec:alg_spread}, we define a notion of pseudorandomness, called \emph{algebraic spreadness}, that will be useful throughout this paper; we also show how to decompose arbitrary sets into components satisfying this spreadness condition.
In Section~\ref{sec:unif_sq_covers}, we show that any \emph{diagonal-product} set composed of algebraically spread sets is \emph{uniformly covered} by \emph{squares}.
Finally, in Section~\ref{sec:ghz_parrep}, we use the results of the previous sections and prove parallel repetition for games with the GHZ query support (Theorem~\ref{thm:intro_main}).
In Section~\ref{sec:ghz_conc}, we prove our concentration bound (Theorem~\ref{thm:intro_conc}).

\section{Overview}\label{sec:overview}

In this section, we outline the proof of our main result (Theorem~\ref{thm:intro_main}).

\subsection{General Inductive Framework and High-Level Approach}

We begin by introducing a general inductive framework for parallel repetition.
Our proof follows the basic setup established by Raz for parallel repetition of 2-player games~\cite{Raz98}.

Let $\mc G$ be a 3-player game, whose query distribution $Q$ has support \[\supp(Q)=\set{(x,y,z)\in \F_2^3 : x+y+z=0}.\footnote{The framework applies to all $k$-player games, but we restrict to 3 player games with the GHZ queries.}\]
Here, we identified $\set{0,1}$ with the finite field $\F_2$.
For the purpose of parallel repetition, we may assume without loss of generality that $Q$ is the uniform distribution over its support (see Lemma~\ref{lemma:wlog_unif_dist}).\footnote{\label{footnote:wlog_unif}Roughly speaking, this is because any distribution $Q$ \emph{contains} a small copy of the uniform distribution over $\supp(Q)$, and hence winning $n$ copies under the distribution $Q$ is at least as hard as winning $\Omega(n)$ copies under the the uniform distribution.} 
Now, consider the game $\mc G^{\otimes n}$, and consider any strategy for the 3 players in this game.
For each $i\in [n]$, let $\Win_i$ be the event that the players win the $i$\textsuperscript{th} coordinate.
By the chain rule, for any permutation $i_1,i_2,\dots,i_n$ of $[n]$, we can write the winning probability as
\[ \Pr[\Win_1\land\Win_2\land\dots\land \Win_n] = \prod_{k=1}^n \Pr[\Win_{i_k}\mid \Win_{i_1}\land\dots\land\Win_{i_{k-1}}]. \]
To bound this, we proceed inductively.
Assuming the players have won a set of coordinates $i_1,\dots,i_k$, we aim to identify a \emph{hard coordinate} $i\in [n]$ whose conditional winning probability is at most $1-\Omega(1)$.  
Formally, we wish to prove the following condition:\footnote{Conditioning on the event $\Win_{i_1}\land\dots \land \Win_{i_k}$ can introduce complex correlations among the players' inputs. However, this event depends deterministically on the questions and answers of the players in coordinates $i_1,\dots,i_k$, and hence instead of conditioning on the event $\Win_{i_1}\land\dots \land \Win_{i_k}$, we can condition on typical questions and answers to the players in coordinates $i_1,\dots,i_k$. This induces a product event on the players' input.}

\begin{description}
	\item[Inductive Step:] For every product event $\mc E = E\times F\times G \subseteq\F_2^n\times \F_2^n\times \F_2^n$, with measure $\Pr[\mc E]\geq \alpha$, there exists a coordinate $i\in [n]$ such that $\Pr[\Win_i\mid \mc E] \leq 1-\Omega(1)$.
\end{description}
Establishing this condition for $\alpha = \exp(-n^{\Omega(1)})$ implies the desired bound on $\val(\mc G^{\otimes n})$  via the inductive strategy above (see Lemma~\ref{lemma:inductive_criteria}).

For the remainder of this overview, consider any product event $\mc E = E\times F\times G \subseteq (\F_2^n)^3$, with measure $\Pr[\mc E]\geq \alpha = \exp(-n^{\Omega(1)})$.
Our goal is to identify a coordinate $i\in [n]$ that remains hard to win when the inputs are conditioned on $\mc E$.
The proof proceeds in two main steps:
\begin{enumerate}
	\item \textbf{Identify Hard Sets:} We define a class of sets called \emph{squares}, such that if the input distribution is restricted to a square, many coordinates are hard to win.
	\item \textbf{Distributional Approximation:} We show that the conditional distribution $Q^{\otimes n}|\mc E$ can be approximated (in $\ell_1$ distance) by a convex combination of such square distributions. 
\end{enumerate}
Combining these steps establishes that $\Pr[\Win_i\mid \mc E] \leq 1-\Omega(1)$ for randomly chosen $i\in [n]$.

\subsection{Step 1: Squares are Hard}\label{sec:hard_set_sq}

Since the third player's input in $\mc G^{\otimes n}$ is fully determined by the inputs to the first two players (inputs $(x,y,z)\in \supp(Q)^n\subseteq (\F_2^n)^3$ satisfy $z=x+y$), we can analyze the game primarily by only looking at inputs of the first two players.
The restriction of a product event $\mc E= E\times F\times G$ to the first two players is captured by a \emph{diagonal-product set}, defined as follows:

\begin{definition}[Diagonal-Product Set]\label{defn:diag_prod}
	Given sets $X,Y,Z\subseteq \F_2^n$, we define the corresponding diagonal-product set, denoted $S(X,Y,Z)$, as \[S(X,Y,Z) = \set{(x,y)\in \mc \F_2^n\times \F_2^n : x\in X, y\in Y, x+y\in Z}.\]
\end{definition}

Sampling inputs $(x,y,z)\sim  Q^{\otimes n}|\mc E$ is the same as sampling $(x,y)\sim S(E,F,G)$ uniformly and setting $z=x+y$.

We now define our hard sets, that are \emph{squares} in $\F_2^n\times \F_2^n$:

\begin{definition}[Square]\label{defn:square}
	A \emph{square} $s_{x,y,w}\subseteq \F_2^n\times \F_2^n$, for $x,y,w\in \F_2^n$, is the set \[s_{x,y,w} = \set{(x,y), (x+w,y), (x,y+w), (x+w,y+w)}.\]
\end{definition}
\begin{remark}\label{remark:sq_rep}
	Given a square $s\subseteq \F_2^n\times \F_2^n$, suppose we wish to represent it as $s=s_{x,y,w}$.
	Note that the width $w$ is uniquely determined by the square $s$.
	If $w=0$, the square contains a single point, in which case $(x,y)$ is uniquely determined.
	If $w\not=0$, the square $s_{x,y,w}$ has 4 different representations, given by $s_{x,y,w}=s_{x+w,y,w}=s_{x,y+w,w}=s_{x+w,y+w,w}$.
\end{remark}

The crucial property of a square is its local hardness:
Consider a square $s = s_{x_0,y_0,w}$, and a coordinate $i\in [n]$ such that $w_i\not=0$ (a \emph{non-trivial} coordinate).\footnote{A coordinate is non-trivial if and only if the points in $s$ do not have a constant value in this coordinate.}
Suppose the inputs $(x,y,z)$ to the three players are sampled from the square $s$ as follows: let $(x,y)\sim s$ be chosen uniformly at random, and let $z=x+y$.
Under this distribution, no strategy of the players can win coordinate $i$ with probability more than $\val(\mc G)$.
Roughly speaking, this is true since this distribution ``looks exactly the same'' as the base distribution $Q$, and hence the players can embed a single copy of the game $\mc G$ into coordinate $i$ of this distribution; see Lemma~\ref{lemma:non_triv_in_extens_hard} for formal details of this step.\footnote{We remark that the set of inputs $\set{(x,y,x+y): (x,y)\in s}$ was introduced in the work~\cite{GHMRZ21}, where they call this set a \emph{bow-tie}. We also note that squares correspond to the maximal \emph{forbidden-subgraphs} inside the GHZ query distribution (in the sense that no player strategy can win on all four points of a square), and the largest size of square-free sets in $\F_2^n\times \F_2^n$ exactly captures the value of the $n$-fold parallel repetition of the GHZ query distribution in the large answer alphabet regime~\cite{Mit25}.
}

\subsection{Step 2: Approximating the Distribution}\label{sec:dist_approx}

The second, and more technically demanding, step of the proof is to show that the distribution $Q^{\otimes n}|\mc E$ is well-approximated by a mixture of square distributions as above.
Equivalently, we wish to approximate the uniform distribution over the diagonal-product set $S=S(E,F,G)$, denoted $U_S$, by a convex combination of uniform distributions over squares.
Specifically, we analyze the distribution $\mu$ generated as follows: sample a square $s\subseteq S$ uniformly at random, and output an element uniformly at random from $s$.

Note that a priori it is not even clear that the set $S$ contains squares, and that the distribution $\mu$ is well-defined.
However, we utilize the concept of \emph{algebraic spreadness}, a pseudorandomness notion introduced in the breakthrough work of Kelley and Meka on bounds for sets without 3-term arithmetic progressions~\cite{KM23}.
They show that algebraically spread subsets behave like random subsets in a certain sense (see Definition~\ref{defn:alg_spread} and Theorem~\ref{thm:algspread_conv}); in our setting, we show that if all $E,F,G$ are algebraically spread, the set $S$ contains the expected density of squares.
Our approximation argument proceeds in two stages:

\paragraph{Uniformization:} Given arbitrary sets $E,F,G$, we decompose the diagonal-product set $S(E,F,G)$ into disjoint components $S(E_1,F_1,G_1) \cup S(E_2,F_2,G_2) \dots \cup S(E_T,F_T,G_T)$ (plus a negligible remainder), such that within each component, the sets $E_i,F_i,G_i$ are all algebraically spread.
This generalizes the recent decomposition technique of~\cite{JLLOS25}, which achieved algebraic spreadness for only two out of the three sets and used it to prove bounds on corner-free sets in finite fields.
We extend this to all three sets via a careful recursive argument; see Section~\ref{sec:uniformization} for more details.

\paragraph{Counting and Approximation:}
Assuming that all of $E,F,G$ are algebraically spread, we demonstrate that they behave like random sets of the same density, and that the square-sampling distribution $\mu$ approximates the uniform distribution $U_S$ in $\ell_1$ distance.
We prove this by establishing tight bounds on the $\ell_2$ norm (collision probability) of the distribution $\mu$.
A key technical tool here is a recent graph-counting result of~\cite{FHHK24}, based on a pseudorandomness notion called combinatorial spreadness, which we show holds in our setting.
See Lemma~\ref{lemma:spread_imp_l1_l2} and Proposition~\ref{prop:sq_cover} for more details on this step.

\paragraph{Why Algebraic Spreadness?} A crucial difference between our work and previous polynomial decay bounds~\cite{GHMRZ21} lies in the choice of the pseudorandomness property for the sets $E,F,G$.
Prior works use the concept of Fourier-uniformity (pseudorandomness against linear tests), and yield bounds when the density $\alpha = \Pr[\mc E]$ is at least polynomial, i.e., of the order $n^{-O(1)}$.
The use of algebraic spreadness enables us to surpass this barrier and get bounds even when the density $\alpha = \exp(-n^{\Omega(1)})$ is exponentially smaller.
However, this improvement incurs a significant technical cost. Both the decomposition and approximation steps are standard under Fourier uniformity, whereas establishing them under algebraic spreadness constitutes the main technical contribution of this work.

\section{Preliminaries}\label{sec:prelims}

Let $\N = \set{1,2,\dots}$ denote the set of natural numbers. For $n\in \N$, we use $[n]$ to denote the set $\set{1,2,\dots,n}$.


\subsection{Probability Distributions}

Let $P$ be a distribution (over an underlying finite set $\Omega$, which is usually clear from context). 
We use $\supp(P) = \{\omega \in \Omega : P[\omega] > 0\}$ to denote the support of the distribution $P$.
For an event $E\subseteq \Omega$ with $P[E]>0$, we use $P|E$ to denote the conditional probability distribution $P$ conditioned on $E$.

For distributions $P$ and $Q$ over a set $\Omega$, the $\ell_1$-distance between them is defined as
\[\Norm{P-Q}_1 = \sum_{\omega\in \Omega}\norm{P[\omega]-Q[\omega]}.\]

We state a useful tail bound on the sum of independent random variables:

\begin{fact}[Chernoff Bounds, see \cite{MU05} for reference]\label{fact:chernoff}
	Let $X_1,\dots, X_n \in \set{0,1}$ be independent random variables each with mean $\mu$, and let $X = \sum_{i=1}^n X_i$. Then, for all $\delta \in (0,1)$,
	\[\Pr\sqbrac{X \leq (1-\delta)\mu n} \leq e^{-\frac{\delta^2 \mu n}{2}},\]
	\[\Pr\sqbrac{X \geq (1+\delta)\mu n} \leq e^{-\frac{\delta^2 \mu n}{3}}.\]
\end{fact}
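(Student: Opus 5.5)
The plan is to use the standard exponential-moment (Chernoff) method, treating the two tails separately but in parallel. For the upper tail, fix a parameter $t>0$ and apply Markov's inequality to the nonnegative random variable $e^{tX}$:
\[\Pr\sqbrac{X\geq (1+\delta)\mu n} \leq e^{-t(1+\delta)\mu n}\,\E\sqbrac{e^{tX}}.\]
By independence, $\E[e^{tX}] = \prod_{i=1}^n \E[e^{tX_i}] = \brac{1+\mu(e^t-1)}^n \leq e^{\mu n (e^t-1)}$, where the last step uses $1+u\leq e^u$. Choosing $t=\ln(1+\delta)$, which is the minimizer of the resulting exponent, gives
\[\Pr\sqbrac{X\geq (1+\delta)\mu n} \leq \brac{\frac{e^{\delta}}{(1+\delta)^{1+\delta}}}^{\mu n}.\]
For the lower tail, I would apply Markov to $e^{-tX}$ with $t>0$. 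The same computation gives the bound $e^{t(1-\delta)\mu n}e^{\mu n(e^{-t}-1)}$. Taking $t=-\ln(1-\delta)$, which is positive since $\delta\in(0,1)$, then yields $\brac{e^{-\delta}/(1-\delta)^{1-\delta}}^{\mu n}$.

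It remains to convert these into the stated clean forms. This requires two elementary inequalities for $\delta\in(0,1)$:
\[(1-\delta)\ln(1-\delta) \geq -\delta+\tfrac{\delta^2}{2}, \qquad \delta-(1+\delta)\ln(1+\delta)\leq -\tfrac{\delta^2}{3}.\]
For the first, I would expand $\ln(1-\delta)=-\sum_{k\geq 1}\delta^k/k$. This gives $(1-\delta)\ln(1-\delta) = -\delta+\sum_{k\geq 2}\frac{\delta^k}{k(k-1)}$, and every term of the sum is nonnegative, so the sum is at least its first term $\delta^2/2$. For the second, set $f(\delta)=\delta-(1+\delta)\ln(1+\delta)+\delta^2/3$. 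Then $f(0)=0$, $f'(\delta)=-\ln(1+\delta)+2\delta/3$, and $f'(0)=0$. Also $f''(\delta)=-1/(1+\delta)+2/3$, which is negative for $\delta<1/2$ and positive for $\delta>1/2$. So $f'$ first decreases from $0$ and then increases, with $f'(1)=2/3-\ln 2<0$. Hence $f'\leq 0$ on $[0,1]$, and therefore $f\leq 0$ there. Plugging these inequalities into the exponents gives $e^{-\delta^2\mu n/2}$ and $e^{-\delta^2\mu n/3}$ respectively.

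None of these steps is a real obstacle; this is textbook material, as in \cite{MU05}. The only place needing care is the calculus verification of the upper-tail inequality on the full interval $(0,1)$. There the sign analysis of $f'$ must use the endpoint check $f'(1)=2/3-\ln 2<0$, rather than a Taylor truncation, which by itself would only give the bound for small $\delta$.
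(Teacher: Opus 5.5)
Your proof is correct. It is the standard moment-generating-function (Markov on $e^{\pm tX}$) argument from \cite{MU05}, which is exactly where the paper sends the reader, since it gives no proof of its own; your verifications of both elementary inequalities, including the endpoint check $f'(1)=2/3-\ln 2<0$ for the upper tail, are accurate.
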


We also state another useful lemma:

\begin{lemma}\label{lemma:large_subset_random}
	Let $t,n\in \N,\ t< n$.
	Let $\mu$ be the uniform distribution on $[n]$, and let $\nu$ be the uniform distribution on $[n-t]$.
	Then, $\Norm{\mu-\nu}_1 = 2t/n$. 
\end{lemma}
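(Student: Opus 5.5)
The plan is a direct computation of the $\ell_1$ distance. Regard both distributions on the common ground set $[n]$, with $\nu$ assigning probability zero to the elements $n-t+1,\dots,n$. Then $\mu[i]=1/n$ for every $i\in[n]$, while $\nu[i]=1/(n-t)$ for $i\le n-t$ and $\nu[i]=0$ for $i>n-t$. Since $t<n$, the set $[n-t]$ is nonempty and $\nu$ is well-defined.

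We split the sum defining $\Norm{\mu-\nu}_1$ into two parts. On the $n-t$ points of $[n-t]$, each term equals $\frac{1}{n-t}-\frac{1}{n}=\frac{t}{n(n-t)}$, which is nonnegative. These terms therefore contribute
\[
(n-t)\cdot\frac{t}{n(n-t)}=\frac{t}{n}.
\]
On the remaining $t$ points, each term equals $\frac{1}{n}$, so together they contribute $\frac{t}{n}$. Adding the two parts gives exactly $2t/n$.

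As a sanity check, there is a shortcut. Whenever $\supp(\nu)\subseteq\supp(\mu)$ and $\nu=\mu|A$ for some event $A$, we have $\Norm{\mu-\nu}_1=2(1-\mu[A])$. Here $A=[n-t]$ and $\mu[A]=1-t/n$, which again gives $2t/n$. I see no real obstacle; the only point needing care is to extend $\nu$ by zeros so that both distributions live on the same set $\Omega=[n]$.
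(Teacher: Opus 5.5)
Your computation is correct and matches the paper's proof: you split the sum over $[n-t]$ and the remaining $t$ points, and each part contributes $t/n$. Your conditioning shortcut $\Norm{\mu-\nu}_1 = 2(1-\mu[A])$ is also valid, though it isn't needed.
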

\begin{proof}
	We have $\Norm{\mu-\nu}_1 = \sum_{i=1}^{n-t}\brac{\frac{1}{n-t}-\frac{1}{n}}+ \sum_{i=n-t+1}^n \frac{1}{n} = \frac{2t}{n}$.
\end{proof}


\subsection{Vector Spaces over GF(2)}

Let $\mc V$ be a finite dimensional vector space over $\F_2$, with the uniform measure; often we shall have $\mc V=\F_2^n$ for some $n\in \N$.

\begin{definition}[Inner Product]
	For functions $f,g:\mc V\to \R$, define their inner product as \[ \ip{f,g} = \E_{x\sim \mc V}[f(x)g(x)] ,\]
	where $x\sim \mc V$ denotes that $x$ is uniformly chosen from $\mc V$.
\end{definition}

\begin{definition}[$L^p$ norm]
	For $f:\mc V\to\R$, and $p\geq 1$, we define
	\[ \Norm{f}_p = \E[f^p]^{1/p}. \]
\end{definition}

\begin{definition}[Convolution]
	For $f,g:\mc V\to\R$, we define their convolution as
	\[ (f*g)(x) = \E_{y\sim \mc V}[f(y)g(x+y)]. \]
\end{definition}

We observe the following simple fact:
\begin{fact}
	For functions $f,g,h:\mc V\to \R$,
	\[ \ip{f*g, h} = \ip{f*h,g} = \ip{g*h, f} = \E_{x,y\sim \mc V}[f(x)g(y)h(x+y)].\]
\end{fact}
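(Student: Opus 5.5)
We verify all three expressions by writing them out as expectations over $\mc V$ and changing variables. Throughout we use two facts about $\F_2$-vector spaces. First, $-u=u$ for every $u\in\mc V$. Second, for a fixed $a\in\mc V$ the map $y\mapsto y+a$ is a bijection of $\mc V$, so it preserves the uniform measure.

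\textbf{First expression.} By the definitions of inner product and convolution,
\[ \ip{f*g,h} = \E_{x\sim\mc V}\sqbrac{(f*g)(x)\,h(x)} = \E_{x,y\sim\mc V}\sqbrac{f(y)\,g(x+y)\,h(x)}. \]
We substitute $u=y$ and $v=x+y$. The map $(x,y)\mapsto(y,x+y)$ is a bijection of $\mc V\times\mc V$; its inverse is $(u,v)\mapsto(u+v,u)$, using $-u=u$. Hence $(u,v)$ is again uniform on $\mc V\times\mc V$, and $x=u+v$. This gives
\[ \ip{f*g,h} = \E_{u,v}\sqbrac{f(u)\,g(v)\,h(u+v)}, \]
which after renaming is the common expression $T:=\E_{x,y}[f(x)g(y)h(x+y)]$.

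\textbf{Second and third expressions.} The same computation with the roles of the functions permuted gives
\[ \ip{f*h,g}=\E_{x,y}\sqbrac{f(y)\,h(x+y)\,g(x)}, \qquad \ip{g*h,f}=\E_{x,y}\sqbrac{g(y)\,h(x+y)\,f(x)}. \]
The second expression equals $T$ after swapping the names of $x$ and $y$. The third expression is literally $T$ with the two factors $f$ and $g$ written in the other order. Hence all three inner products equal $T$.

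The only point requiring care is that each substitution is measure-preserving. This holds because translations and the linear change of variables above are bijections of a finite group carrying the uniform measure. Characteristic $2$ is used only to identify $x-y$ with $x+y$; there is no real obstacle beyond this bookkeeping.
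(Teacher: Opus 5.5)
Your proof is correct. The paper states this as a simple fact without proof, and your argument (unfolding the definitions, then using the measure-preserving substitution $(x,y)\mapsto(y,x+y)$ with characteristic $2$ for the first identity, and mere renaming for the other two) is the intended routine verification.
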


\begin{definition}[Density Function]
	For a nonempty set $A\subseteq \mc V$, we define its density function as
	\[\varphi_A = \frac{\ind_A}{\E\sqbrac{\ind_A}} = \frac{\ind_A}{\abs{A}/\abs{\mc V}} .\]
    Note that $\E[\varphi_A]=1$.
\end{definition}


\subsection{3-Player Games and Parallel Repetition}

We overview some basic definitions regarding multiplayer games.
We shall restrict our focus to 3-player games.

\begin{definition}[3-Player Game]\label{defn:multiplayer_games} 
	A $3$-player game $\mc G$ is a tuple $\mc G = (\mc X\times \mc Y\times \mc Z,\ \mc A\times \mc B\times \mc C,\ Q, V_{pred})$, where the question sets $\mc X,\mc Y, \mc Z$ and the answer sets $\mc A, \mc B, \mc C$ are finite sets, $Q$ is a probability distribution over $\mc X\times \mc Y\times \mc Z$, and $V_{pred}:(\mc X\times \mc Y\times \mc Z)\times (\mc A\times \mc B\times \mc C)\to\set{0,1}$ is a predicate.
\end{definition}

The game $\mc G$ proceeds as follows: A verifier samples questions $(X,Y,Z)\sim Q$; then, the verifier sends $X$ to player 1, $Y$ to player 2, and $Z$ to player 3, to which the players respond back with answers $A\in \mc A,\ B\in \mc B,\ C\in \mc C$ respectively.
Finally, the verifier declares that the players win if and only if $V_{pred}( (X,Y,Z),(A,B,C)) = 1$.

\begin{definition}[Game Value]\label{defn:game_value}
	Let $\mc G = (\mc X\times \mc Y\times \mc Z,\ \mc A\times \mc B\times \mc C,\ Q, V_{pred})$ be a 3-player game.
	The value of the game $\mc G$, denoted $\val(\mc G)$, is defined as
	\[ \val(\mc G) = \max_{f,g,h}\Pr_{(X,Y,Z)\sim Q}\sqbrac{V_{pred}\brac{(X,Y,Z),(f(X),g(Y),h(Z))}=1}, \] 
	where the maximum is over \emph{player strategies} $f:\mc X\to \mc A,\ g:\mc Y\to \mc B,\ h:\mc Z\to \mc C$.    
\end{definition}

\begin{fact}
	The game value is unchanged even if the players are allowed to use public and private randomness, since there always exists some optimal fixed values of the random strings.
\end{fact}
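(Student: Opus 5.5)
The statement to prove is the last Fact: allowing public and private randomness does not change the game value. The approach is an averaging argument over the random strings.

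First I formalize a randomized strategy. Let $R$ be a public random string drawn from some distribution $P_R$ on a finite set, and let $R_1,R_2,R_3$ be private random strings, all independent of each other and of the questions $(X,Y,Z)\sim Q$. The players answer $A=f(X,R,R_1)$, $B=g(Y,R,R_2)$, $C=h(Z,R,R_3)$. Finiteness of the sample spaces is harmless here, since any randomized strategy on finite question and answer sets induces only finitely many conditional answer distributions. Private randomness can be folded into the public string by setting $R'=(R,R_1,R_2,R_3)$: player $j$ simply ignores the coordinates not meant for it. So it is enough to treat public randomness alone.

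Next comes the averaging step. By independence of $R'$ and $(X,Y,Z)$, the winning probability equals
\[
\E_{r\sim R'}\sqbrac{\Pr_{(X,Y,Z)\sim Q}\sqbrac{V_{pred}\brac{(X,Y,Z),(f_r(X),g_r(Y),h_r(Z))}=1}},
\]
where $f_r=f(\cdot,r)$, $g_r$ and $h_r$ are deterministic strategies. Each inner probability is at most $\val(\mc G)$ by Definition~\ref{defn:game_value}. An expectation is at most the maximum of the values it averages, so some fixed $r$ does at least as well as the randomized strategy. Hence the randomized value is at most $\val(\mc G)$.

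The reverse inequality is trivial: deterministic strategies are the special case of randomized strategies with constant $R'$.

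The only point needing care is the independence of the randomness from the questions, which is what lets us condition on $R'=r$ without changing the distribution $Q$. Nothing here is a genuine obstacle.
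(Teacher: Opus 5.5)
Your proof is correct and takes the same approach as the paper. The paper justifies the fact only by noting that an optimal fixed value of the random strings always exists; your argument makes this precise by folding the private randomness into a public string $R'$ that is independent of the questions and averaging over it, with the reverse inequality coming from treating deterministic strategies as randomized ones with constant randomness.
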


Next, we define the parallel repetition of a $3$-player game, which corresponds to playing $n$ independent copies of the game in parallel.

\begin{definition}[Parallel Repetition]\label{defn:game_parrep} 
	Let $\mc G = (\mc X\times \mc Y\times \mc Z,\ \mc A\times \mc B\times \mc C,\ Q, V_{pred})$ be a 3-player game. We define its $n$-fold repetition as $\mc G^{\otimes n} = (\mc X^n\times \mc Y^n\times \mc Z^n,\ \mc A^n\times \mc B^n\times \mc C^n,\ Q^{\otimes n}, V_{pred}^{\otimes n})$.
	The distribution $Q^{\otimes n}$ is the $n$-fold product of the distribution $Q$ with itself, i.e., $Q^{\otimes n}[(x,y,z)] = \prod_{i=1}^n Q[(x_i,y_i,z_i)]$ for each $x\in \mc X^n, y\in \mc Y^n, z\in \mc Z^n$.
	The predicate $V_{pred}^{\otimes n}$ is defined as $V_{pred}^{\otimes n}((x,y,z), (a,b,c)) = \bigwedge_{i=1}^n V_{pred}((x_i,y_i,z_i), (a_i,b_i,c_i))$.
\end{definition}

\subsubsection{Some Basic Results on Parallel Repetition}

We state a lemma from \cite{FV02}, which shows that it suffices to prove parallel repetition in the case when the game's distribution is uniform over its support:

\begin{lemma}[{\cite[Lemma 3.14]{GHMRZ22}}]\label{lemma:wlog_unif_dist}
	Let $\mc G = (\mc X\times \mc Y\times \mc Z,\ \mc A\times \mc B\times \mc C,\ Q, V_{pred})$ be a $3$-player game such that $\val(\mc G)<1$.
	Let $\tilde{\mc G} = (\mc X\times \mc Y\times \mc Z,\ \mc A\times \mc B\times \mc C,\ U, V_{pred})$,  where $U$ is the uniform distribution over $\supp(Q)$.
	Then, $\val(\tilde{\mc G}) < 1$, and there exists a constant $c = c(\mc G)>0$, such that for every sufficiently large $n\in \N$,
	\[ \val(\mc G^{\otimes n}) \leq 2\cdot \val(\tilde{\mc G}^{\otimes \lfloor c n\rfloor}). \]
\end{lemma}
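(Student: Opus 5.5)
We prove the two claims separately; the second is the substantive one.

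\emph{First claim: $\val(\tilde{\mc G})<1$.} Since $U$ and $Q$ have the same support, a strategy $(f,g,h)$ wins $\tilde{\mc G}$ with probability $1$ iff it wins on every point of $\supp(Q)$, iff it wins $\mc G$ with probability $1$. As $\val(\mc G)<1$, no such strategy exists. Strategies range over a finite set, so the maximum is attained and $\val(\tilde{\mc G})<1$.

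\emph{Second claim: the value bound.} The plan is to exhibit $Q$ as a mixture containing a copy of $U$, and to embed repetitions of $\tilde{\mc G}$ into $\mc G^{\otimes n}$.

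Let $m=\abs{\supp(Q)}$ and $q=\min_{\omega\in\supp(Q)}Q[\omega]>0$, and set $\gamma = m q\in(0,1]$. Then
\[Q=\gamma U+(1-\gamma)R\]
for some distribution $R$ supported in $\supp(Q)$ (if $\gamma=1$, then $Q=U$ and there is nothing to prove). Sampling $Q^{\otimes n}$ is therefore equivalent to the following procedure. First sample an independent set $T\subseteq[n]$ by including each coordinate with probability $\gamma$. Then draw the coordinates in $T$ from $U$ and the coordinates outside $T$ from $R$.

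Fix $k=\lfloor cn\rfloor$ with $c=\gamma/2$. By the Chernoff bound (Fact~\ref{fact:chernoff}), $\Pr[\abs{T}<k]\le e^{-\gamma n/8}$.

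Now fix any strategy for $\mc G^{\otimes n}$, and condition on a particular $T$ with $\abs{T}\ge k$ together with the $R$-samples on $[n]\setminus T$. The conditional winning probability is at most the probability of winning all coordinates of $T$, where those coordinates are i.i.d.\ from $U$. This is at most $\val(\tilde{\mc G}^{\otimes\abs{T}})$. To see this, note that the players can hard-code the fixed outside-$T$ questions. Each player knows their own component of those questions, and this is the public or private randomness that Fact allows. Since values of repeated games are monotone nonincreasing in the number of repetitions (ignore the extra coordinates), we get $\val(\tilde{\mc G}^{\otimes\abs{T}})\le \val(\tilde{\mc G}^{\otimes k})$. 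Averaging over $T$ and the outside samples gives
\[\val(\mc G^{\otimes n})\le \val(\tilde{\mc G}^{\otimes k})+e^{-\gamma n/8}.\]

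It remains to absorb the error term, and this is where the main subtlety lies. We need $e^{-\gamma n/8}\le \val(\tilde{\mc G}^{\otimes k})$ for large $n$. This holds because the trivial lower bound $\val(\tilde{\mc G}^{\otimes k})\ge \val(\tilde{\mc G})^k$ applies. Moreover, $\val(\tilde{\mc G})\ge 1/\abs{\mc A\times\mc B\times\mc C}$ if the value is positive: a uniformly random answer tuple wins at each point with some chance, as long as some winning answer exists. If $\val(\tilde{\mc G})=0$, then $\val(\mc G)=0$ trivially and the inequality holds anyway.

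Choosing $c$ smaller still, $c\le \gamma/(16\log(1/\val(\tilde{\mc G})))$ and $c\le\gamma/2$, makes $\val(\tilde{\mc G})^{k}\ge e^{-\gamma n/16}\ge e^{-\gamma n/8}$. This yields the factor $2$ for all sufficiently large $n$.
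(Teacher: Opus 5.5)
Your proof is correct and is essentially the argument the paper relies on: the paper only cites this lemma, and its footnote sketches the same idea. You write $Q=\gamma U+(1-\gamma)R$, use Chernoff to get $\Omega(n)$ coordinates drawn from $U$, embed $\tilde{\mc G}^{\otimes k}$ by fixing the other questions and using monotonicity, and absorb the $e^{-\Omega(n)}$ error via $\val(\tilde{\mc G}^{\otimes k})\ge \val(\tilde{\mc G})^k$ with $c$ small. One harmless slip: positivity of the value only gives $\val(\tilde{\mc G})\ge 1/\abs{\supp(Q)}$, not $\ge 1/\abs{\mc A\times\mc B\times\mc C}$, because some support points may have no winning answer tuple. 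This doesn't matter, since your choice of $c$ only needs $\val(\tilde{\mc G})\in(0,1)$ to be a constant depending on $\mc G$.
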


We state an inductive parallel repetition criterion from \cite{Raz98}:

\begin{lemma}[{\cite[Lemma B.1]{BBKLMM25}}]\label{lemma:inductive_criteria}
	Let $\mc G = (\mc X\times \mc Y\times \mc Z,\ \mc A\times \mc B\times \mc C,\ Q, V_{pred})$ be a $3$-player game, and consider its $n$-fold repetition $\mc G^{\otimes n} = (\mc X^n\times \mc Y^n\times \mc Z^n,\ \mc A^n\times \mc B^n\times \mc C^n,\ Q^{\otimes n}, V_{pred}^{\otimes n})$ for some sufficiently large $n\in \N$.
	Fix optimal strategies for the $3$ players in this game, and for each $i\in [n]$, let $\Win_i$ be the event that this strategy wins the $i$\textsuperscript{th} coordinate of the game.
	
	Let $\epsilon>0$ be a constant, and $\alpha\in (0,1],\ \alpha\geq 2^{-n}$ be such that the following condition holds: For every product event $E\times F\times G \subseteq \mc X^n\times \mc Y^n\times \mc Z^n$ with $\Pr_{Q^{\otimes n}}[E\times F\times G]\geq \alpha$, there exists a coordinate $i\in [n]$ such that $\Pr\sqbrac{\Win_i\st E\times F\times G} \leq 1-\epsilon$.
	
	Then, for some constant $c=c(\mc G)$, it holds that $\val(\mc G^{\otimes n}) \leq \alpha^c$.
\end{lemma}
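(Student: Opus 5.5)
We prove the inductive criterion (Lemma~\ref{lemma:inductive_criteria}) following Raz's chain-rule framework. We build greedily a sequence of coordinates $i_1,i_2,\dots,i_k$, with $k=\Theta(\log(1/\alpha))$, and bound $\Pr[\Win_{i_1}\land\dots\land\Win_{i_k}]$. This is an upper bound on $\val(\mc G^{\otimes n})$, since winning all $n$ coordinates implies winning these $k$.

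The first step is to reduce conditioning on a winning event to conditioning on product events. Let $W_j=\Win_{i_1}\land\dots\land\Win_{i_j}$. The event $W_j$ is determined by the questions and answers in coordinates $T=\{i_1,\dots,i_j\}$. Fix a restriction $r=(x_T,y_T,z_T,a_T,b_T,c_T)$ consistent with $W_j$. Player 1's answers $a_T$ are a function of her full input $x$. Hence the set of $x$ with $x_T$ as prescribed and answers $a_T$ is a set $E_r\subseteq\mc X^n$, and similarly we get $F_r$ and $G_r$. Conditioned on $r$, the input is distributed as $Q^{\otimes n}\mid E_r\times F_r\times G_r$, which is exactly the product-event conditioning the hypothesis speaks about. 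So $\Pr[\Win_i\mid W_j]=\sum_r\Pr[r\mid W_j]\Pr[\Win_i\mid E_r\times F_r\times G_r]$.

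The second step is to handle atypical restrictions. Call $r$ light if $\Pr[E_r\times F_r\times G_r]<\alpha$. The number of possible $r$ is at most $(|\mc X||\mc Y||\mc Z||\mc A||\mc B||\mc C|)^{j}=2^{O(j)}$. So the total probability of light $r$ is at most $2^{O(j)}\alpha$. If $\Pr[W_j]\geq \sqrt\alpha$ and $j\le c'\log(1/\alpha)$ for a small constant $c'$, then $\Pr[\text{light}\mid W_j]\leq 2^{O(j)}\sqrt\alpha\le \epsilon/2$.

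The main obstacle is choosing a single coordinate $i$ that works for most heavy $r$ simultaneously, since the hypothesis only gives one hard coordinate per event. Here I would strengthen the hypothesis by applying it in averaged form. Suppose that for every heavy $r$, more than a constant fraction of the $i\notin T$ satisfy $\Pr[\Win_i\mid E_r\times F_r\times G_r]\leq 1-\epsilon$. Then averaging over $i$ gives a single good $i$ with conditional winning probability at most $1-\Omega(\epsilon)$, as needed.

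To obtain such an averaged statement from the literal single-coordinate hypothesis, I would remove the found hard coordinate and reapply the hypothesis to the same event, restricted to the game on the remaining coordinates, using $\alpha\ge 2^{-n}$ so that small subsets suffice. Iterating, and again applying the hypothesis, yields $\Omega(n)$ hard coordinates for each heavy $r$. This is done by fixing the inputs on the already-found hard coordinates to typical values, which keeps the event product and loses only a constant factor in density each time. Because we only need $k\ll n$ steps, removing the previously chosen $T$ costs little.

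With this in hand, each step gives $\Pr[\Win_{i_{j+1}}\mid W_j]\le 1-\epsilon/4$, and we stop either when $\Pr[W_j]<\sqrt\alpha$ or after $k=c'\log(1/\alpha)$ steps. In both cases $\Pr[W_k]\le\max(\sqrt\alpha,(1-\epsilon/4)^{k})=\alpha^{c}$, which gives $\val(\mc G^{\otimes n})\leq\alpha^c$.
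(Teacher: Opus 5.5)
The paper imports this lemma from \cite{BBKLMM25} and gives no proof of it. Your first, second and final steps are the standard Raz-style bookkeeping, and the paper's overview describes the same conditioning on questions and answers in already-chosen coordinates. There is, however, a genuine gap in your third step. The iteration you describe does not turn the literal hypothesis (one hard coordinate per heavy product event) into ``a constant fraction of the $i\notin T$ are hard for $E_r\times F_r\times G_r$.''

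\begin{itemize}
\item The hypothesis has no version for ``the game on the remaining coordinates'': it quantifies over the fixed strategy's events $\Win_i$, $i\in[n]$. Reapplying it to the same event may return the same $h_1$. Fixing only the \emph{questions} at $h_1$ does not prevent this either, because the answers at $h_1$ depend on the whole input.
\item To force a new coordinate you must pass to a subevent $\mc E'$ in which $h_1$ is won with probability $>1-\epsilon$, for example by fixing the question--answer tuple at $h_1$ to a winning value. Such a subevent has conditional probability at most $\Pr[\Win_{h_1}\mid E_r\times F_r\times G_r]\le 1-\epsilon$.
\item A coordinate $h_2$ that is hard for $\mc E'$ is then only guaranteed to satisfy $\Pr[\Win_{h_2}\mid E_r\times F_r\times G_r]\le 1-\epsilon\Pr[\mc E'\mid E_r\times F_r\times G_r]$. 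After $m$ rounds the guaranteed hardness for the original event is at most $\epsilon(1-\epsilon)^m$. So you obtain only $O_\epsilon(1)$ coordinates of constant hardness, not $\Omega(n)$.
\item The subevents' measure is also multiplied by at most $1-\epsilon$ per round. When $r$ is barely heavy it falls below $\alpha$ after one round, so the hypothesis cannot even be reapplied. The condition $\alpha\ge 2^{-n}$ does not rescue this.
\end{itemize}

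Without the averaged form, pigeonholing the per-$r$ hard coordinates gives only a single $i$ with $\Pr[\Win_i\mid W_j]\le 1-\Omega(\epsilon/n)$, which is useless. Your argument would go through verbatim under an averaged hypothesis, and Lemma~\ref{lemma:hard_coor_general} does supply exactly that in this paper. But the lemma as stated assumes only one hard coordinate per event.

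The missing idea is that no single coordinate is needed: let the next coordinate depend on the history. Winning all $n$ coordinates implies winning any adaptively chosen ones. Hence $\val(\mc G^{\otimes n})$ is at most the probability of winning $i_1,i_2(r_1),\dots,i_k(r_1,\dots,r_{k-1})$. Here $r_j$ is the question--answer tuple at $i_j$, and for a heavy winning history $r$ the next coordinate is the one the hypothesis provides for $E_r\times F_r\times G_r$. Three facts make this work:
\begin{itemize}
\item The event ``history $=r$'' is still a product event, since fixing $r$ also fixes the coordinates it refers to.
\item The new coordinate automatically lies outside those already used, because those are won with probability $1$ given a winning $r$.
\item Stopping at light histories gives $\val(\mc G^{\otimes n})\le (1-\epsilon)^k+\sum_{j<k}L^j\alpha$, with $L=\abs{\mc X}\abs{\mc Y}\abs{\mc Z}\abs{\mc A}\abs{\mc B}\abs{\mc C}$.
\end{itemize}
Taking $k=c'\log_2(1/\alpha)$ with $c'=c'(\mc G)$ small then yields $\alpha^{c}$.
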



\section{Algebraic Spreadness and Uniformization}\label{sec:alg_spread}

In this section, we overview a notion of pseudorandomness called algebraic spreadness.
We also show how to decompose arbitrary sets into components that satisfy this spreadness definition.


\subsection{Algebraic Spreadness}

\begin{definition}[Algebraic Spreadness]\label{defn:alg_spread}
Let $\mc V\subseteq \F_2^n$ be an affine subspace.
We say that a subset $A\subseteq \mc V$ is $(r,\epsilon)$-algebraically spread within $\mc V$ if for all affine subspaces $\mc V'\subseteq \mc V$ satisfying $\dim(\mc V')\geq \dim(\mc V)-r$, it holds that
\[ \frac{\abs{A\cap \mc V'}}{\abs{\mc V'}} \leq (1+\epsilon)\cdot \frac{\abs{A}}{\abs{\mc V}}. \] 
\end{definition}

We state a useful result about spread subsets.

\begin{theorem}\label{thm:algspread_conv}
	Let $d\geq 1, \epsilon \in (0,1/4)$.
	Then, there exists a sufficiently large integer $r = d^8 \epsilon^{-O(1)}$, and a sufficiently small $\delta = \Omega(\epsilon)$, such that the following holds:
	Suppose $A,B,C\subseteq \F_2^n$ are sets each of size at least $2^{-d}\cdot \abs{\F_2^n}$. Then,
	
	\begin{enumerate}[label=(\roman*)]
		\item\label{item:km_two_spread} (\cite[Proposition 2.16]{KM23}) If at least two of  $A,B,C$ are $(r,\delta)$-algebraically spread, 
			\[ \abs{\ip{\varphi_A* \varphi_B,\ \varphi_C}-1} \leq \epsilon.\footnote{For example, if both $A,B$ are algebraically spread, we have $\abs{\ip{\varphi_A* \varphi_B,\ \varphi_C}-1} \leq 2 \Norm{\varphi_A* \varphi_B-1}_{d} \leq \epsilon$, where the first inequality follows from H\"older's inequality, and the second inequality follows from \cite[Proposition 2.16]{KM23}.}\]
		\item\label{item:km_one_spread} (\cite[Proposition 4.10]{KM23}) If at least one of $A,B,C$ is $(r,\delta)$-algebraically spread,
			\[ \ip{\varphi_A* \varphi_B,\ \varphi_C} \leq 1+\epsilon.  \]
	\end{enumerate}
\end{theorem}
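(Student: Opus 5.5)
Theorem~\ref{thm:algspread_conv} is essentially a citation of two results of Kelley--Meka, so the plan is to derive each item from the corresponding KM statement, after matching their parameters and symmetrizing over which of $A,B,C$ is spread. Both items rest on the identity $\ip{\varphi_A*\varphi_B,\varphi_C}=\E_{x,y}[\varphi_A(x)\varphi_B(y)\varphi_C(x+y)]$. This expression is symmetric in $A,B,C$ (over $\F_2$ we have $x+y=z \iff x=y+z$), so in each item we may assume without loss of generality that the spread sets are the first ones listed.

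For item~\ref{item:km_two_spread}, assume $A,B$ are $(r,\delta)$-spread. Since $\E[\varphi_C]=1$,
\[ \ip{\varphi_A*\varphi_B,\varphi_C}-1=\ip{\varphi_A*\varphi_B-1,\ \varphi_C}. \]
Apply H\"older with exponents $d$ and $d'=d/(d-1)$ to get the bound $\Norm{\varphi_A*\varphi_B-1}_d\Norm{\varphi_C}_{d'}$. The second factor is $\Norm{\varphi_C}_{d'}=\mu(C)^{-1/d}\le 2$, since $\mu(C)\ge 2^{-d}$. It therefore suffices to have $\Norm{\varphi_A*\varphi_B-1}_d\le\epsilon/2$. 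This is exactly \cite[Proposition 2.16]{KM23}: two $(r,\delta)$-spread sets of density at least $2^{-d}$ with $r=\tilde O(d\,\epsilon^{-O(1)})$ give an $L^d$ bound on $\varphi_A*\varphi_B-1$ of size $\epsilon$. The generous choice $r=d^8\epsilon^{-O(1)}$ absorbs the logarithmic factors and the loss in passing from $\epsilon$ to $\epsilon/2$.

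For item~\ref{item:km_one_spread}, only one set, say $C$, is spread. We cannot control $\varphi_A*\varphi_B$ in $L^d$, so we argue by contrapositive using \cite[Proposition 4.10]{KM23}, the ``sifting plus almost-periodicity'' step. That result says: if $\ip{\varphi_A*\varphi_B,\varphi_C}\ge 1+\epsilon$, then there is an affine subspace $\mc V'$ of codimension at most $r$ with $\abs{C\cap\mc V'}/\abs{\mc V'}\ge(1+\Omega(\epsilon))\mu(C)$. Choosing $\delta$ below that $\Omega(\epsilon)$ makes this contradict the $(r,\delta)$-spreadness of $C$. If \cite{KM23} states the result with the spread set in another slot, the symmetry above transfers it.

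The only real work is matching the parameters. One must check that the codimension produced by KM is polynomial in $d$ and $\epsilon^{-1}$; their bounds are of order $d^{O(1)}\epsilon^{-O(1)}$, and $d^8$ dominates them. One must also check that the density increment is linear in $\epsilon$, which forces $\delta=\Omega(\epsilon)$. This bookkeeping is the step I expect to need the most care, since the KM propositions are phrased with differently named parameters and density bounds.
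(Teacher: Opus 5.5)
Your proposal is correct and follows the paper's route. Item (i) is exactly the paper's footnote argument: H\"older with $\Norm{\varphi_C}_{d'}=(\abs{C}/2^n)^{-1/d}\le 2$ for the dual exponent $d'$, plus \cite[Proposition 2.16]{KM23}. Item (ii) is a direct citation of \cite[Proposition 4.10]{KM23}, with the symmetry of $\ip{\varphi_A*\varphi_B,\varphi_C}$ in $A,B,C$ taking care of which set is spread.

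One correction to your bookkeeping: Kelley--Meka do not give $r=\tilde O(d\,\epsilon^{-O(1)})$. The $d^8$ in the statement is essentially their codimension bound, not generous slack. So your final choice $r=d^8\epsilon^{-O(1)}$ is right, but not for the reason you give.
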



\subsection{Uniformization}\label{sec:uniformization}

In this section, we describe a procedure that takes an arbitrary set $X \subseteq \F_2^n$ and approximately decomposes it into components that are algebraically spread. 
More generally, we find a simultaneous decomposition for three sets $X,Y,Z$, compatible with the diagonal-product:

\begin{proposition}\label{prop:uniformization}
	Let $r\in \N$, $\epsilon,\eta\in (0,1/10)$, and let $X,Y,Z\subseteq \mc V$ for a linear subspace $\mc V\subseteq\F_2^n$.
	Define $\alpha=\abs{S(X,Y,Z)}/{\abs{\mc V}^2}$.
	Then, there exists $C_{\eta} = (1/\eta)^{O(1)}\geq 1$ (that depends only on $\eta$), an integer $T\in \N$, and for each $i\in [T]$, a linear subspace $\mc V_i\subseteq \mc V$, points $x_i,y_i\in \mc V/\mc V_i$, and subsets $X_i\subseteq x_i+\mc V_i,\ Y_i\subseteq y_i+\mc V_i,\ Z_i\subseteq x_i+y_i+\mc V_i$, such that:
	\begin{enumerate}
		\item $\dim(\mc V_i) \geq \dim(\mc V)- r\epsilon^{-3}\log_2(4/\alpha)^{C_{\eta}}$ for all $i\in [T]$.
		\item $S(X_1,Y_1,Z_1),\dots,S(X_T,Y_T,Z_T)$ are disjoint subsets of $S(X,Y,Z)$ such that 
			\[ \abs{ S(X,Y,Z)\setminus \cup_{i=1}^T S(X_i,Y_i,Z_i)} \leq  \eta\abs{S(X,Y,Z)}. \]
		The diagonal-product sets are defined as in Definition~\ref{defn:diag_prod}.
		\item For all $i\in [T]$, we have $\abs{X_i},\abs{Y_i},\abs{Z_i} \geq 2^{-(\log_2(4/\alpha))^{C_{\eta}}}\cdot \abs{\mc V_i}$, and $X_i,Y_i,Z_i$ are $(r,\epsilon)$-algebraically spread within $x_i+\mc V_i,y_i+\mc V_i,x_i+y_i+\mc V_i$ respectively.
	\end{enumerate}
\end{proposition}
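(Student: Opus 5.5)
We will build the decomposition by a density-increment procedure run recursively over a tree of affine cosets, in the spirit of \cite{KM23, JLLOS25}. Throughout we maintain a \emph{current piece}: a linear subspace $\mc W\subseteq \mc V$, cosets $x+\mc W$, $y+\mc W$ (so that $x+y+\mc W$ is the induced coset for the third set), and the restricted sets $X'=X\cap(x+\mc W)$, $Y'=Y\cap(y+\mc W)$, $Z'=Z\cap(x+y+\mc W)$. The key structural observation is that for any subspace $\mc W$, $S(X,Y,Z)$ is the disjoint union over pairs of cosets $(x+\mc W,y+\mc W)$ of the diagonal-product sets $S(X\cap(x+\mc W),Y\cap(y+\mc W),Z\cap(x+y+\mc W))$. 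Hence partitioning $\mc V$ into cosets of a subspace refines the diagonal-product set into disjoint pieces, which is what item 2 requires.

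\textbf{Step 1 (one increment).} Given a piece, define its relative density $\beta=|S(X',Y',Z')|/|\mc W|^2$. Suppose, say, $X'$ is not $(r,\epsilon)$-spread. Then there is an affine $\mc W'\subseteq x+\mc W$ of codimension at most $r$ on which $X'$ has density at least $(1+\epsilon)$ times its density. Take the linear part $\mc W_0$ of $\mc W'$ and partition $\mc W$ into $\mc W_0$-cosets, which splits the piece into subpieces. The subpieces whose $X$-coset equals $\mc W'$ have increased $X$-density. We track a potential, for example $\log$ of the product of the three densities $\mu(X')\mu(Y')\mu(Z')$. This potential is bounded above by $0$. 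Initially it is at least $\log(\alpha^{O(1)})$ restricted to pieces carrying non-negligible mass, since $\beta\le \min$ of pairwise density products, giving $\mu(X')\mu(Y')\mu(Z')\ge \beta^{3/2}$ or similar. To turn the one-sided increment into an \emph{average} statement over the mass of $S$, we use Theorem~\ref{thm:algspread_conv}\ref{item:km_one_spread}. If the current piece already has one spread set, then $\beta\le(1+\epsilon)\mu(X')\mu(Y')\mu(Z')$, so density is controlled and weighting by $S$-mass does not lose much.

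\textbf{Step 2 (recursion and stopping).} We recurse on each subpiece. A subpiece is discarded (put into the remainder) if its relative density $\beta$ falls below $\eta\alpha/2$ times the original, or more carefully below a threshold chosen so the total discarded $S$-mass is at most $\eta|S|$, via Markov and Lemma~\ref{lemma:large_subset_random}-type counting. A subpiece becomes a final component when all three sets are spread. Each increment multiplies some density by $1+\epsilon$, and densities are at least $2^{-\log(4/\alpha)^{C_\eta}}$ on surviving pieces. So the number of increments along any branch is $O(\epsilon^{-1}\log(4/\alpha)^{C_\eta})$, and each costs codimension $r$, giving item 1; the extra $\epsilon^{-2}$ absorbs averaging losses. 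The density lower bound of item 3 follows from the discarding threshold: $\mu(X_i)\ge \beta_i$ because $|S|\le |X_i||\mc W|$.

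\textbf{Main obstacle.} The main obstacle is that incrementing on $X$ via a subspace $\mc W_0$ may destroy the spreadness or density already achieved for $Y,Z$, and pieces where the increment fails to hold on average may absorb most of the mass of $S$. The natural fix, as in \cite{JLLOS25}, is a nested recursion. First make two sets spread, using Theorem~\ref{thm:algspread_conv}\ref{item:km_two_spread} to get $\beta\approx\mu\mu\mu$ so that $S$-mass is proportional to the product of densities. Then increment the third set. Whenever this breaks spreadness of the others, the potential $\log(\mu(X')\mu(Y')\mu(Z'))$ still strictly increases in $S$-mass-weighted average by $\Omega(\epsilon)$. The $\ell_1$ bookkeeping uses the near-equality $\beta\approx\mu\mu\mu$ to convert density increments into $S$-mass-weighted increments. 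Controlling these losses so they compound only polynomially in $1/\eta$ in the exponent, giving $C_\eta=(1/\eta)^{O(1)}$, is where I expect the delicate work.
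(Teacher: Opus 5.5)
Your ordering (make two sets spread first, then handle the third) matches the paper. There is a genuine gap, though, in the increment mechanism of Steps 1--2, and the fix sketched under ``Main obstacle'' does not close it.

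\textbf{The gap.} When $X'$ is not spread, you get one coset $\mc W'$ of $\mc W_0$ on which the $X$-density rises by a factor $1+\epsilon$, out of as many as $2^r$ cosets. On the remaining cosets the $X$-density is on average \emph{lower}, because the average over all cosets is still $\mu(X')$. You recurse on all subpieces, and the branches through those other cosets receive no increment. So nothing bounds the number of codimension-$r$ steps along such a branch. Your count of $O(\epsilon^{-1}\log(4/\alpha)^{C_\eta})$ increments ``along any branch'', and with it item~1, is therefore unsupported.

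Passing to mass-weighted averages does not rescue this. Weighting by $X$-mass, the average gain in $\log\mu(X')$ from such a split is a KL divergence $D(p\,\|\,u)$. Only one of up to $2^r$ cosets deviates, by a factor $1+\epsilon$, so this gain can be as small as order $\epsilon^2 2^{-r}$. Weighting by $S$-mass, the $Y$ and $Z$ factors need not improve at all. So the claim that $\log(\mu(X')\mu(Y')\mu(Z'))$ increases in $S$-mass-weighted average by $\Omega(\epsilon)$ is unjustified and fails in general. A single-coset density increment is an existence tool (exactly Lemma~\ref{lemma:spread_inside_set}), not a decomposition tool.

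\textbf{The missing idea.} The paper never tracks a joint potential. Instead it shows that a \emph{constant fraction} of the $S$-mass becomes fully spread in one round, and recurses only on the rest.
\begin{enumerate}
\item Single sets are handled by extraction (Lemma~\ref{lemma:uniformization_one_set}): find a spread restriction, remove it, and repeat on the leftover inside the full space.
\item One round for three sets (Lemma~\ref{lemma:one_round_part}) applies the two-set lemma to $X\times Y$ with \emph{slack} parameters $(r_0,\epsilon/10)$, where $r_0=r+\lceil r\epsilon^{-1}\log_2(1/(\eta\alpha))\rceil$. It then extracts spread pieces $Z_{t,t'}$ of each $Z_t$ on cosets of a finer subspace $\mc V_{t,t'}$, and splits $X_t,Y_t$ along those cosets.
\item Spreadness of $X_t$ caps every sub-coset density at $(1+\epsilon/10)\mu(X_t)$. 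Hence at least $4/5$ of the sub-cosets keep density at least $(1-4\epsilon/10)\mu(X_t)$. These inherit $(r,\epsilon)$-spreadness, because the slack in $r_0$ absorbs the codimension of $\mc V_{t,t'}$ in $\mc V_t$; the same holds for $Y$. This is what answers your worry that refining for one set destroys spreadness of the others.
\item Theorem~\ref{thm:algspread_conv} then shows the fully spread pieces carry at least $\frac{1}{10}|S|$.
\item Iterating on the remaining pieces for $L=O(\log(1/\eta))$ rounds leaves at most $(9/10)^L|S|\le\eta^2|S|$. Each round sends relative density $\beta$ to $2^{-O(\log_2(1/(\eta\beta))^2)}$, so the exponent after $L$ rounds is $2^{O(L)}=(1/\eta)^{O(1)}$. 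That is where $C_\eta$ comes from.
\end{enumerate}

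A smaller point: Theorem~\ref{thm:algspread_conv} requires $r\ge d^8\epsilon^{-O(1)}$, which is not given for arbitrary $r$. The paper handles this in Corollary~\ref{corr:one_round_part} by running with $r+r'$ and $\epsilon\epsilon'$.
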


The proof of the above proposition is based on~\cite[Section 5.3]{JLLOS25}, with the exception that we work to make all the 3 sets spread (instead of just 2).
As a result, we obtain a weaker dependence on the parameter $\eta$ compared to their two set version; however, this is inconsequential in our setting, as we use this only for constant $\eta$.
In the following subsections, we first show how to decompose one set, then two sets, and finally three sets.

\subsubsection{Uniformization for 1 Set}

We first show how to decompose a single set $X$ into spread components.
Before that, we show that any large set contains a spread set in it.

\begin{lemma}[Existence of a Spread Subset]\label{lemma:spread_inside_set}
	Let $r\in \N,\ \epsilon\in (0,1)$, and let $X\subseteq \mc V$ for a linear subspace $\mc V\subseteq\F_2^n$.
	Define $\alpha_X=\abs{X}/\abs{\mc V}$.
	Then, there exists an affine subspace $\mc V'\subseteq \mc V$ such that:
	\begin{enumerate}
		\item $\dim(\mc V') \geq \dim(\mc V)- r\epsilon^{-1}\log_2(1/\alpha_X).$
		\item The set $X' = X\cap\mc V'$ satisfies $\frac{\abs{X'}}{\abs{\mc V'}}\geq \alpha_X$, and $X'$ is $(r,\epsilon)$-algebraically spread within $\mc V'$.
	\end{enumerate}
\end{lemma}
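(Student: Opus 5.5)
We will use the standard density-increment iteration. Start with $\mc V_0=\mc V$ and $X_0=X$, with density $\alpha_0=\alpha_X$. At stage $j$ we have an affine subspace $\mc V_j\subseteq\mc V$ and $X_j=X\cap\mc V_j$, with density $\alpha_j=\abs{X_j}/\abs{\mc V_j}$. If $X_j$ is $(r,\epsilon)$-algebraically spread within $\mc V_j$, we stop and output $\mc V'=\mc V_j$. Otherwise Definition~\ref{defn:alg_spread} fails, so there is an affine subspace $\mc V_{j+1}\subseteq\mc V_j$ with $\dim(\mc V_{j+1})\geq\dim(\mc V_j)-r$ and
\[ \frac{\abs{X\cap\mc V_{j+1}}}{\abs{\mc V_{j+1}}} > (1+\epsilon)\,\alpha_j. \]
Note that $X\cap\mc V_{j+1}=X_j\cap\mc V_{j+1}$, because $\mc V_{j+1}\subseteq\mc V_j$. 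So $\alpha_{j+1}>(1+\epsilon)\alpha_j$, and we recurse.

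Densities never exceed $1$, and $\alpha_j>(1+\epsilon)^j\alpha_X$. Hence the process must stop after some number $k$ of steps with $(1+\epsilon)^k\alpha_X\leq 1$, that is, $k\leq \log(1/\alpha_X)/\log(1+\epsilon)$. For $\epsilon\in(0,1)$ we have $\log_2(1+\epsilon)\geq\epsilon$, by concavity of $\log_2(1+\epsilon)$, which equals $\epsilon$ at both endpoints $0$ and $1$. Therefore $k\leq\epsilon^{-1}\log_2(1/\alpha_X)$.

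Each step loses at most $r$ dimensions, so
\[ \dim(\mc V')\geq\dim(\mc V)-rk\geq\dim(\mc V)-r\epsilon^{-1}\log_2(1/\alpha_X). \]
The density of $X'=X\cap\mc V'$ is $\alpha_k\geq\alpha_X$, since the densities only increase. Spreadness holds by the stopping condition.

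Some minor points need checking. First, the subspaces in the definition are affine, so the iteration stays inside affine subspaces, and spreadness is measured relative to an affine ambient space. Definition~\ref{defn:alg_spread} is stated for affine $\mc V$, so this is fine. Second, if $\alpha_X=1$ the process stops immediately, since no affine subspace can have density greater than $1$. We may assume $X$ is nonempty, since otherwise $\alpha_X=0$ and the dimension bound is vacuous. There is no real obstacle here; the only care needed is the bound $\log_2(1+\epsilon)\geq\epsilon$ on $(0,1)$, which gives the exact constant in the statement.
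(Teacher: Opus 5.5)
Your proof is correct and follows essentially the same density-increment iteration as the paper. You also spell out the inequality $(1+\epsilon)^t \geq 2^{\epsilon t}$ (equivalently $\log_2(1+\epsilon)\geq\epsilon$ on $(0,1)$) via concavity, which the paper uses without comment.
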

\begin{proof}
	We proceed iteratively. Let $\mc V^{(0)} = \mc V$; for $t=0,1,2,\dots$ we do the following: if $X\cap \mc V^{(t)}$ is $(r,\epsilon)$-algebraically spread within $\mc V^{(t)}$, we stop; otherwise, there exists some subspace $\mc V^{(t+1)}\subseteq \mc V^{(t)}$ such that $\dim(\mc V^{(t+1)})\geq \dim(\mc V^{(t)})-r$, and 
	\[ \frac{\abs{X\cap \mc V^{(t+1)}}}{\abs{\mc V^{(t+1)}}} > (1+\epsilon)\cdot \frac{\abs{X\cap \mc V^{(t)}}}{\abs{\mc V^{(t)}}}.\] 
	If we have not stopped till step $t$, we have $\dim(\mc V^{(t)}) \geq \dim(\mc V)-rt$, and \[ \frac{\abs{X\cap \mc V^{(t+1)}}}{\abs{\mc V^{(t+1)}}} \geq (1+\epsilon)^t \alpha_X \geq 2^{\epsilon t}\alpha_X. \]
	Since this density cannot exceed 1, we must stop for some $t\leq \epsilon^{-1}\log_2(1/\alpha_X)$. 
\end{proof}

Using the above lemma, we show how to decompose a single set into spread components.

\begin{lemma}[Uniformization for 1 Set]\label{lemma:uniformization_one_set}
	Let $r\in \N,\ \epsilon,\eta\in (0,1)$.
	Let $X\subseteq \mc V$ for a linear subspace $\mc V\subseteq\F_2^n$.
	Then, there exists $T\in \N$, and for each $i\in [T]$, an affine subspace $\mc V_i\subseteq \mc V$ and a subset $X_i\subseteq \mc V_i$, such that:
	\begin{enumerate}
		\item $\dim(\mc V_i) \geq \dim(\mc V)- r\epsilon^{-1}\log_2(1/\eta)$ for all $i\in [T]$.
		\item $X_1,\dots,X_T$ are disjoint subsets of $X$ such that $\abs{X\setminus \cup_{i\in [T]}X_i} \leq \eta \abs{\mc V}$.
		\item For all $i\in [T]$, $\abs{X_i} \geq \eta\abs{\mc V_i}$ and $X_i$ is $(r,\epsilon)$-algebraically spread within $\mc V_i$.
	\end{enumerate}
\end{lemma}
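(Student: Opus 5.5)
The plan is a greedy peeling argument: repeatedly apply Lemma~\ref{lemma:spread_inside_set} to what remains of $X$ inside a coset of a fixed subspace, and remove the spread piece it produces. The one catch is that the dimension loss of a single application depends on the density of the remainder, which can be small. To control it, I will cut up an entire coset structure at once rather than a single piece.

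\textbf{Iterative procedure.} Let $R^{(0)}=X$ be the residual set. At stage $t$ we hold a residual $R^{(t)}\subseteq X$ and the pieces $X_1,\dots,X_{t}$ produced so far. If $\abs{R^{(t)}}\le \eta\abs{\mc V}$, we stop. Otherwise we apply Lemma~\ref{lemma:spread_inside_set} to $R^{(t)}\subseteq\mc V$, whose density is $\alpha_t=\abs{R^{(t)}}/\abs{\mc V}>\eta$. This yields an affine subspace $\mc V'$ with $\dim(\mc V')\ge \dim(\mc V)-r\epsilon^{-1}\log_2(1/\eta)$, such that $R^{(t)}\cap\mc V'$ has density at least $\alpha_t>\eta$ in $\mc V'$ and is $(r,\epsilon)$-spread there.

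Write $\mc V'=v+W$ with $W$ linear. The key additional observation concerns the other cosets $u+W$ of $W$ in $\mc V$. Applying the same procedure to each of them would require rerunning the lemma, which is exactly the difficulty. So instead I simply set $X_{t+1}=R^{(t)}\cap\mc V'$ and $\mc V_{t+1}=\mc V'$, and update $R^{(t+1)}=R^{(t)}\setminus\mc V'$.

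\textbf{Verifying the three properties.} Property 1 holds at every step, because we always apply the lemma to a subset of the original $\mc V$ with density above $\eta$. This is the crucial point: the lemma is always run in the full space $\mc V$, so the dimension bound never degrades. Property 3 holds because the density of $X_{t+1}$ in $\mc V_{t+1}$ is at least $\alpha_t>\eta$, and it is spread by the lemma. Property 2 holds because the pieces are disjoint by construction and the stopping rule guarantees $\abs{X\setminus\cup X_i}\le\eta\abs{\mc V}$.

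\textbf{Termination.} Each step removes a nonempty set, since $X_{t+1}$ has positive density in $\mc V_{t+1}$. As $X$ is finite, the procedure stops after finitely many steps, and $T$ is then the number of pieces produced.

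\textbf{Main obstacle.} I expected the hard part to be keeping the spreadness threshold and the density lower bound simultaneously uniform across all pieces. It is resolved by measuring the residual density relative to $\mc V$ rather than relative to the piece. Since the residual's global density stays above $\eta$ until we stop, each invocation of Lemma~\ref{lemma:spread_inside_set} loses at most $r\epsilon^{-1}\log_2(1/\eta)$ dimensions and returns a piece of density at least $\eta$.
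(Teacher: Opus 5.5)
Your proposal is correct and is essentially the paper's proof. The paper likewise repeatedly applies Lemma~\ref{lemma:spread_inside_set} to the residual set inside the full space $\mc V$, whose density relative to $\mc V$ exceeds $\eta$; this gives the uniform dimension bound and the density-$\eta$ guarantee for each piece, which is then removed until at most $\eta\abs{\mc V}$ elements remain (your aside about cutting up a whole coset structure is unnecessary, as you end up peeling off one piece at a time anyway).
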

\begin{proof}
	We proceed iteratively, via the following algorithm:
	Let $X^{(1)} = X,\ \mc V^{(1)}=\mc V$.
	At any time step $t$, if $\abs{X^{(t)}} \leq \eta\abs{\mc V}$, we stop.
	Else, if $\abs{X^{(t)}} \geq \eta\abs{\mc V}$, by Lemma~\ref{lemma:spread_inside_set}, we find an affine subspace $\mc V_t\subseteq \mc V$ of dimension $\dim(\mc V_t)\geq \dim(\mc V)-r\epsilon^{-1}\log_2(1/\eta)$, such that $X_t:=X^{(t)}\cap \mc V_t$ satisfies $\abs{X_t}\geq \eta\abs{\mc V_t}$, and $X_t$ is $(r,\epsilon)$-algebraically spread within $\mc V_t$.
	Then, we define $X^{(t+1)} = X^{(t)}\setminus X_t$.
	
	Finally, suppose we stop at step $T+1$.
	Then, we can write $X$ as a disjoint union $X= X^{(T+1)}\cup X_1\cup X_2\cup\dots\cup X_T$, with $\abs{X^{(T+1)}}\leq \eta\abs{\mc V}$.
\end{proof}
\begin{remarks}
	Note that	
	\begin{enumerate}
		\item It must hold that $T\leq 2^{(1+r/\epsilon)\log_2(1/\eta)}$, since for all $i\in [T]$, \[ \frac{\abs{ X_i}}{\abs{\mc V}} = \frac{\abs{ X_i}}{\abs{\mc V_i}} \cdot \frac{\abs{\mc V_i}}{\abs{\mc V}} \geq \eta  \cdot 2^{- r\log_2(1/\eta)/\epsilon} = 2^{-(1+r/\epsilon)\log_2(1/\eta)}, \]
		and $\sum_{i=1}^T \abs{X_i} \leq \abs{X} \leq \abs{\mc V}$.
		\item It is useful to think of $\eta = \eta' \cdot \frac{\abs{X}}{\abs{\mc V}}$, to get a multiplicative approximation for the set $X$.
	\end{enumerate}
\end{remarks}

\subsubsection{Uniformization for 2 Sets}

Next, we state a lemma about decomposing the product $X\times Y$ of two sets $X,Y$ into spread components.

\begin{lemma}[Uniformization for 2 Sets, {\cite[Lemma 5.9]{JLLOS25}}]\label{lemma:uniformization_two_set}
	Let $r\in \N$, $\epsilon, \eta\in (0,1/10)$, and let $X,Y\subseteq \mc V$ for a linear subspace $\mc V\subseteq \F_2^n$.
	Define $\alpha_{XY} = \abs{X}\abs{Y}/{\abs{\mc V}}^2$.
	Then, there exists $T\in \N$, and for each $i\in [T]$, a linear subspace $\mc V_i\subseteq \mc V$, points $x_i,y_i\in \mc V/\mc V_i$, and subsets $X_i\subseteq x_i+\mc V_i,\ Y_i\subseteq y_i+V_i$, such that:
	\begin{enumerate}
		\item $\dim(\mc V_i) \geq \dim(\mc V)- O\brac{r\epsilon^{-2}\log_2(1/\alpha_{XY})^2\log_2(1/\eta)+r\epsilon^{-2}\log_2(1/\eta)^5}$ for all $i\in [T]$.
		\item $X_1\times Y_1,\dots,X_T\times Y_T$ are disjoint subsets of $X\times Y$  such that \[\abs{(X\times Y)\setminus \cup_{i\in [T]}(X_i\times Y_i)} \leq \eta \abs{X\times Y}\]
		\item For all $i\in [T]$, we have $\abs{X_i\times Y_i} \geq \kappa \cdot  \abs{\mc V_i}^2$ for $\kappa \geq 2^{-O(\log_2(1/\eta)^2)}\cdot \alpha_{XY}$, and $X_i,Y_i$ are $(r,\epsilon)$-algebraically spread within $x_i+\mc V_i, y_i+\mc V_i$ respectively.
	\end{enumerate}
\end{lemma}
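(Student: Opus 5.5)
The plan is an energy-increment argument organized as a partition tree. Each node of the tree is a pair of cosets $(x+\mc W, y+\mc W)$ of a common linear subspace $\mc W\subseteq\mc V$, carrying the restricted sets $X\cap(x+\mc W)$ and $Y\cap(y+\mc W)$. Using one subspace $\mc W$ for both coordinates is what produces the common $\mc V_i$ in the statement.

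\textbf{Refinement step.} At a node, if both restricted sets are $(r,\epsilon)$-algebraically spread, the node becomes a leaf. Such a leaf contributes a pair $(X_i,Y_i)$, provided its product density is at least $\kappa$; otherwise it is discarded. If, say, the $X$-part is not spread, Definition~\ref{defn:alg_spread} yields an affine subspace $x'+\mc W'$ with $\operatorname{codim}_{\mc W}\mc W'\le r$ on which the density of $X$ exceeds its density in $x+\mc W$ by a factor $(1+\epsilon)$. We then split $x+\mc W$ into all cosets of $\mc W'$, split $y+\mc W$ into all cosets of $\mc W'$ as well, and make every pair of cosets a child. Since the children partition the parent's product, disjointness in item~2 is automatic.

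\textbf{Potential.} Sample $(a,b)$ uniformly from $X\times Y$ and follow the root-to-leaf path through the nodes containing it. Track $\Phi=\log_2$ of the product of the relative densities of $X$ and $Y$ in the current cosets. Two facts drive the argument.
\begin{enumerate}
\item Under the size-biased choice of the child coset, the expected change in $\log(\text{density})$ equals a KL divergence, so it is nonnegative. Hence $\Phi$ is a submartingale.
\item At a non-spread node, the increment coset carries an $\Omega(\epsilon)$ fraction more mass than its share. A standard computation (KL divergence at least $\chi^2$-type lower bounds) then gives an expected drift of $\Omega(\epsilon^2\cdot \text{mass of the increment coset})$.
\end{enumerate}

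\textbf{Bounding the depth.} Since $\Phi\le 0$ always and $\Phi\ge\log_2\alpha_{XY}$ initially, the total drift along a path is at most $\log_2(1/\alpha_{XY})$. A concentration inequality for the submartingale with bounded-variance increments (Freedman/Azuma type) should show that, except on a set of $(a,b)$ of measure at most $\eta/2$, the path reaches a spread leaf within $D=O(\epsilon^{-2}\log_2(1/\alpha_{XY})^2\log_2(1/\eta)+\epsilon^{-2}\log_2(1/\eta)^5)$ steps. The dimension bound in item~1 is then $rD$.

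\textbf{Discarded mass.} Paths reaching depth $D$ without terminating are discarded; they account for at most $\eta/2$ of the mass. Leaves with product density below $\kappa=2^{-O(\log_2(1/\eta)^2)}\alpha_{XY}$ are also discarded. Their total mass is controlled by Markov's inequality applied to $\Phi$ along paths: a random point of $X\times Y$ rarely lands in a coset pair whose density is far below the average, and the $\log(1/\eta)^2$ in the exponent leaves room for a union over depth levels.

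\textbf{Main obstacle.} The delicate part is the drift and concentration bookkeeping. At a non-spread node only one child coset is guaranteed an increment, while the other children may lose density. The positive drift therefore has to be extracted from the entropy or KL argument rather than a pointwise increase. The $\epsilon^{-2}$ (instead of $\epsilon^{-1}$) in the dimension loss, and the polylogarithmic dependence on $\eta$, come from exactly this step. I would first try to reproduce the quadratic drift bound $\mathrm{KL}\ge\Omega(\epsilon^2 p)$ for an increment coset of relative mass $p$, then deal with the possibility that $p$ is tiny by truncating increments and absorbing the failure probability into $\eta$.
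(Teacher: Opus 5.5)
There is a genuine gap in the depth bound, and it is structural rather than a matter of bookkeeping.

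\textbf{Why the depth bound fails.} Facts~1 and~2 are correct as computations; the trouble is the size of $p$. The increment coset has relative mass $p=2^{-k}$, where $k=\operatorname{codim}_{\mc W}\mc W'$. The witness supplied by Definition~\ref{defn:alg_spread} may well have $k=r$, with no better witness available; think of a set whose excess density, at every level, sits on a single codimension-$r$ subcoset. The drift is then only $\Theta(\epsilon^2 2^{-r})$: all but a $2^{-r}$ fraction of the mass lands in cosets where nothing is gained.

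Against the total budget $\log_2(1/\alpha_{XY})$, this bounds the expected number of refinement steps only by about $2^{r}\epsilon^{-2}\log_2(1/\alpha_{XY})$. The resulting codimension loss is exponential in $r$, whereas item~1 requires it to be linear in $r$ (and in the application $r$ is a large power of $d$).

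Truncating increments and absorbing failures into $\eta$ does not help. The issue is not rare large jumps; it is that the typical point makes no progress. This is the usual reason that energy increment over a refining partition costs exponentially in the codimension of the increment, while density increment along a single path does not. Secondarily, log-density increments are unbounded below, so the bounded-variance Freedman/Azuma step, and hence your particular $D$, is also unjustified.

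\textbf{The missing idea.} You should give up on every piece being a full intersection $X\cap(\text{coset})$ of one refining partition; the statement only asks for $X_i\subseteq x_i+\mc V_i$.

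For a single set, this is the greedy extraction of Lemma~\ref{lemma:uniformization_one_set}. Iterate the density increment along one path; this takes at most $\epsilon^{-1}\log_2(1/\text{density})$ steps, since the density grows by a factor $1+\epsilon$ each time and stays at most $1$. Then cut out $X^{(t)}\cap\mc V_t$ and restart on the remainder.

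The paper imports the two-set statement from \cite{JLLOS25} without proof. A route that yields exactly these bounds is the one the paper itself runs in Lemma~\ref{lemma:one_round_part} to add the third set: couple the sets through stronger parameters.
\begin{enumerate}
\item Uniformize $X$ with parameters $(r_0,\epsilon/10)$, where $r_0\approx r+r\epsilon^{-1}\log_2(1/\text{density})$.
\item Inside each coset of $\mc V_t$, uniformize $Y$ with parameters $(r,\epsilon)$; this costs codimension at most $r_0-r$.
\item The restriction of $X_t$ to any coset of the finer subspace then has density at most $(1+\epsilon/10)$ times its average. By averaging, at least $4/5$ of these cosets have density at least $(1-4\epsilon/10)$ times the average, and on those cosets the restriction is $(r,\epsilon)$-spread.
\item Recurse on the remaining at most $1/5$ of the mass for $O(\log(1/\eta))$ rounds, discarding low-density pieces.
\end{enumerate}
The bounds come out as follows. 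One factor $\epsilon^{-1}\log$ comes from $r_0$ and another from the increment iteration. Together with the $\log(1/\eta)$ rounds, this gives item~1. A $\mathrm{poly}(\eta)$ density loss per round over those rounds gives the $2^{-O(\log_2(1/\eta)^2)}$ in item~3.
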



\subsubsection{Uniformization for 3 Sets}

We shall prove Proposition~\ref{prop:uniformization} and show how to decompose three sets into spread components, compatible with the diagonal-product.
We start by stating a simple fact about the size of a  diagonal-product $S(X,Y,Z)$, which along with Theorem~\ref{thm:algspread_conv} allows us to control this size when at least one or two of $X,Y,Z$ are spread.

\begin{fact}\label{fact:diag_prod_size}
	Let $X,Y,Z\subseteq \mc V$ for a linear subspace $\mc V\subseteq \F_2^n$, and let $\alpha_X = \abs{X}/{\abs{\mc V}}$, $\alpha_Y = \abs{Y}/{\abs{\mc V}}$, $\alpha_Z = \abs{Z}/{\abs{\mc V}}$.
	Then, $\abs{S(X,Y,Z)}  = \alpha_X\alpha_Y\alpha_Z \cdot \ip{\varphi_X* \varphi_Y,\ \varphi_Z}.$
\end{fact}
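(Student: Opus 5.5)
The plan is to unfold the definitions directly; this is a routine counting identity with no real obstacle, so the only care needed is in the normalization. All expectations, convolutions and density functions are taken with respect to the ambient space $\mc V$, so $x,y\sim\mc V$ are uniform and $\varphi_X=\ind_X/\alpha_X$, and similarly for $Y$ and $Z$.

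First, apply the earlier fact on convolutions with $f=\varphi_X$, $g=\varphi_Y$, $h=\varphi_Z$:
\[ \ip{\varphi_X*\varphi_Y,\ \varphi_Z} = \E_{x,y\sim\mc V}\sqbrac{\varphi_X(x)\varphi_Y(y)\varphi_Z(x+y)}. \]
Next, substitute $\varphi_X=\ind_X/\alpha_X$ and the analogous expressions for $\varphi_Y,\varphi_Z$. Pulling out the constants gives
\[ \ip{\varphi_X*\varphi_Y,\ \varphi_Z} = \frac{1}{\alpha_X\alpha_Y\alpha_Z}\,\E_{x,y\sim\mc V}\sqbrac{\ind_X(x)\ind_Y(y)\ind_Z(x+y)}. \]

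Finally, observe that the product $\ind_X(x)\ind_Y(y)\ind_Z(x+y)$ is exactly the indicator that $(x,y)\in S(X,Y,Z)$, by Definition~\ref{defn:diag_prod}. Here we use that $X,Y,Z\subseteq\mc V$, so restricting $(x,y)$ to $\mc V\times\mc V$ loses no points of $S(X,Y,Z)$. The expectation is therefore $\abs{S(X,Y,Z)}/\abs{\mc V}^2$, and rearranging gives the identity.

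The one point to flag is the normalization. Read literally, the left-hand side $\abs{S(X,Y,Z)}$ is a count, while the right-hand side is at most of order $1$. The identity as proved is
\[ \frac{\abs{S(X,Y,Z)}}{\abs{\mc V}^2}=\alpha_X\alpha_Y\alpha_Z\ip{\varphi_X*\varphi_Y,\ \varphi_Z}, \]
that is, $\abs{S(X,Y,Z)}$ is measured as a density in $\mc V\times\mc V$. This matches how $\alpha=\abs{S(X,Y,Z)}/\abs{\mc V}^2$ is defined in Proposition~\ref{prop:uniformization}, which is where the fact will be used.
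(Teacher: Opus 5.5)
Your proof is correct, and it is exactly the direct unfolding of definitions that the paper leaves implicit (the paper states this as a fact without proof). You are also right that the statement as written is missing a normalization. The correct identity, with all averages taken over $\mc V$, is $\abs{S(X,Y,Z)} = \abs{\mc V}^2\,\alpha_X\alpha_Y\alpha_Z\,\ip{\varphi_X*\varphi_Y,\ \varphi_Z}$. This is the form the paper actually uses, e.g.\ in Equation~\eqref{eqn:conv_lb} and in the bound $\abs{S(X_t,Y_t,\tilde{Z_t})}\leq 1.1\abs{X_t}\abs{Y_t}\eta\alpha$.
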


First, we show a one round partitioning result for $S(X,Y,Z)$:

\begin{lemma}\label{lemma:one_round_part}
	Let $\eta\in (0,1/50)$, and let $X,Y,Z\subseteq \mc V$ for a linear subspace $\mc V\subseteq \F_2^n$.
	Define $\alpha = \abs{S(X,Y,Z)}/\abs{\mc V}^2$.
	Then, there exists an integer $r' = O(\log_2(1/(\eta\alpha))^{16})$, and constant $\epsilon' = \Omega(1)$, such that the following hold:
	
	Let $r\in \N$, $\epsilon\in (0,1/10)$ satisfy $r\geq r',\ \epsilon\leq\epsilon'$.
	Then, there exists $T\in \N$	, a set $\mc G\subseteq [T]$, and for each $i\in [T]$, a linear subspace $\mc V_i\subseteq \mc V$, points $x_i,y_i\in \mc V/\mc V_i$, and subsets $X_i\subseteq x_i+\mc V_i, Y_i\subseteq y_i+\mc V_i, Z_i\subseteq x_i+y_i+\mc V_i$, such that
	\begin{enumerate}
		\item $\dim(\mc V_i)\geq \dim(\mc V) - O\brac{r\epsilon^{-3}\log_2(1/(\eta\alpha))^6}$ for all $i\in [T]$.
		\item $S(X_1,Y_1,Z_1),\dots,S(X_T,Y_T,Z_T)$ are disjoint subsets of $S(X,Y,Z)$ such that 
			\[ \abs{ S(X,Y,Z)\setminus \cup_{i=1}^T S(X_i,Y_i,Z_i)} \leq 4 \eta\abs{S(X,Y,Z)}. \]
		\item $\abs{S(X_i,Y_i,Z_i)} \geq 2^{-O(\log_2(1/(\eta\alpha))^2)}\cdot \abs{\mc V_i}^2$ for all $i\in [T]$.
		\item For all $i\in \mc G$, the sets $X_i,Y_i,Z_i$ are $(r,\epsilon)$-algebraically spread within $x_i+\mc V_i,y_i+\mc V_i,x_i+y_i+\mc V_i$ respectively.
		\item $\sum_{i\in \mc G} \abs{S(X_i,Y_i,Z_i)} \geq \frac{1}{10}\cdot \abs{S(X,Y,Z)}$.
	\end{enumerate}
\end{lemma}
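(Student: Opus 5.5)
Here is the plan. First apply Lemma~\ref{lemma:uniformization_two_set} to the pair $X,Y$. Then, inside each resulting component, apply Lemma~\ref{lemma:uniformization_one_set} to the corresponding slice of $Z$. Finally, use Theorem~\ref{thm:algspread_conv} to show that a constant fraction of $S(X,Y,Z)$ lands in components where all three sets are spread.

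\textbf{Step 1: discard sparse parts.} Let $\alpha_X,\alpha_Y,\alpha_Z$ be the densities, so $\alpha\le\min(\alpha_X\alpha_Y,\alpha_Z)$. Remove from $X$ the points $x$ with $|\{y:(x,y)\in S\}|\le \eta\alpha|\mc V|$, and do the same for $Y$. This loses at most $2\eta|S|$ points of $S$. Afterwards $\alpha_X,\alpha_Y\ge\eta\alpha$.

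\textbf{Step 2: decompose $X\times Y$.} Apply Lemma~\ref{lemma:uniformization_two_set} with error parameter $\eta\alpha$ (relative to $|X\times Y|$), so the uncovered part contributes at most $\eta\alpha|X||Y|\le\eta|S|$ to $S$ loss. This yields components $X_j\times Y_j$ on cosets $x_j+\mc W_j$, $y_j+\mc W_j$, both $(r,\epsilon)$-spread, with codimension $O(r\epsilon^{-2}\log(1/(\eta\alpha))^5)$ and density $\kappa\ge 2^{-O(\log^2(1/(\eta\alpha)))}$.

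\textbf{Step 3: decompose $Z$ inside each component.} Set $Z'_j=Z\cap(x_j+y_j+\mc W_j)$. Components where $|S(X_j,Y_j,Z'_j)|$ is tiny relative to $|X_j||Y_j|$ are dropped and accounted for by a Markov-type bound. Otherwise, apply Lemma~\ref{lemma:uniformization_one_set} to $Z'_j$ in the coset, with error parameter $\eta'\approx\eta\cdot$(density of $S_j$ in $\mc W_j^2$). This gives spread pieces $Z_{j,k}$ on cosets of subspaces $\mc V_{j,k}\subseteq\mc W_j$. Then refine: $\mc W_j$ splits into cosets of $\mc V_{j,k}$, so $X_j,Y_j$ split into pairs of cosets $(a,b)$ with $a+b$ equal to the coset of $Z_{j,k}$. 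Each such triple is a component. These components are disjoint and cover $S(X_j,Y_j,Z'_j)$ up to the removed $Z$ part.

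To bound the $Z$ loss, note that $X_j,Y_j$ are spread, so item~\ref{item:km_two_spread} of Theorem~\ref{thm:algspread_conv} gives $\varphi_{X_j}*\varphi_{Y_j}\approx 1$ in the relevant $L^d$ sense. Hence removing $\eta'|\mc W_j|$ points of $Z$ removes only about $(1+\epsilon)\eta'\alpha_{X_j}\alpha_{Y_j}|\mc W_j|^2$ points of $S$. This is where $r'=O(\log^{16})$ comes from, via $r=d^8\epsilon^{-O(1)}$ with $d=O(\log^2(1/(\eta\alpha)))$ the log-density. The codimensions add up to the claimed $O(r\epsilon^{-3}\log^6)$.

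\textbf{Step 4: define $\mc G$, the main obstacle.} Let $\mc G$ be the set of triples in which the restricted $X$ and $Y$ pieces on cosets of $\mc V_{j,k}$ remain $(r,\epsilon)$-spread. Restrictions of spread sets to sub-cosets need not be spread, so we must show that a $1/10$ fraction of $S$ lies in good triples. The approach is as follows. Since $Z_{j,k}$ is spread in its coset $c+\mc V_{j,k}$, item~\ref{item:km_one_spread} bounds $|S|$ restricted there by $(1+\epsilon)$ times its expected value. Also, spreadness of $X_j$ within $\mc W_j$, together with $\operatorname{codim}\mc V_{j,k}\le r$ (choose $r$ larger than the Step~3 codimension, i.e.\ run Step~2 with a larger spreadness parameter), means every coset piece of $X_j$ has density at most $(1+\epsilon)\alpha_{X_j}$. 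Since the average density is exactly $\alpha_{X_j}$, most coset pieces have density at least $(1-\sqrt\epsilon)\alpha_{X_j}$. A piece with near-average density inherits spreadness with parameter $O(\sqrt\epsilon)$, because its sub-cosets of codimension $\le r$ are sub-cosets of $\mc W_j$ of codimension $\le 2r$. Combining the upper bound on $S$ mass in bad pieces (one-spread bound) with the lower bound on total mass (two-spread bound on $X_j,Y_j$) shows that good triples carry at least a $1/10$ fraction, after adjusting $\epsilon\le\epsilon'$. The bookkeeping of spreadness parameters ($r$ versus $2r$, $\epsilon$ versus $\sqrt\epsilon$) is the delicate part; I would absorb it by running the earlier lemmas with $(2r,\epsilon^2)$.

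Item 3 follows from the density lower bounds accumulated across the steps. The total loss is $2\eta+\eta+\eta\le4\eta$ times $|S|$.
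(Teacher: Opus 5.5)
\textbf{Where you match the paper, and where you differ.} Your skeleton is the paper's. You run two-set uniformization of $X\times Y$ with an enlarged spreadness parameter, then one-set uniformization of each slice $Z'_j$, then split $X_j,Y_j$ along cosets of $\mc V_{j,k}$. You also use the same averaging argument: with $\epsilon_0$ the spreadness parameter of $X_j,Y_j$, every coset piece has density at most $(1+\epsilon_0)\alpha_{X_j}$, so most pieces are near-average and inherit spreadness.

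Your way of reaching the $\frac1{10}$ is different. The paper lower-bounds each good piece by the three-spread count of Theorem~\ref{thm:algspread_conv}\ref{item:km_two_spread} inside $\mc V_{j,k}$, and compares this with a two-spread upper bound on $\abs{S(X_j,Y_j,Z'_j)}$. You instead subtract the bad mass from a two-spread lower bound on the total. You bound the bad mass by Theorem~\ref{thm:algspread_conv}\ref{item:km_one_spread}, using only the spreadness of $Z_{j,k}$, after padding bad $X$/$Y$ pieces up to density $(1+\epsilon_0)\alpha_{X_j}$, resp.\ $(1+\epsilon_0)\alpha_{Y_j}$. That route is valid and gives a larger fraction. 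One caveat: apply the lower bound to $\bigcup_k Z_{j,k}$, in components where it has density $\gtrsim\eta\alpha$ in $\mc W_j$. Never apply it to a single $Z_{j,k}$: with $L=\log_2(1/(\eta\alpha))$, its density in $\mc W_j$ may be $2^{-\Theta(r\epsilon^{-1}L)}$, far below what the spreadness of $X_j,Y_j$ can handle.

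\textbf{The gap: item~3.} Item~3 must hold for \emph{every} $i\in[T]$, not only for $i\in\mc G$. You say it follows from accumulated density lower bounds. But for a triple outside $\mc G$, the piece $X_j\cap(x_j+z+\mc V_{j,k})$ or the matching $Y_j$-piece has no lower bound at all; it can be empty, so $\abs{S(X_i,Y_i,Z_i)}$ can be $0$. This is not cosmetic. Proposition~\ref{prop:uniformization} recurses exactly on the pieces outside $\mc G$, and item~3 is what keeps their density from collapsing.

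The fix is a discard step: keep only triples with $\abs{S(X_i,Y_i,Z_i)}\ge\eta^2\alpha^2\kappa\abs{\mc V_{j,k}}^2$. To bound the discarded mass, sum this threshold over the $\abs{\mc W_j}/\abs{\mc V_{j,k}}$ cosets, and use $\abs{Z_{j,k}}\ge\eta\alpha\abs{\mc V_{j,k}}$, $\sum_k\abs{Z_{j,k}}\le\abs{\mc W_j}$ and $\kappa\abs{\mc W_j}^2\le\abs{X_j}\abs{Y_j}$. This bounds the discarded mass by $\eta\alpha\abs{\mc V}^2$. You must also check that good triples clear the threshold.

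\textbf{Parameter bookkeeping.} Once this loss is included, your parameters no longer deliver the statement.
\begin{itemize}
\item Loss budget. Step~1 is superfluous, since $\abs{S}\le\abs{X}\abs{\mc V}$ already gives $\alpha_X\ge\alpha$, yet it spends $2\eta$. Adding Step~2, the dropped components, the removal of $\tilde Z_j$ and the discards, you exceed $4\eta$. Drop Step~1 and the component-dropping. With $\eta'=\eta\alpha$ and $\tilde Z_j$ padded to density $\kappa$, the two-spread bound already covers sparse slices, and the total loss is $(1+1.1+1)\eta\alpha\abs{\mc V}^2$.
\item Spreadness parameter in Step~2. $\operatorname{codim}\mc V_{j,k}$ can be as large as $r\epsilon^{-1}\log_2(1/\eta')$, not $\le r$. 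So running Step~2 with $(2r,\epsilon^2)$ does not suffice; you need $r_0=r+\lceil r\epsilon^{-1}\log_2(1/\eta')\rceil$.
\item Choice of $\eta'$ and $\epsilon$. Your $\eta'\approx\eta\cdot(\text{density of }S_j)$ is only guaranteed to be $\ge\eta^2\alpha\kappa$. That makes $\log_2(1/\eta')=\Theta(L^2)$ and the codimension $O(r\epsilon^{-3}L^7)$, and the $\epsilon^2$ substitution costs a further $\epsilon^{-2}$. Instead take $\eta'=\eta\alpha$, spreadness $\epsilon/10$ for $X_j,Y_j$, and threshold $1-\tfrac{4\epsilon}{10}$. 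A $\tfrac45$ fraction of good cosets each for $X$ and $Y$ suffices for your subtraction argument, and you land on $O(r\epsilon^{-3}L^6)$.
\end{itemize}
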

\begin{proof}
	Let $\kappa= 2^{-O(\log_2(1/(\eta\alpha))^2)}\cdot \alpha_{XY} \geq 2^{-O(\log_2(1/(\eta\alpha))^2)}$ be as in the third item in Lemma~\ref{lemma:uniformization_two_set}, when applied with parameter $\eta\alpha$; we may assume $\kappa \leq \eta\alpha/2$.
	Let $r' = O(\log_2(1/\kappa)^8),\ \epsilon'=\Omega(1)$ be as in Theorem~\ref{thm:algspread_conv} with density $\kappa/2$ and error $1/10$ (that is, with $d\gets \log_2(2/\kappa)$ and $\epsilon\gets 1/10$).

	Let $r_0 = r + \lceil r\epsilon^{-1}\log_2(1/(\eta\alpha)) \rceil$.
	By Lemma~\ref{lemma:uniformization_two_set} applied to $X\times Y$, with parameters $r_0,\epsilon/10, \eta\alpha$, we may find $T\in \N$, and for each $t\in [T]$ a linear subspace $\mc V_t$ of dimension $ \dim(\mc V_t)\geq\dim(\mc V)- O\brac{r_0\epsilon^{-2}\log_2(1/(\eta\alpha))^5}$, points $x_t,y_t\in \mc V/\mc V_t$, and subsets $X_t\subseteq x_t+\mc V_t, Y_t\subseteq y_t+\mc V_t$ of size $\abs{X_t}\abs{Y_t}/\abs{\mc V_t}^2 \geq \kappa$, such that $X_t,Y_t$ are $(r_0,\epsilon/10)$-algebraically spread inside $x_t+\mc V_t,\ y_t+\mc V_t$ respectively.	
	Also, we have that $X_1\times Y_1,\dots,X_T\times Y_T$ are disjoint subsets of $X\times Y$, with $\widetilde{XY} = (X\times Y)\setminus \cup_{t\in [T]}(X_t\times Y_t)$ satisfying $|\widetilde{XY}| \leq \eta\alpha \abs{X\times Y}$.
	In particular, it also holds that $S(X_1,Y_1,Z),\dots,S(X_T,Y_T,Z)$ are disjoint subsets of $S(X,Y,Z)$.
	
	Now, consider any $t\in [T]$.
	Observe that $S(X_t,Y_t,Z) = S(X_t,Y_t,Z_t)$, where $Z_t = Z\cap(x_t+t_t+\mc V_t)$.
	We shall now further decompose $S(X_t,Y_t,Z_t)$ so as to make the third set spread.
	By Lemma~\ref{lemma:uniformization_one_set} applied to $Z_t$,\footnote{Formally, we shift the set $Z_t$ so it lies in the linear subspace $\mc V_t$, then apply the lemma and shift back.} with parameters $r,\epsilon,\eta\alpha$,  we can find $T_t\in \N$, and for each $t'\in [T_t]$, a linear subspace $\mc V_{t,t'}\subseteq \mc V_t$ of dimension $\dim(\mc V_{t,t'})\geq \dim(\mc V_t)-r\epsilon^{-1}\log_2(1/(\eta\alpha))$, a point $z_{t,t'}\in \mc V_t/\mc V_{t,t'}$, and a subset $Z_{t,t'}\subseteq x_t+y_t+z_{t,t'}+\mc V_{t,t'}$ of size $\abs{Z_{t,t'}}\geq\eta \alpha\abs{\mc V_{t,t'}}$ that is $(r,\epsilon)$-algebraically spread.
	Also, we may write $Z_t$ as a disjoint union $Z_t=Z_{t,1}\cup\dots\cup Z_{t,T_t}\cup \tilde{Z_t}$, with $|\tilde{Z_t}|\leq \eta\alpha \abs{\mc V_t}$.
	
	Note that for all $t\in [T],t'\in [T_t]$, we have $S(X_t,Y_t,Z_{t,t'}) = \cup_{z\in \mc V_t/\mc V_{t,t'}} S( X_{t,t',z} , Y_{t,t',z} , Z_{t,t'})$, where $X_{t,t',z} = X_t\cap (x_t+z+\mc V_{t,t'}),\ Y_{t,t',z}=Y_t\cap (y_t+z+z_{t,t'}+\mc V_{t,t'})$; recall $Z_{t,t'}\subseteq x_t+y_t+z_{t,t'}+\mc V_{t,t'}$.
	
	Finally, we define the sets $X_i,Y_i,Z_i$ in the lemma statement as all $X_{t,t',z}, Y_{t,t',z}, Z_{t,t'}$, for $t\in [T], t'\in [T_t], z\in \mc V_t/\mc V_{t,t'}$ that satisfy \[\abs{S(X_{t,t',z}, Y_{t,t',z}, Z_{t,t'})} \geq \eta^2\alpha^2\kappa\cdot \abs{\mc V_{t,t'}}^2.\]
	Also, we define $\mc G$ to contain all $X_{t,t',z}, Y_{t,t',z}, Z_{t,t'}$ satisfying 
    \[\frac{\abs{X_{t,t',z}}}{\abs{\mc V_{t,t'}}}\geq (1-4\epsilon/10)\cdot \frac{\abs{X_t}}{\abs{\mc V_t}},\quad \frac{\abs{Y_{t,t',z}}}{\abs{\mc V_{t,t'}}}\geq (1-4\epsilon/10)\cdot \frac{\abs{Y_t}}{\abs{\mc V_t}}.\]
	We will show later (in the last item below) that the pieces in the set $\mc G$ also satisfy the condition $\abs{S(X_{t,t',z}, Y_{t,t',z}, Z_{t,t'})} \geq \eta^2\alpha^2\kappa\cdot \abs{\mc V_{t,t'}}^2$.
	Now, we verify each of the conclusions in the lemma statement:
	
	\begin{enumerate}
		\item We have for all $t,t'$ that
		\begin{align*}
			\dim(\mc V_{t,t'}) &\geq \dim(\mc V_t)-r\epsilon^{-1}\log_2(1/(\eta\alpha))
			\\&\geq \dim(\mc V)- O\brac{r_0\epsilon^{-2}\log_2(1/(\eta\alpha))^5}
			\\&\geq \dim(\mc V) - O\brac{r\epsilon^{-3}\log_2(1/(\eta\alpha))^6}.
		\end{align*}
		\item By construction, it holds that the sets $S(X_i,Y_i,Z_i)$ are disjoint subsets of $S(X,Y,Z)$. Let $\mc I$ to denote all $(t,t',z)$ such that $\abs{S(X_{t,t',z}, Y_{t,t',z}, Z_{t,t'})} < \eta^2\alpha^2\kappa\cdot \abs{\mc V_{t,t'}}^2$. Then, the quantity $\abs{S(X,Y,Z)}-\sum_i \abs{S(X_i,Y_i,Z_i)}$ equals
		\[ |\widetilde{XY}\cap \set{(x,y):x+y\in Z}| + \sum_{t} |S(X_t,Y_t,\tilde{Z_t})| + \sum_{(t,t',z)\in \mc I}\abs{S(X_{t,t',z}, Y_{t,t',z}, Z_{t,t'})} . \]
		We bound each of these one by one.
		\begin{enumerate}
			\item The first term is at most $|\widetilde{XY}| \leq \eta\alpha\abs{X}\abs{Y} \leq \eta\alpha\abs{\mc V}^2$.
			\item For any $t\in [T]$, we have that $X_t, Y_t$ are $(r_0,\epsilon/10)$-algebraically spread, and hence also $(r,\epsilon)$-algebraically spread. Thus, by Theorem~\ref{thm:algspread_conv}, we have $|S(X_t,Y_t,\tilde{Z_t})| \leq 1.1 \cdot \abs{X_t}\abs{Y_t}\cdot \eta\alpha$. Here, we used that $X_t, Y_t$ have density at least $\kappa$, and also that for this calculation we may assume that $\tilde{Z_t}$ has density at least $\kappa$ (which is at most $\eta\alpha/2$) by possibly adding more elements to $\tilde{Z}_t$. Hence, the second term is at most
			\[ \sum_{t\in [T]} |S(X_t,Y_t,\tilde{Z_t})|  \leq 1.1\cdot \eta\alpha\cdot  \sum_{t\in [T]}\abs{X_t}\abs{Y_t}  \leq 1.1\cdot \eta\alpha\cdot  \abs{X}\abs{Y} \leq 1.1\cdot \eta \alpha \abs{\mc V}^2.\]
			\item The third term is at most
			\begin{align*}
				\sum_{(t,t',z)\in \mc I}\abs{S(X_{t,t',z}, Y_{t,t',z}, Z_{t,t'})} &\leq \sum_{(t,t',z)\in \mc I}\eta^2\alpha^2\kappa\cdot \abs{\mc V_{t,t'}}^2
				\leq \sum_{\substack{t,t'\\z\in\mc V_t/\mc V_{t,t'}}}\eta^2\alpha^2\kappa\cdot\abs{\mc V_{t,t'}}^2
				\\&= \sum_{t,t'}\eta\alpha\kappa\cdot \brac{\eta\alpha\abs{\mc V_{t,t'}}}\cdot \abs{\mc V_{t}}
				\leq  \sum_{t,t'}\eta\alpha\kappa\cdot \abs{Z_{t,t'}}\cdot \abs{\mc V_{t}}
				\\&\leq \sum_{t}\eta\alpha \cdot \brac{\kappa\abs{\mc V_{t}}^2}
				\leq  \sum_{t}\eta\alpha\cdot \abs{X_t}\abs{Y_t} \leq \eta\alpha\abs{\mc V}^2.
			\end{align*}
		\end{enumerate}
		Combining, we get the bound $3.1\cdot \eta\alpha\abs{\mc V}^2 \leq  4\eta\abs{S(X,Y,Z)}$.
		
		\item By construction, we have $\abs{S(X_i,Y_i,Z_i)} \geq \eta^2\alpha^2 \kappa\abs{\mc V_i}^2 \geq 2^{-O(\log_2(1/(\eta\alpha))^2)}\cdot \abs{\mc V_i}^2$.
		
		\item Each $Z_{t,t'}$ is $(r,\epsilon)$-algebraically spread within $x_t+y_t+z_{t,t'}+\mc V_{t,t'}$.
		
		Now, consider any $X_{t,t',z}$ such that $\frac{\abs{X_{t,t',z}}}{\abs{\mc V_{t,t'}}}\geq (1-4\epsilon/10)\cdot \frac{\abs{X_t}}{\abs{\mc V_t}}$. We show that it is $(r,\epsilon)$-algebraically spread within $x_t+z+\mc V_{t,t'}$.
		For this, let $\mc W \subseteq  x_t+z+\mc V_{t,t'}$ be any affine subspace satisfying $\dim(\mc W) \geq \dim(\mc V_{t,t'})-r \geq \dim(\mc V_t)-r_0$.
		Then, as $X_t$ is $(r_0,\epsilon/10)$-algebraically spread within $x_t+\mc V_t$, we get
		\[ \frac{\abs{X_{t,t',z}\cap \mc W}}{\abs{\mc W}} = \frac{\abs{X_t\cap \mc W}}{\abs{\mc W}} \leq \brac{1+\frac{\epsilon}{10}}\cdot \frac{\abs{X_t}}{\abs{\mc V_t}} \leq \frac{\brac{1+\frac{\epsilon}{10}}}{\brac{1-\frac{4\epsilon}{10}}} \cdot  \frac{\abs{X_{t,t',z}}}{\abs{\mc V_{t,t'}}} \leq \brac{1+\epsilon}\cdot \frac{\abs{X_{t,t',z}}}{\abs{\mc V_{t,t'}}}.\]
		
		A similar argument shows that $Y_{t,t',z}$ is $(r,\epsilon)$-algebraically spread within $y_t+z+z_{t,t'}+\mc V_{t,t'}$, assuming that $\frac{\abs{Y_{t,t',z}}}{\abs{\mc V_{t,t'}}}\geq (1-4\epsilon/10)\cdot \frac{\abs{Y_t}}{\abs{\mc V_t}}$.
		
		\item Consider any piece $X_{t,t',z}, Y_{t,t',z}, Z_{t,t'}$ in $\mc G$. This satisfies \[ \frac{\abs{X_{t,t',z}}}{\abs{\mc V_{t,t'}}}\geq (1-4\epsilon/10)\cdot \frac{\abs{X_t}}{\abs{\mc V_t}} \geq \frac{\kappa}{2}, \quad \frac{\abs{Y_{t,t',z}}}{\abs{\mc V_{t,t'}}} \geq (1-4\epsilon/10)\cdot \frac{\abs{Y_t}}{\abs{\mc V_t}}\geq \frac{\kappa}{2},\quad \frac{\abs{Z_{t,t'}}}{\abs{\mc V_{t,t'}}} \geq\eta \alpha \geq \frac{\kappa}{2}.\]
		Since all three are $(r,\epsilon)$-algebraically spread, with $r\geq r',\epsilon\leq \epsilon'$, by Theorem~\ref{thm:algspread_conv}, 
		\begin{align}
			\abs{S(X_{t,t',z}, Y_{t,t',z}, Z_{t,t'})} &\geq \frac{9}{10}\cdot \frac{\abs{X_{t,t',z}}}{\abs{\mc V_{t,t'}}}\cdot \frac{\abs{Y_{t,t',z}}}{\abs{\mc V_{t,t'}}}\cdot \frac{\abs{Z_{t,t'}}}{\abs{\mc V_{t,t'}}}\cdot \abs{\mc V_{t,t'}}^2
			\nonumber\\&\geq  \frac{9}{10}\cdot \brac{1-\frac{4\epsilon}{10}}^2 \cdot \frac{\abs{X_t}\abs{Y_t}}{\abs{\mc V_t}^2}\cdot \frac{\abs{Z_{t,t'}}}{\abs{\mc V_{t,t'}}} \cdot  \abs{\mc V_{t,t'}}^2
			\nonumber\\&\geq  \frac{1}{2}\cdot \frac{\abs{X_t}\abs{Y_t}}{\abs{\mc V_t}^2}\cdot \abs{Z_{t,t'}}\abs{\mc V_{t,t'}} .\label{eqn:conv_lb}
		\end{align}
		In particular, the above is at least $1/2\cdot \kappa \cdot \eta\alpha \abs{\mc V_{t,t'}}^2 \geq \eta^2\alpha^2\kappa\cdot \abs{\mc V_{t,t'}}^2$, which shows that the pieces in $\mc G$ are a subset of the pieces $X_i,Y_i,Z_i$ that were not discarded.
		
		Now, fix any $t\in [T],t'\in [T_t]$.
		We show that \[ \Pr_{z\sim \mc V_t/\mc V_{t,t'}}\sqbrac{\frac{\abs{X_{t,t',z}}}{\abs{\mc V_{t,t'}}}\geq (1-4\epsilon/10)\cdot \frac{\abs{X_t}}{\abs{\mc V_t}}} \geq 4/5. \]
		For this, let $p$ denote the probability on the left hand side.
		Observe that for every $z\in \mc V_t/\mc V_{t,t'}$, as $X_t$ is $(r_0,\epsilon/10)$-algebraically spread within $x_t+\mc V_t$, we have the upper bound $\frac{\abs{X_{t,t',z}}}{\abs{\mc V_{t,t'}}}\leq (1+\epsilon/10)\cdot \frac{\abs{X_t}}{\abs{\mc V_t}}$.
		This implies that
		\[ \frac{\abs{X_t}}{\abs{\mc V_t}} =  \E_{z\sim \mc V_t/\mc V_{t,t'}}\sqbrac{\frac{\abs{X_{t,t',z}}}{\abs{\mc V_{t,t'}}}} \leq (1-p)\cdot (1-4\epsilon/10)\cdot \frac{\abs{X_t}}{\abs{\mc V_t}} + p\cdot (1+\epsilon/10)\cdot \frac{\abs{X_t}}{\abs{\mc V_t}},\]
		which gives $p\geq 4/5$.
	
		Thus, for any $t\in [T],t'\in [T_t]$, by the above, and by using a similar bound for $Y_{t,t',z}$,
		 \[ \Pr_{z\sim \mc V_t/\mc V_{t,t'}}\sqbrac{\frac{\abs{X_{t,t',z}}}{\abs{\mc V_{t,t'}}}\geq (1-4\epsilon/10)\cdot \frac{\abs{X_t}}{\abs{\mc V_t}},\quad \frac{\abs{Y_{t,t',z}}}{\abs{\mc V_{t,t'}}}\geq (1-4\epsilon/10)\cdot \frac{\abs{Y_t}}{\abs{\mc V_t}}} \geq 3/5. \]
		Combining with the earlier bound (Equation~\ref{eqn:conv_lb}), we get
		\begin{align*}
			\sum_{i\in \mc G} \abs{S(X_i,Y_i,Z_i)} &\geq \sum_{t\in [T],t'\in [T_t]}\frac{1}{2}\cdot \frac{\abs{X_t}\abs{Y_t}}{\abs{\mc V_t}^2}\cdot \abs{Z_{t,t'}}\abs{\mc V_{t,t'}} \cdot \brac{\frac{3}{5} \cdot \frac{\abs{\mc V_t}}{\abs{\mc V_{t,t'}}}}
			\\&= \sum_{t\in [T],t'\in [T_t]}\frac{3}{10}\cdot \frac{\abs{X_t}\abs{Y_t}}{\abs{\mc V_t}}\cdot \abs{Z_{t,t'}}
			\\&\geq \sum_{t\in [T]}\frac{3}{10}\cdot \frac{\abs{X_t}\abs{Y_t}}{\abs{\mc V_t}}\cdot \brac{\abs{Z_t}-\eta\alpha\abs{\mc V_t}}
			\\&= \sum_{t\in [T]}\frac{3}{10}\cdot \frac{\abs{X_t}\abs{Y_t}\abs{Z_t}}{\abs{\mc V_t}}  - \eta\alpha \sum_{t\in [T]}\abs{X_t}\abs{Y_t}
			\\&\geq \sum_{t\in [T]}\frac{3}{10}\cdot \frac{\abs{X_t}\abs{Y_t}\abs{Z_t}}{\abs{\mc V_t}}\cdot \ind\sqbrac{\abs{Z_t}\geq \eta\alpha\abs{\mc V_t}}  - \eta\alpha \abs{\mc V}^2.
		\end{align*}
		Now, as $X_t,Y_t$ are $(r_0,\epsilon/10)$-algebraically spread within $x_t+\mc V_t, y_t+\mc V_t$ respectively, by the choice of $r',\epsilon'$, and by Theorem~\ref{thm:algspread_conv}, we have $\abs{S(X_t,Y_t,Z_t)} \leq 1.1 \cdot \frac{\abs{X_t}\abs{Y_t}\abs{Z_t}}{\abs{\mc V_t}}$ whenever $\abs{Z_t}\geq \eta\alpha\abs{\mc V_t} \geq \kappa \abs{\mc V_t}$.
        Hence,
		\begin{align*}
			\sum_{i\in \mc G} \abs{S(X_i,Y_i,Z_i)} &\geq  \sum_{t\in [T]} \frac{3}{10}\cdot \frac{1}{1.1}\cdot  \abs{S(X_t,Y_t,Z_t)} \cdot \ind\sqbrac{\abs{Z_t}\geq \eta\alpha\abs{\mc V_t}}- \eta\alpha\abs{\mc V}^2
			\\&\geq 0.25\cdot \sum_{t\in [T]}\abs{S(X_t,Y_t,Z_t)} - \sum_{t\in [T]}\abs{S(X_t,Y_t,Z_t)}\cdot \ind\sqbrac{\abs{Z_t}<\eta\alpha\abs{\mc V_t}} -\eta\alpha\abs{\mc V}^2
			\\&\geq 0.25\cdot \abs{S(X,Y,Z)}-\eta\alpha\abs{\mc V}^2- 1.1\cdot \sum_{t\in [T]}\abs{X_t}\abs{Y_t}\eta\alpha -\eta\alpha\abs{\mc V}^2.
			\\&\geq 0.25\cdot \abs{S(X,Y,Z)}-3.1\eta\alpha\abs{\mc V}^2
			\\&= (0.25-3.1\eta)\cdot  \abs{S(X,Y,Z)}  \geq 0.1\cdot \abs{S(X,Y,Z)}.
		\end{align*}
		For the middle term $\abs{S(X_t,Y_t,Z_t)}\cdot \ind\sqbrac{\abs{Z_t}<\eta\alpha}$, we used that $X_t,Y_t$ are $(r_0,\epsilon/10)$-algebraically spread within $x_t+\mc V_t, y_t+\mc V_t$ respectively, and also for the purpose of calculating an upper bound assumed that $\abs{Z_t}/\abs{\mc V_t} \geq \kappa$ which can be obtained by possibly adding more elements to $Z_t$.
		\qedhere
	\end{enumerate}
\end{proof}

We write down a simpler version of the above lemma, with the assumption on $r,\epsilon$ removed:

\begin{corollary}\label{corr:one_round_part}
	Let $r\in \N$, $\epsilon,\eta\in (1/10)$, and let $X,Y,Z\subseteq \mc V$ for a linear subspace $\mc V\subseteq \F_2^n$.
	Define $\alpha = \abs{S(X,Y,Z)}/\abs{\mc V}^2$.
	Then, there exists $T\in \N$	, a set $\mc G\subseteq [T]$, and for each $i\in [T]$, a linear subspace $\mc V_i\subseteq \mc V$, points $x_i,y_i\in \mc V/\mc V_i$, and subsets $X_i\subseteq x_i+\mc V_i, Y_i\subseteq y_i+\mc V_i, Z_i\subseteq x_i+y_i+\mc V_i$, such that
	\begin{enumerate}
		\item $\dim(\mc V_i)\geq \dim(\mc V) - O\brac{r\epsilon^{-3}\log_2(1/(\eta\alpha))^6 + \epsilon^{-3}\log_2(1/(\eta\alpha))^{22}}$ for all $i\in [T]$.
		\item $S(X_1,Y_1,Z_1),\dots,S(X_T,Y_T,Z_T)$ are disjoint subsets of $S(X,Y,Z)$ such that 
			\[ \abs{ S(X,Y,Z)\setminus \cup_{i=1}^T S(X_i,Y_i,Z_i)} \leq  \eta\abs{S(X,Y,Z)}. \]
		\item $\abs{S(X_i,Y_i,Z_i)} \geq 2^{-O(\log_2(1/(\eta\alpha))^2)}\cdot \abs{\mc V_i}^2$ for all $i\in [T]$.
		\item For all $i\in \mc G$, the sets $X_i,Y_i,Z_i$ are $(r,\epsilon)$-algebraically spread within $x_i+\mc V_i,y_i+\mc V_i,x_i+y_i+\mc V_i$ respectively.
		\item $\sum_{i\in \mc G} \abs{S(X_i,Y_i,Z_i)} \geq \frac{1}{10}\cdot \abs{S(X,Y,Z)}$.
	\end{enumerate}
\end{corollary}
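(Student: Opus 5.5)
The plan is to derive Corollary~\ref{corr:one_round_part} directly from Lemma~\ref{lemma:one_round_part}. Two things change. The hypothesis $r\geq r'$, $\epsilon\leq \epsilon'$ has to be removed. The error $4\eta$ has to become $\eta$.

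For the error, I will apply Lemma~\ref{lemma:one_round_part} with $\eta/4$ in place of $\eta$; note $\eta/4<1/40<1/50$. The remainder is then at most $4(\eta/4)\abs{S(X,Y,Z)}=\eta\abs{S(X,Y,Z)}$. Since $\log_2(4/(\eta\alpha))=O(\log_2(1/(\eta\alpha)))$ once $\eta\alpha<1/10$, every other bound changes only by constants hidden in the $O(\cdot)$.

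For the parameters, set $\tilde r=\max(r,r')$ and $\tilde\epsilon=\min(\epsilon,\epsilon')$. Here $r'=O(\log_2(1/(\eta\alpha))^{16})$ and $\epsilon'=\Omega(1)$ come from Lemma~\ref{lemma:one_round_part} applied with parameter $\eta/4$. The pair $\tilde r,\tilde\epsilon$ satisfies the hypotheses, so the lemma yields pieces $(\mc V_i,x_i,y_i,X_i,Y_i,Z_i)$ and a set $\mc G$.

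Spreadness is monotone: $(\tilde r,\tilde\epsilon)$-spread implies $(r,\epsilon)$-spread. Indeed, fewer subspaces need to be tested when $r\leq\tilde r$, and the density bound $(1+\tilde\epsilon)\leq(1+\epsilon)$ is weaker. Items 2 through 5 therefore carry over verbatim. Item 3 is unaffected because its bound does not depend on $r$ or $\epsilon$.

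For item 1, the codimension is $O(\tilde r\tilde\epsilon^{-3}\log_2(1/(\eta\alpha))^6)$. Since $\tilde\epsilon^{-3}\leq \epsilon^{-3}+\epsilon'^{-3}=O(\epsilon^{-3})$ (as $\epsilon<1/10$ and $\epsilon'=\Omega(1)$) and $\tilde r\leq r+r'$, this is
\[O\brac{r\epsilon^{-3}\log_2(1/(\eta\alpha))^6+\epsilon^{-3}\log_2(1/(\eta\alpha))^{22}},\]
as claimed.

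There is no real obstacle. The only care needed is in two places. First, the constants in $r'$ must be computed with $\eta/4$ rather than $\eta$. Second, the monotonicity of algebraic spreadness in both parameters has to be checked against Definition~\ref{defn:alg_spread}.
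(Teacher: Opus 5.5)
Your route is the paper's: apply Lemma~\ref{lemma:one_round_part} with a shrunken $\eta$ and with $r,\epsilon$ pushed past the thresholds $r',\epsilon'$, then use monotonicity of algebraic spreadness. Your $\max(r,r')$, $\min(\epsilon,\epsilon')$ is interchangeable with the paper's $r+r'$, $\epsilon\epsilon'$, and your bookkeeping for item~1 (including $\tilde\epsilon^{-3}=O(\epsilon^{-3})$ and $\tilde r\le r+r'$) is correct.

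There is one arithmetic slip in the hypothesis check. You claim $\eta/4<1/40<1/50$, but $1/40>1/50$. From $\eta<1/10$ you only get $\eta/4<1/40$. So for $\eta\in[2/25,1/10)$, the parameter $\eta/4$ violates the hypothesis $\eta\in(0,1/50)$ of Lemma~\ref{lemma:one_round_part}.

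The fix is what the paper does: apply the lemma with $\eta/5<1/50$. The remainder is then at most $4\eta/5\le\eta$. All other bounds change only by constants inside the $O(\cdot)$, since $\log_2(5/(\eta\alpha))=O(\log_2(1/(\eta\alpha)))$ when $\eta\alpha<1/10$.

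Alternatively, you could check that the lemma's proof only uses the restriction on $\eta$ in its final step $(0.25-3.1\eta)\ge 0.1$, which also holds for $\eta<1/40$. As written, though, you invoke the lemma outside its stated hypotheses.
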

\begin{proof}
	We plug in the parameter $\eta/5$ in Lemma~\ref{lemma:one_round_part} to get $r' = O(\log_2(1/(\eta\alpha))^{16})$, and $\epsilon' = \Omega(1)$.
	Then, we use the lemma with parameters $r_0 = r+r'$ and $\epsilon_0 = \epsilon \cdot \epsilon'$, noting that any set which is $(r_0,\epsilon_0)$-algebraically spread is also $(r,\epsilon)$-algebraically spread.
\end{proof}

Finally, we complete the main result of this section:
\begin{proof}[Proof of Proposition~\ref{prop:uniformization}]
We perform the following recursive process, given sets $X,Y,Z\subseteq \mc V$:
Let $T\in \N$, $\mc G\subseteq [T]$, and subsets $X_i\subseteq x_i+\mc V_i, Y_i\subseteq y_i+\mc V_i, Z_i\subseteq x_i+y_i+\mc V_i$ be as in Corollary~\ref{corr:one_round_part}, with parameters $r,\frac{\epsilon}{10},\frac{\eta^2}{100}$.
Now, for each $i\in [T]\setminus \mc G$, we invoke Corollary~\ref{corr:one_round_part} recursively on $X_i,Y_i,Z_i$,\footnote{Formally, we shift the sets so they all in the linear subspace $\mc V_i$, then apply the corollary and shift back.} with parameters $r,\frac{\epsilon}{10},\frac{\eta^2}{100}$.
We terminate when the recursion depth is $L = \lceil20\log_2(1/\eta)\rceil$, and throw away all the remaining pieces at this point.

Note that for any $X_i,Y_i,Z_i$ lying in affine shifts of a linear subspace $\mc V_i\subseteq \mc V$, that the algorithm sees at some point but does not throw away, it holds that 
\[ \frac{\abs{S(X_i,Y_i,Z_i)}}{\abs{\mc V_i}^2} \geq 2^{-(\log_2(4/\alpha))^{\poly(1/\eta)}}.\]
This follows by Corollary~\ref{corr:one_round_part}, as the measure reduces from $\beta$ to $2^{-O(\log_2(1/(\eta\beta))^2)}$ in a single step, and the number of steps is $L=O(\log_2(1/\eta))$.
With this observation, we verify each of the conclusions in the statement of the proposition:
\begin{enumerate}
	\item At each step, the dimension reduces by at most $r\epsilon^{-3}\log_2(4/\alpha)^{\poly(1/\eta)}$, and the number of steps is $L=O(\log_2(1/\eta))$.
	\item The disjointness of the pieces follows by construction.
	
	At each recursive layer, the total size of pieces thrown out is at most $\frac{\eta^2}{100}\cdot \abs{S(X,Y,Z)}$.
	
	Additionally, at any layer the total size of the pieces being recursed on is at most $9/10$ times the total size of pieces in the previous layer; in particular, the total size of pieces at layer $L$ is at most $(9/10)^L\cdot \abs{S(X,Y,Z)} \leq \eta^2\abs{S(X,Y,Z)}$.	Hence, the total fraction of pieces thrown out is at most
	\[ \ \frac{\eta^2}{100} \cdot 21\log_2(1/\eta) + \eta^2 \leq \eta.\]
	\item The size bound for $X_i,Y_i,Z_i$ is as above, and spreadness follows by Corollary~\ref{corr:one_round_part}. \qedhere
\end{enumerate}
\end{proof} 
\section{Uniform Square Covers}\label{sec:unif_sq_covers}

In this section, we show that any \emph{diagonal-product} set (see Definition~\ref{defn:diag_prod}) composed of algebraically spread sets (see Definition~\ref{defn:alg_spread}) is \emph{uniformly covered} by \emph{squares} (see Definition~\ref{defn:square}).
In Section~\ref{sec:comb_spread}, we introduce a combinatorial notion of spreadness that will be useful for us; in Section~\ref{sec:algspread_imp_combspread}, we show that algebraically spread sets satisfy this definition (in a specific way); finally, in Section~\ref{sec:sq_cover}, we prove Proposition~\ref{prop:sq_cover}, our square covering result.

\subsection{Combinatorial Spreadness and Graph Counts}\label{sec:comb_spread}

We state some useful definitions from~\cite{KLM24}:

\begin{definition}[Combinatorial Spreadness]
Let $X, Y$ be finite sets, and let $f:X\times Y\to \set{0,1}$.
We say that $f$ is $(r,\epsilon)$-combinatorially spread if for every $S\subseteq X,\ T\subseteq Y$ satisfying $\abs{S\times T}\geq 2^{-r}\abs{X\times Y}$, it holds that
\[ \E_{(x,y)\sim S\times T}[f(x,y)] \leq (1+\epsilon)\E[f],\]
where $\E[f] = \E_{(x,y)\sim X\times Y}[f(x,y)]$.
\end{definition}

\begin{definition}[Lower Bounded Left-Marginals]
	Let $X, Y$ be finite sets, and let $f:X\times Y\to \set{0,1}$.
	We say that $f$ has $(r,\epsilon)$-lower bounded left-marginals if
	\[ \Pr_{x\sim X}\sqbrac{ \E_{y\sim Y}[f(x,y)] \leq (1-\epsilon)\cdot \E[f] } \leq 2^{-r} .\]
\end{definition}

We state a result about graph counts in spread sets:

\begin{theorem}[{\cite[Theorem 2.1]{FHHK24}}]\label{thm:graph_count}
For any $k\in \N,\ \epsilon \in (0,1)$, there exists sufficiently small $\gamma = \gamma(\epsilon, k) > 0$, and sufficiently large $C = C(\epsilon, k) \in \N$, such that the following holds:

Let $H = ([k], E)$ be an oriented graph, where $(i,j)\in E$ implies $i<j$.
Let $X_1,\dots,X_k$ be finite sets; let $d\geq 1$, and for each $(i,j)\in E$, let $f_{ij}:X_i\times X_j\to \set{0,1}$ be a function satisfying:
\begin{enumerate}
	\item $\E[f_{ij}] \geq 2^{-d}$,
	\item $f_{ij}$ is $(Cd^2,\gamma)$-combinatorially spread, and
	\item $f_{ij}$ has $(Cd, \gamma)$-lower bounded left-marginals.
\end{enumerate}
Then, it holds that
\[ \abs{ \E_{x_1\sim X_1,\dots,x_k\sim X_k}\sqbrac{\prod_{(i,j)\in E}f_{ij}(x_i,x_j)}- \prod_{(i,j)\in E} \E[f_{ij}]} \leq \epsilon\cdot \prod_{(i,j)\in E} \E[f_{ij}] .\]
\end{theorem}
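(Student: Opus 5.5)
This is \cite[Theorem 2.1]{FHHK24}, and I would prove it by induction on $k$, peeling off the last vertex. Since every edge $(i,j)\in E$ has $i<j$, vertex $k$ has only in-edges, from some set $N=N(k)\subseteq[k-1]$. Write
\[ g(x_N)=\E_{x_k}\Bigl[\prod_{i\in N}f_{ik}(x_i,x_k)\Bigr], \]
and let $\mu$ be the distribution on $(x_1,\dots,x_{k-1})$ weighted by $\prod_{(i,j)\in E,\,j<k}f_{ij}$. Then the count is (count of $H-k$) $\cdot \E_\mu[g]$. By induction the first factor is $(1\pm\epsilon/2)\prod_{j<k}\E[f_{ij}]$. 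So it suffices to show $\E_\mu[g]=(1\pm\epsilon/2)\prod_{i\in N}\E[f_{ik}]$.

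To make the induction close, I would strengthen the hypothesis. The claim should hold after each $X_i$ is replaced by an arbitrary subset $S_i\subseteq X_i$ of density at least $2^{-c d}$, for a small constant $c=c(k)$. Combinatorial spreadness with parameter $Cd^2$ survives such restrictions with only a small loss in $\gamma$, because a rectangle inside $S_i\times S_j$ of relative density $2^{-r'}$ has absolute density at least $2^{-r'-2cd}$. Lower-bounded left-marginals survive up to a $2^{-Cd+2cd}$ fraction of exceptional $x_i$.

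\textbf{Upper bound.} I would show $\E_\mu[g]\le(1+\epsilon/2)\prod_{i\in N}\E[f_{ik}]$ by contradiction, via a density-increment argument in the spirit of Kelley--Meka. If $\E_\mu[g]$ is too large, then by a dyadic pigeonhole on the values of $f_{i k}(x_i,\cdot)$ for one $i\in N$ at a time, we find sets $S\subseteq X_i$ and $T\subseteq X_k$ with the following properties:
\begin{itemize}
\item $\abs{S\times T}\ge 2^{-O(d^2)}\abs{X_i\times X_k}$, where the quadratic loss comes from the $\le k$ levels of pigeonholing, each over densities $\ge 2^{-d}$;
\item $f_{ik}$ has density $>(1+\gamma)\E[f_{ik}]$ on $S\times T$.
\end{itemize}
Here $T$ is taken to be a common neighborhood of the other coordinates of a typical heavy tuple. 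This contradicts $(Cd^2,\gamma)$-combinatorial spreadness once $C$ is large.

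\textbf{Lower bound.} I would process the neighbors $i\in N$ one at a time, restricting $X_k$ successively to the neighborhoods $\{x_k: f_{ik}(x_i,x_k)=1\}$. Left-marginals say that all but a $2^{-Cd}$ fraction of $x_i$ have $f_{ik}$-degree at least $(1-\gamma)\E[f_{ik}]$. After restricting $X_k$ to the common neighborhood built so far, which has density $\ge 2^{-kd}$, spreadness applied to $(X_i,\text{current set})$ shows that the average degree into this set does not drop by more than a factor $1-O(\gamma)$. The exceptional low-degree $x_i$ carry $\mu$-mass at most $2^{-Cd}\cdot 2^{O(kd)}$, using the upper bound already proved for the reweighting. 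This is negligible once $C\gg k$.

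The main obstacle is the upper bound, because $\mu$ is not a product measure. I would handle it with the strengthened (restricted-subset) induction hypothesis: it controls $\mu$ conditioned on rectangles, so the heavy tuples can be converted into a product rectangle $S\times T$ with only $2^{-O(d^2)}$ density loss. This is also why the spreadness parameter must be $Cd^2$ rather than $Cd$.
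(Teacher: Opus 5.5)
The paper gives no proof of this statement. It is quoted from \cite[Theorem 2.1]{FHHK24} and used as a black box in Lemma~\ref{lemma:spread_imp_l1_l2}, so your proposal has to stand on its own, and it has a genuine gap in the lower bound. The cause is that both hypotheses are one-sided. Combinatorial spreadness only gives \emph{upper} bounds on densities of large rectangles, and the left-marginal condition only controls degrees into all of $X_j$. Your strengthened induction hypothesis and your lower-bound step both need \emph{lower} bounds on densities of small rectangles, and these do not follow.

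Here is a counterexample. Take $k=2$, $X_1=X_2=[N]$, and a set $Z\subseteq X_2$ of density $\gamma/4$. Let $f_{12}\equiv 0$ on $X_1\times Z$, and let $f_{12}$ be random of density $p/(1-\gamma/4)$ on $X_1\times(X_2\setminus Z)$. For $N$ large:
\begin{itemize}
\item $\E[f_{12}]\approx p$;
\item every rectangle of measure at least $2^{-Cd^2}$ has density at most $(1+\gamma/3+o(1))\E[f_{12}]$;
\item every $x_1$ has degree $\approx p$.
\end{itemize}
So all three hypotheses hold, yet $f_{12}$ vanishes on $X_1\times Z$.

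This has two consequences for your argument.
\begin{itemize}
\item Your restricted-subset version of the theorem is already false for a single edge: take $S_2=Z$, whose density $\gamma/4$ exceeds $2^{-cd}$ for large $d$. Neither spreadness nor lower-bounded left-marginals survives restricting the right-hand side.
\item In your lower bound, spreadness applied to $(X_i,\text{current set})$ cannot stop the degree into the current common neighbourhood $T$ from dropping. An upper bound on $X_i\times(X_k\setminus T)$ plus the known total only gives $\E_{X_i\times T}[f_{ik}]\ge(1+\gamma-O(\gamma/\alpha_T))\E[f_{ik}]$. This is vacuous, since $\alpha_T$ may be as small as $2^{-kd}\ll\gamma$, and for individual tuples $T$ can indeed sit inside a region like $Z$.
\end{itemize}

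The correct lower bound is only an \emph{averaged} statement. You must show that the common neighbourhoods, weighted by $\mu$, do not concentrate on the low right-degree parts of the $f_{ik}$. This is the real content of the theorem, and your proposal offers no mechanism for it. Some global argument is needed. One candidate is moment bounds of order about $d$ on normalized degree functions: the upper tail controlled by spreadness on rectangles of measure $2^{-O(d^2)}$, the lower tail by the $(Cd,\gamma)$ left-marginal condition, and the result transferred to test sets of density $2^{-O(d)}$ by H\"older. An argument of this kind would explain why the two hypotheses enter at the different scales $Cd^2$ and $Cd$.

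Your upper bound, by contrast, is repairable, though as written it is too vague to check. Two points are off in the sketch. First, the excess need not be witnessed by an edge at $k$: for the triangle it is $f_{12}$ on the product of the $f_{13}$- and $f_{23}$-neighbourhoods of $x_3$. Second, these neighbourhoods have density about $\E[f_{ik}]$, possibly $2^{-d}$, not $2^{-cd}$. A workable route is to peel $k$ in the count itself rather than through the normalized measure $\mu$. Restrict each $X_i$, $i\in N$, to its intersection with the $f_{ik}$-neighbourhood of $x_k$, and cap right-degrees using spreadness. By induction this proves a one-sided restricted \emph{upper} bound. The lower bound is what is missing.
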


\subsection{Algebraic Spreadness implies Graph Counts}\label{sec:algspread_imp_combspread}

We show that a certain function corresponding to algebraically spread sets is both combinatorially spread and has lower bounded left-marginals.

\begin{lemma}\label{lemma:algspread_imp_combspread}
	Let $d,r\geq 1,\epsilon\in (0,1/4)$.
	Then, there exists a sufficiently large integer $s = (r^8+d^8)\cdot \epsilon^{-O(1)}$, and a sufficiently small $\delta = \Omega(\epsilon)$, such that the following holds:
	
	Let $X,Y,Z \subseteq \F_2^n$ be subsets each of density at least $2^{-d}$, and such that $Y,Z$ are $(s,\delta)$-algebraically spread.	
	Then, the function $f:X\times Y\to \set{0,1}$, given by $f(x,y)=\ind\sqbrac{x+y\in Z}$, satisfies:
	\begin{enumerate}
		\item $ \abs{\E[f] - \alpha_Z} \leq \epsilon\cdot \alpha_Z$, where $\alpha_Z = \abs{Z}/2^n$,
		\item $f$ is $(r,\epsilon)$-combinatorially spread, and
		\item $f$ has $(r, \epsilon)$-lower bounded left-marginals.
	\end{enumerate}
\end{lemma}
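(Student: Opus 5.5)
The plan is to reduce all three conclusions to instances of Theorem~\ref{thm:algspread_conv}. The key identity is that for any $S\subseteq \F_2^n$, $T\subseteq \F_2^n$,
\[ \E_{(x,y)\sim S\times T}\sqbrac{\ind[x+y\in Z]} = \frac{\abs{\set{(x,y)\in S\times T: x+y\in Z}}}{\abs{S}\abs{T}} = \alpha_Z\cdot\ip{\varphi_S*\varphi_T,\ \varphi_Z}, \]
by Fact~\ref{fact:diag_prod_size}. We choose $s$ and $\delta$ as given by Theorem~\ref{thm:algspread_conv} with density parameter $d' = d + r + O(\log(1/\epsilon))$ and error $\epsilon/2$. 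This gives $s = (d')^8\epsilon^{-O(1)} = (r^8+d^8)\epsilon^{-O(1)}$ and $\delta=\Omega(\epsilon)$.

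\textbf{Item 1.} Take $S=X$ and $T=Y$. Both $Y$ and $Z$ are spread and all three sets have density at least $2^{-d}$, so part~\ref{item:km_two_spread} of Theorem~\ref{thm:algspread_conv} gives $\abs{\ip{\varphi_X*\varphi_Y,\varphi_Z}-1}\le\epsilon/2$. Hence $\abs{\E[f]-\alpha_Z}\le \epsilon\alpha_Z/2$.

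\textbf{Item 2.} Let $S\subseteq X$ and $T\subseteq Y$ with $\abs{S}\abs{T}\ge 2^{-r}\abs{X}\abs{Y}$. Then $S$ and $T$ each have density at least $2^{-d-r}$ in $\F_2^n$. $Z$ is spread, so part~\ref{item:km_one_spread} gives $\ip{\varphi_S*\varphi_T,\varphi_Z}\le 1+\epsilon/2$. This yields $\E_{S\times T}[f]\le(1+\epsilon/2)\alpha_Z$. Combining with item~1, which gives $\alpha_Z\le \E[f]/(1-\epsilon/2)$, we get at most $(1+\epsilon)\E[f]$ after shrinking constants.

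\textbf{Item 3.} This is the step needing care. Let $B=\set{x\in X: \E_{y\sim Y}[f(x,y)]\le(1-\epsilon)\E[f]}$, and suppose for contradiction that $\abs{B}> 2^{-r}\abs{X}$. Then $B$ has density at least $2^{-d-r}$. Since $Y$ and $Z$ are both spread, part~\ref{item:km_two_spread} applied to $(B,Y,Z)$ gives $\E_{B\times Y}[f]=\alpha_Z\ip{\varphi_B*\varphi_Y,\varphi_Z}\ge (1-\epsilon/2)\alpha_Z$. By item~1 this is more than $(1-\epsilon)\E[f]$ once the constant in $\delta$ is adjusted; for example, run the theorem with error $\epsilon/10$ so that $(1-\epsilon/10)/(1+\epsilon/10)>1-\epsilon$. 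But averaging the defining inequality of $B$ over $x\in B$ gives $\E_{B\times Y}[f]\le(1-\epsilon)\E[f]$, a contradiction.

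The main obstacle is only bookkeeping. Theorem~\ref{thm:algspread_conv} must be invoked with a density parameter covering subsets of density as small as $2^{-(d+r)}$, and this is exactly what forces $s$ to scale like $(r+d)^8$. The error parameters have to be tuned so that the multiplicative slack from item~1 combines with those in items~2 and~3 to stay within $\epsilon$. No spreadness of $X$ is needed, which is consistent with the hypotheses.
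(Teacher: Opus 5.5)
Your proposal is correct and matches the paper's proof essentially step for step. You invoke Theorem~\ref{thm:algspread_conv} with density parameter of order $d+r$, so that the subsets $S,T,B$ of relative density $2^{-r}$ are still covered. You then use part~\ref{item:km_two_spread} for items 1 and 3 and part~\ref{item:km_one_spread} for item 2, and convert between $\alpha_Z$ and $\E[f]$ via item 1.

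The one bookkeeping slip is your initial error $\epsilon/2$: it is too coarse for item 2, since $(1+\epsilon/2)/(1-\epsilon/2)>1+\epsilon$. Fix a single smaller error from the start, such as $\epsilon/4$ (as the paper does) or your $\epsilon/10$.
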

\begin{proof}
	Let $s,\delta$ be such that Theorem~\ref{thm:algspread_conv} holds with parameters $r+d, \epsilon/4$.
	
	First, we have
	\[ \E[f] = \Pr_{x\sim X, y\sim Y}[x+y\in Z] =\ip{\varphi_X *\varphi_Y, \ind_Z}  = \alpha_Z\cdot \ip{\varphi_X *\varphi_Y, \varphi_Z}. \]
	Hence, by Theorem~\ref{thm:algspread_conv}\ref{item:km_two_spread}, it holds that
	\[ \abs{\E[f]-\alpha_Z} \leq \frac{\epsilon}{4}\cdot \alpha_Z .\]
	
	Next, we show combinatorial spreadness.
	Let $S\subseteq X,\ T\subseteq Y$ be such that $\abs{S\times T} \geq 2^{-r}\abs{X\times Y}$.
	We can write
	\[ \E_{(x,y)\sim S\times T}[f(x,y)] = \alpha_Z\cdot \ip{\varphi_S *\varphi_T, \varphi_Z}.\]
	Since $\abs{X} \geq 2^{-d}\cdot 2^n$, we have $\abs{S}\cdot \abs{Y} \geq \abs{S\times T} \geq 2^{-r}\cdot \abs{X\times Y} \geq 2^{-r}\cdot 2^{n-d} \cdot \abs{Y}$, and so $\abs{S}\geq 2^{-(r+d)}\cdot 2^n$.
    Similarly, $\abs{T} \geq 2^{-(r+d)}\cdot 2^n$, and hence, by Theorem~\ref{thm:algspread_conv}\ref{item:km_one_spread}, we get
	\[ \E_{(x,y)\sim S\times T}[f(x,y)] \leq \brac{1+\frac{\epsilon}{4}}\alpha_Z \leq \brac{1+\frac{\epsilon}{4}}\cdot \brac{1-\frac{\epsilon}{4}}^{-1}\E[f] \leq (1+\epsilon)\cdot \E[f].\]
	
	Finally, we show lower bounded left-marginals.
	Define $A\subseteq X$ by \[ A = \set{x\in X: \E_{y\sim Y}[f(x,y)] \leq (1-\epsilon)\cdot \E[f] }. \]
	Suppose, for the sake of contradiction, that $|A| > 2^{-r} |X|$; then, $|A|\geq 2^{-(r+d)}\cdot 2^n$.
	By the definition of $A$, we have
	\[ \E_{x\sim A, y\sim Y}[f(x,y)] \leq (1-\epsilon)\E[f].\]
	On the other hand, by Theorem~\ref{thm:algspread_conv}\ref{item:km_two_spread}, we have
	\[ \E_{x\sim A, y\sim Y}[f(x,y)] = \alpha_Z\cdot \ip{\varphi_A *\varphi_Y, \varphi_Z} \geq  \brac{1-\frac{\epsilon}{4}}\alpha_Z \geq \brac{1-\frac{\epsilon}{4}}\cdot \brac{1+\frac{\epsilon}{4}}^{-1}\E[f] > (1-\epsilon)\E[f],\]
	which is a contradiction.
\end{proof}

\subsection{Squares inside Spread Diagonal-Product Sets}\label{sec:sq_cover}

In this subsection, we prove our square covering result.
We start by proving the following counting lemma:

\begin{lemma}\label{lemma:spread_imp_l1_l2}
	Let $\epsilon\in (0,1/4)$ be a constant, and let $d\geq 1$.
	Then, there exists a sufficiently large integer $r = O_{\epsilon}(d^{16})$ and sufficiently small $\delta=\delta(\epsilon)>0$ such that the following holds:

	Let $X,Y,Z\subseteq \F_2^n$ be each of density at least $2^{-d}$, and such that all of $X,Y,Z$ are $(r,\delta)$-algebraically spread.
	Let $S = S(X,Y,Z)$ be as in Definition~\ref{defn:diag_prod}, and let $\Gamma:\F_2^n\times \F_2^n\to \R$ be the function mapping $(x,y)$ to the normalized number of squares (see Definition~\ref{defn:square}) in $S$ containing $(x,y)$, i.e., 
	\[ \Gamma(x,y) = \E_{w\sim \F_2^n}\sqbrac{ \ind\sqbrac{s_{x,y,w}\subseteq S}}. \]
	Define $\alpha_X=\E[\ind_X], \alpha_Y=\E[\ind_Y], \alpha_Z=\E[\ind_Z]$.
	Then, it holds that
	\begin{enumerate}
		\item $\abs{\abs{S}\cdot 2^{-2n}-\alpha_X\alpha_Y\alpha_Z} \leq \epsilon\cdot \alpha_X\alpha_Y\alpha_Z$. 
		\item $\abs{\Norm{\Gamma}_1-\alpha_X^2\alpha_Y^2\alpha_Z^2} \leq \epsilon\cdot \alpha_X^2\alpha_Y^2\alpha_Z^2$.
		\item $\abs{\Norm{\Gamma}_2^2-\alpha_X^3\alpha_Y^3\alpha_Z^3} \leq \epsilon\cdot \alpha_X^3\alpha_Y^3\alpha_Z^3$. 
	\end{enumerate}	
\end{lemma}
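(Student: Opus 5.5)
The plan is to rewrite each of the three quantities as a normalized count of a small graph, in which every vertex ranges over one of $X,Y,Z$ and every edge is an indicator of the form $\ind[u+v\in W]$. Lemma~\ref{lemma:algspread_imp_combspread} then supplies the hypotheses of Theorem~\ref{thm:graph_count}, and that theorem gives the counts. Item 1 needs no graph counting. By Fact~\ref{fact:diag_prod_size}, $\abs{S}2^{-2n}=\alpha_X\alpha_Y\alpha_Z\ip{\varphi_X*\varphi_Y,\varphi_Z}$, and Theorem~\ref{thm:algspread_conv}\ref{item:km_two_spread} applies because at least two of the sets are spread.

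For item 2, expand the definitions. We have
\[ \Norm{\Gamma}_1=\E_{x,y,w}\left[\ind_X(x)\ind_X(x+w)\ind_Y(y)\ind_Y(y+w)\ind_Z(x+y)\ind_Z(x+y+w)\right]. \]
The key step is the change of variables $a=x$, $b=y$, $c=x+y+w$. This map is a bijection of $(\F_2^n)^3$, so $a,b,c$ are independent and uniform. It gives $x+w=b+c$ and $y+w=a+c$, so the integrand becomes
\[ \ind_X(a)\ind_Y(b)\ind_Z(c)\cdot\ind[a+b\in Z]\,\ind[b+c\in X]\,\ind[a+c\in Y]. \]
Thus $\Norm{\Gamma}_1=\alpha_X\alpha_Y\alpha_Z\cdot\E_{a\sim X,b\sim Y,c\sim Z}[f_{12}(a,b)f_{23}(b,c)f_{13}(a,c)]$. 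This is a triangle count on vertex sets $X_1=X$, $X_2=Y$, $X_3=Z$, with edge functions $f_{12}(a,b)=\ind[a+b\in Z]$, $f_{13}(a,c)=\ind[a+c\in Y]$ and $f_{23}(b,c)=\ind[b+c\in X]$.

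Each edge function has exactly the shape treated in Lemma~\ref{lemma:algspread_imp_combspread}: its domain is a product of two of the sets, and the third set is the target. Since all of $X,Y,Z$ are spread, the lemma's requirement that the second domain set and the target set be spread always holds. The lemma's conditions are stated for the first argument, which is what the orientation $i<j$ in Theorem~\ref{thm:graph_count} needs. The lemma then gives $\E[f_{12}]\approx\alpha_Z$, $\E[f_{13}]\approx\alpha_Y$ and $\E[f_{23}]\approx\alpha_X$, each within a factor $1\pm\epsilon'$. It also gives combinatorial spreadness and lower bounded left-marginals with any desired parameter $r'$. Every edge then has density at least $2^{-(d+1)}$. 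Take $\epsilon'$ small and $r'=C(d+1)^2$, where $C$ and $\gamma$ come from Theorem~\ref{thm:graph_count} with $k=4$. Then Theorem~\ref{thm:graph_count} makes the triangle count equal to $\prod\E[f_{ij}]$ up to a factor $1\pm\epsilon'$, so $\Norm{\Gamma}_1=(1\pm O(\epsilon'))\alpha_X^2\alpha_Y^2\alpha_Z^2$. To reach error $\epsilon'$ and parameter $r'$, Lemma~\ref{lemma:algspread_imp_combspread} needs algebraic spreadness with $s=(r'^8+d^8)\epsilon'^{-O(1)}=O_\epsilon(d^{16})$ and $\delta=\Omega(\epsilon')$. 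This matches the claimed $r=O_\epsilon(d^{16})$.

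For item 3, write $\Norm{\Gamma}_2^2=\E_{x,y,w,w'}[\ind[s_{x,y,w}\subseteq S]\,\ind[s_{x,y,w'}\subseteq S]]$. Apply the same substitution $a=x$, $b=y$, $c=x+y+w$, $c'=x+y+w'$, which is again a bijection. The integrand becomes the three vertex weights $\ind_X(a)\ind_Y(b)\ind_Z(c)\ind_Z(c')$ times the edge indicators $\ind[a+b\in Z]$, $\ind[b+c\in X]$, $\ind[a+c\in Y]$, $\ind[b+c'\in X]$, $\ind[a+c'\in Y]$. The graph is two triangles sharing the edge $ab$, that is, $K_4$ minus the edge $cc'$, with vertex order $a<b<c<c'$. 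The same application of Theorem~\ref{thm:graph_count} with $k=4$ gives $\alpha_X\alpha_Y\alpha_Z^2\cdot\alpha_Z\alpha_X^2\alpha_Y^2(1\pm\epsilon)=\alpha_X^3\alpha_Y^3\alpha_Z^3(1\pm\epsilon)$.

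I expect the main obstacle to be finding the linear reparametrizations. The naive variables $x,x+w,y,y+w$ obey the linear constraint $x+(x+w)=y+(y+w)$, so they are not independent, and Theorem~\ref{thm:graph_count} only handles independent vertices. The substitution $c=x+y+w$ removes the constraint, turning each square into a triangle and each pair of squares into $K_4$ minus an edge. After that the remaining work is bookkeeping: tracking how the $(1\pm\epsilon')$ errors compound, and choosing $r,\delta$ so that Lemma~\ref{lemma:algspread_imp_combspread} delivers the $(Cd^2,\gamma)$ hypotheses of Theorem~\ref{thm:graph_count}.
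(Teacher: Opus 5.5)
Your proposal is correct and follows essentially the same route as the paper. The paper makes the same substitution $w=x+y+z$ (your $c=x+y+w$), which turns $\Norm{\Gamma}_1$ into a triangle count and $\Norm{\Gamma}_2^2$ into a count of $K_4$ minus an edge, with edge functions $\ind[u+v\in W]$; it then verifies the hypotheses of Theorem~\ref{thm:graph_count} via Lemma~\ref{lemma:algspread_imp_combspread}. The differences are cosmetic: you get item 1 directly from Theorem~\ref{thm:algspread_conv}\ref{item:km_two_spread} rather than from $\abs{S}2^{-2n}=\alpha_X\alpha_Y\E[f]$, which is the same computation, and your $d+1$ adjustment for the edge densities is a small piece of care the paper glosses over.
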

\begin{proof}
	Let $f:X\times Y\to \set{0,1},\ g:Y\times Z\to \set{0,1},\ h:X\times Z\to \set{0,1} $ be given by \[ f(x,y)=\ind[x+y\in Z],\quad g(y,z)=\ind[y+z\in X], \quad h(x,z)=\ind[x+z\in Y]. \]
	Let $\gamma=\gamma(\frac{\epsilon}{10}, 4)\leq \frac{\epsilon}{10},\ C=C(\frac{\epsilon}{10}, 4)$ be as in Theorem~\ref{thm:graph_count}.
	Then, by Lemma~\ref{lemma:algspread_imp_combspread} and the choice of parameters $(r,\delta)$, we may assume that $f$ (and similarly $g,h$) is such that $\abs{\E[f]-\alpha_Z}\leq \gamma\alpha_Z \leq \frac{\epsilon}{10}\alpha_Z$, and $f$ is $(Cd^2,\gamma)$-combinatorially spread, and $f$ has $(Cd, \gamma)$-lower bounded left-marginals.
	
	First, observe that 
	\[ |S|\cdot 2^{-2n} = \E_{x,y\sim \F_2^n}\sqbrac{\ind\sqbrac{x\in X, y\in Y, x+y\in Z}} = \alpha_X\alpha_Y \Pr_{x\sim X, y\sim Y}[x+y\in Z] = \alpha_X\alpha_Y\E[f]. \]
	The result now follows, as $\abs{\E[f]-\alpha_Z}\leq \gamma\alpha_Z \leq \frac{\epsilon}{10} \alpha_Z$.
	
	Next, we note that 
	\begin{align*}
		\Norm{\Gamma}_1 &= \E_{x,y,w\sim \F_2^n}\sqbrac{\ind\sqbrac{s_{x,y,w}\subseteq S}} 
		\\&= \E_{x,y,w\sim \F_2^n}\sqbrac{\ind_X(x)\ind_X(x+w)\ind_Y(y)\ind_Y(y+w)\ind_Z(x+y)\ind_Z(x+y+w)}
		\\&= \E_{x,y,z\sim \F_2^n}\sqbrac{\ind_X(x)\ind_X(y+z)\ind_Y(y)\ind_Y(x+z)\ind_Z(x+y)\ind_Z(z)} \quad (\text{replace } w= x+y+z)
		\\&= \alpha_X\alpha_Y\alpha_Z \E_{x\sim X, y\sim Y, z\sim Z}\sqbrac{f(x,y)g(y,z)h(x,z)}.
	\end{align*}
	Now, by Theorem~\ref{thm:graph_count} (with the oriented graph $\set{(1,2),(1,3), (2,3)}$, with $1,2,3$ labelled by $X,Y,Z$ respectively), we get
	\begin{align*}
		(1-\epsilon)\cdot \alpha_X\alpha_Y\alpha_Z &\leq \brac{1-\frac{\epsilon}{10}}^4 \cdot \alpha_X\alpha_Y\alpha_Z
		\\&\leq \brac{1-\frac{\epsilon}{10}}\cdot \E[f]\E[g]\E[h]
		\\& \leq \E_{x\sim X, y\sim Y, z\sim Z}\sqbrac{f(x,y)g(y,z)h(x,z)}
		\\& \leq \brac{1+\frac{\epsilon}{10}}\cdot \E[f]\E[g]\E[h]
		\\&\leq \brac{1+\frac{\epsilon}{10}}^4 \cdot \alpha_X\alpha_Y\alpha_Z \leq (1+\epsilon)\cdot \alpha_X\alpha_Y\alpha_Z,
	\end{align*} 
	and hence 
	\[ \abs{\Norm{\Gamma}_1-\alpha_X^2\alpha_Y^2\alpha_Z^2} \leq \epsilon\cdot \alpha_X^2\alpha_Y^2\alpha_Z^2.\]
	
	Finally, we can write
	\begin{align*}
		\Norm{\Gamma}_2^2 &= \E_{x,y,w,w'\sim \F_2^n}\sqbrac{\ind\sqbrac{s_{x,y,w}\subseteq S, s_{x,y,w'}\subseteq S}} 
		\\&= \E_{x,y,z,z'\sim \F_2^n}\sqbrac{\ind\sqbrac{s_{x,y,x+y+z}\subseteq S, s_{x,y,x+y+z'}\subseteq S}} \quad (\text{replace } w= x+y+z, w'=x+y+z')
		\\&= \E_{x,y,z,z'\sim \F_2^n}\sqbrac{\ind_X(x)\ind_X(y+z)\ind_X(y+z')\ind_Y(y)\ind_Y(x+z)\ind_Y(x+z')\ind_Z(x+y)\ind_Z(z)\ind_Z(z')} 
		\\&= \alpha_X\alpha_Y\alpha_Z^2 \E_{x\sim X, y\sim Y, z,z'\sim Z}\sqbrac{f(x,y)g(y,z)g(y,z')h(x,z)h(x,z')}.
	\end{align*}
	Now, by Theorem~\ref{thm:graph_count} (with the oriented graph $\set{(1,2), (1,3), (1,4), (2,3), (2,4)}$, with $1,2,3,4$ labelled by $X,Y,Z,Z$ respectively), we get  
	\begin{align*}
		(1-\epsilon)\cdot \alpha_X^2\alpha_Y^2\alpha_Z &\leq \brac{1-\frac{\epsilon}{10}}^6 \cdot \alpha_X^2\alpha_Y^2\alpha_Z
		\\&\leq \brac{1-\frac{\epsilon}{10}}\cdot \E[f]\E[g]^2\E[h]^2
		\\& \leq \E_{x\sim X, y\sim Y, z,z'\sim Z}\sqbrac{f(x,y)g(y,z)g(y,z')h(x,z)h(x,z')}
		\\& \leq \brac{1+\frac{\epsilon}{10}}\cdot \E[f]\E[g]^2\E[h]^2
		\\&\leq \brac{1+\frac{\epsilon}{10}}^6 \cdot \alpha_X^2\alpha_Y^2\alpha_Z \leq (1+\epsilon)\cdot \alpha_X^2\alpha_Y^2\alpha_Z,
	\end{align*}
	and hence
	\[\abs{\Norm{\Gamma}_2^2-\alpha_X^3\alpha_Y^3\alpha_Z^3} \leq \epsilon\cdot \alpha_X^3\alpha_Y^3\alpha_Z^3. \qedhere\]
\end{proof}

With the above, we are ready to prove the main result of this section.

\begin{proposition}\label{prop:sq_cover}
	Let $\epsilon\in (0,1/4)$ be a constant, and let $d\geq 1$.
	Then, there exists a sufficiently large integer $r = O_{\epsilon}(d^{16})$ and sufficiently small $\delta=\delta(\epsilon)>0$ such that the following holds:

	Let $X,Y,Z\subseteq \F_2^n$, be each of density at least $2^{-d}$, and such that all of $X,Y,Z$ are $(r,\delta)$-algebraically spread.
	Let $S = S(X,Y,Z)$ be as in Definition~\ref{defn:diag_prod}, and let $\mc T$ denote the set of all squares in $S$ (see Definition~\ref{defn:square}); formally,
	\[ \mc T = \set{(x,y,w)\in (\F_2^n)^3 : s_{x,y,w}\subseteq S}.  \]
	Define $\alpha_X=\E[\ind_X], \alpha_Y=\E[\ind_Y], \alpha_Z=\E[\ind_Z]$.
	Then, it holds that
	\begin{enumerate}
		\item $\abs{\abs{S}\cdot 2^{-2n}-\alpha_X\alpha_Y\alpha_Z} \leq \epsilon\cdot \alpha_X\alpha_Y\alpha_Z$. 
		\item $\abs{\abs{\mc T}\cdot 2^{-3n}-\alpha_X^2\alpha_Y^2\alpha_Z^2} \leq \epsilon \cdot \alpha_X^2\alpha_Y^2\alpha_Z^2$.
		\item Let $\mu$ be the distribution on $S$ obtained as follows: Pick a random square in $S$, by picking uniformly at random $(x,y,w)\sim \mc T$; then, output a uniformly random element from the set $s_{x,y,w}$. It holds that 
			\[ \Norm{\mu - U_S}_1 \leq \epsilon, \]
			where $U_S$ denotes the uniform distribution over $S$.
	\end{enumerate}
\end{proposition}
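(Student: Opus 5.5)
We derive all three items from Lemma~\ref{lemma:spread_imp_l1_l2}, applied with a smaller error parameter $\epsilon'=\epsilon'(\epsilon)$ (say $\epsilon'=\epsilon^2/100$); this fixes $r=O_{\epsilon}(d^{16})$ and $\delta$. Item 1 is then item 1 of the lemma. For item 2, note that $\abs{\mc T}\cdot 2^{-3n}=\E_{x,y,w}\ind[s_{x,y,w}\subseteq S]=\Norm{\Gamma}_1$ (as $\Gamma\ge 0$), so item 2 is item 2 of the lemma.

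For item 3, we first compute $\mu$ explicitly. Every point of $s_{x,y,w}$ is of the form $(x',y')$ with $s_{x',y',w}=s_{x,y,w}$, and $\mc T$ is closed under the four re-representations of Remark~\ref{remark:sq_rep}. Hence sampling $(x,y,w)\sim\mc T$ and outputting a uniform point of $s_{x,y,w}$ has the same law as sampling $(x,y,w)\sim \mc T$ and outputting $(x,y)$. For $w\ne 0$ this holds because each of the 4 representations appears in $\mc T$ and the output is then uniform over them; for $w=0$ it is trivial. Therefore
\[ \mu(x,y)=\frac{2^n\,\Gamma(x,y)}{\abs{\mc T}}=\frac{\Gamma(x,y)}{2^{2n}\Norm{\Gamma}_1}. \]
Writing $\beta=\alpha_X\alpha_Y\alpha_Z$, the uniform distribution is $U_S(x,y)=\ind_S(x,y)/\abs{S}$, and $\Gamma$ is supported on $S$.

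The main step is to bound the $\ell_1$ distance via Cauchy--Schwarz on $S$:
\[ \Norm{\mu-U_S}_1\le \sqrt{\abs{S}\sum_{(x,y)\in S}\brac{\mu(x,y)-\tfrac{1}{\abs S}}^2}=\sqrt{\abs S\sum\mu^2-1}. \]
Here $\sum_{x,y}\mu^2=2^{2n}\Norm{\Gamma}_2^2/(2^{4n}\Norm{\Gamma}_1^2)$. Plugging in the lemma's estimates $\abs S=2^{2n}\beta(1\pm\epsilon')$, $\Norm{\Gamma}_1=\beta^2(1\pm\epsilon')$ and $\Norm{\Gamma}_2^2=\beta^3(1\pm\epsilon')$ gives
\[ \abs S\sum\mu^2\le\frac{(1+\epsilon')^2\beta\cdot\beta^3}{(1-\epsilon')^2\beta^4}\le 1+5\epsilon'. \]
Hence $\Norm{\mu-U_S}_1\le\sqrt{5\epsilon'}\le\epsilon$.

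The only delicate point is that the variance identity needs the exact normalizations to match, so that the main terms cancel and the quantity under the square root is $O(\epsilon')$ rather than $\Theta(1)$. This cancellation is exactly why the lemma proves the second moment $\Norm{\Gamma}_2^2\approx\beta^3$ in addition to $\Norm{\Gamma}_1\approx\beta^2$. A secondary check is the representation symmetry used to rewrite $\mu$ in terms of $\Gamma$. The degenerate squares with $w=0$ are harmless, since the formula for $\mu$ holds for them as well.
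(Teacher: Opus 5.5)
Your proposal is correct and matches the paper's proof essentially step for step. Like the paper, you invoke Lemma~\ref{lemma:spread_imp_l1_l2} with error $\Theta(\epsilon^2)$ and use $\abs{\mc T}=2^{3n}\Norm{\Gamma}_1$, $\mu(x,y)=2^n\Gamma(x,y)/\abs{\mc T}$ via the four-fold representation symmetry, $\abs{S}\Norm{\mu}_2^2\le(1+\epsilon')^2/(1-\epsilon')^2$, and Cauchy--Schwarz; the only differences (your constant $\epsilon^2/100$ versus the paper's $\epsilon^2/5$, and your bijection phrasing of the symmetry) are cosmetic.
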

\begin{proof}
	Let $(r,\delta)$ be such that Lemma~\ref{lemma:spread_imp_l1_l2} holds with parameters $d$ and $\epsilon^2/5:=\gamma$, and 
	let $\Gamma$ be the function as in the statement of Lemma~\ref{lemma:spread_imp_l1_l2}.
	By Lemma~\ref{lemma:spread_imp_l1_l2}, it directly holds that \[ \abs{\abs{S}\cdot 2^{-2n}-\alpha_X\alpha_Y\alpha_Z} \leq \gamma\cdot \alpha_X\alpha_Y\alpha_Z \leq \epsilon\cdot \alpha_X\alpha_Y\alpha_Z. \]
	
	Now, observe that 
	\[ \abs{\mc T} = \sum_{x,y,w\in \F_2^n}\ind\sqbrac{s_{x,y,w}\subseteq S} = 2^{3n}\Norm{\Gamma}_1, \]
	and hence by Lemma~\ref{lemma:spread_imp_l1_l2},
	\[ \abs{\abs{\mc T}\cdot 2^{-3n}-\alpha_X^2\alpha_Y^2\alpha_Z^2}  \leq \gamma \cdot \alpha_X^2\alpha_Y^2\alpha_Z^2\leq \epsilon \cdot \alpha_X^2\alpha_Y^2\alpha_Z^2.\]
	
	Finally, we recall Remark~\ref{remark:sq_rep}: every square  $s_{x,y,w}\subseteq S$ with $w\not=0$ occurs exactly 4 times in $\mc T$, as $s_{x,y,w}=s_{x+w,y,w}=s_{x,y+w,w}=s_{x+w,y+w,w}$.
	Also, when $w=0$, we have $s_{x,y,0} = \set{(x,y)}$.
	Thus, for any $(x,y)\in S$,
	\begin{equation}\label{eqn:square_count}
		\mu[(x,y)] = \frac{1}{\abs{\mc T}}\cdot\brac{ \sum_{w\in\F_2^n, w\not=0}4\cdot \ind\sqbrac{s_{x,y,w}\subseteq S}\cdot \frac{1}{4} + \ind\sqbrac{s_{x,y,0}\subseteq S}\cdot 1 } = \frac{\Gamma(x,y)\cdot 2^n}{\abs{\mc T}}.
	\end{equation}
	Hence, we can compute the $\ell_2$-norm of $\mu$ as
	\begin{align*}
		\Norm{\mu}_2^2 &= \sum_{(x,y)\in S} \mu[(x,y)]^2
		\\&= \frac{2^{2n}}{\abs{\mc T}^2}\cdot\sum_{(x,y)\in S}\Gamma(x,y)^2
		\\&= \frac{2^{2n}}{\abs{\mc T}^2}\cdot\sum_{x,y\in \F_2^n}\Gamma(x,y)^2 \qquad\qquad\qquad(\text{as }\Gamma(x,y)=0 \text{ for } (x,y)\not\in S)
		\\&= \frac{2^{4n}}{\abs{\mc T}^2}\cdot \Norm{\Gamma}_2^2.
	\end{align*}
	Now, by Lemma~\ref{lemma:spread_imp_l1_l2}, we obtain
	\[ \Norm{\mu}_2^2 \leq \frac{2^{4n}}{\abs{\mc T}^2}\cdot (1+\gamma)\cdot \alpha_X^3\alpha_Y^3\alpha_Z^3,\]
	and so,
	\begin{align*}
		\Norm{\mu}_2^2 \cdot \abs{S} &\leq 2^{4n}\cdot (1+\gamma)\cdot \alpha_X^3\alpha_Y^3\alpha_Z^3 \cdot \frac{\abs{S}}{\abs{\mc T}^2}
		\\&\leq 2^{4n}\cdot (1+\gamma)\cdot \alpha_X^3\alpha_Y^3\alpha_Z^3 \cdot \frac{(1+\gamma)\cdot \alpha_X\alpha_Y\alpha_Z\cdot 2^{2n}}{(1-\gamma)^2\cdot \alpha_X^4\alpha_Y^4\alpha_Z^4\cdot 2^{6n}}
		\\&= \frac{(1+\gamma)^2}{(1-\gamma)^2}\leq 1+5\gamma. \qquad\qquad (\text{as }\gamma\leq 0.1)
	\end{align*}
	Finally, by Cauchy-Schwarz, we get
	\[ \Norm{\mu-U_S}_1 \leq \brac{|S|\cdot \Norm{\mu-U_S}_2^2}^{1/2} = \brac{|S|\cdot \brac{\Norm{\mu}_2^2 - \frac{1}{\abs{S}}}}^{1/2} \leq \sqrt{5\gamma} = \epsilon. \qedhere\]
\end{proof}


\section{Parallel Repetition for Games with the GHZ Query Support}\label{sec:ghz_parrep}

This section is devoted to the proof of Theorem~\ref{thm:intro_main}.
By Lemma~\ref{lemma:wlog_unif_dist}, it suffices to consider the case where $Q$ is the uniform distribution over $\supp(Q)$.
We prove the following:

\begin{theorem}\label{thm:ghz_unif_parrep}
	Let $\mc G = (\mc X\times \mc Y\times \mc Z,\ \mc A\times \mc B\times \mc C,\ Q, V_{pred})$ be any 3-player game with $\mc X = \mc Y = \mc Z = \F_2$, and with $Q$ the uniform distribution over 
	\[ \supp(Q) =  \set{(x,y,z)\in \F_2^3:x+y+z=0},\]
	and such that $\val(\mc G)<1$.
	Then, for all sufficiently large $n$,\footref{footnote:suff_large_n} it holds that
	\[\val(\mc G^{\otimes n}) \leq \exp\brac{-n^{c}},\]
	where $c>0$ is an absolute constant.
\end{theorem}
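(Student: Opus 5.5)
We will verify the hypothesis of Lemma~\ref{lemma:inductive_criteria} with $\alpha=\exp(-n^{c_0})$ for a small absolute constant $c_0>0$ and a constant $\epsilon_0=\epsilon_0(\mc G)>0$; the lemma then gives $\val(\mc G^{\otimes n})\le\alpha^{c(\mc G)}$, which is at most $\exp(-n^{c_0/2})$ for large $n$. So fix a product event $E\times F\times G\subseteq(\F_2^n)^3$ with $\Pr_{Q^{\otimes n}}[E\times F\times G]\ge\alpha$. Since $Q^{\otimes n}$ is uniform on $\{(x,y,x+y)\}$, the conditional distribution $Q^{\otimes n}\mid E\times F\times G$ is exactly $(x,y)\sim U_S$ with $z=x+y$, where $S=S(E,F,G)$ and $|S|/2^{2n}\ge\alpha$. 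We will show that for a uniformly random coordinate $i\in[n]$, $\E_i\Pr[\Win_i\mid \mc E]\le 1-\epsilon_0$, so some coordinate is hard.

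\textbf{Step 1 (decomposition).} Apply Proposition~\ref{prop:uniformization} with $\mc V=\F_2^n$, a small constant $\eta$, constant $\epsilon$ equal to the $\delta$ required by Proposition~\ref{prop:sq_cover} (with a small constant error parameter $\epsilon_1$), and $r=O(d^{16})$. Here $d=(\log_2(4/\alpha))^{C_\eta}=n^{c_0C_\eta}$ is the density parameter guaranteed by item 3. The codimension lost is $r\epsilon^{-3}d = \poly(n^{c_0})$, which is $\le n/2$ once $c_0$ is small enough relative to $C_\eta$. This yields disjoint pieces $S_j=S(X_j,Y_j,Z_j)$ covering all but an $\eta$ fraction of $S$, each living on a coset of a subspace $\mc V_j$ with $\dim\mc V_j\ge n/2$ and with all three sets spread of density $\ge 2^{-d}$. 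Thus $U_S$ is $2\eta$-close in $\ell_1$ to the mixture $\sum_j\frac{|S_j|}{|S|}U_{S_j}$.

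\textbf{Step 2 (squares within each piece).} After translating $X_j,Y_j,Z_j$ into $\mc V_j\cong\F_2^{\dim\mc V_j}$ (translations preserve squares and diagonal products, since $s_{x,y,w}+(a,b)=s_{x+a,y+b,w}$), apply Proposition~\ref{prop:sq_cover}. This gives $\|\mu_j-U_{S_j}\|_1\le\epsilon_1$, where $\mu_j$ is the random-square distribution. Hence $U_S$ is $(2\eta+\epsilon_1)$-close to a mixture of uniform distributions on squares $s_{x,y,w}$ with $w\in\mc V_j$.

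\textbf{Step 3 (squares are hard), the main obstacle.} Under a uniform square $s_{x,y,w}$, any coordinate $i$ with $w_i=1$ has conditional winning probability at most $\val(\mc G)$ (Lemma~\ref{lemma:non_triv_in_extens_hard}). So we need that for a random square from $\mu_j$ and a random $i$, $\Pr[w_i=1]\ge\Omega(1)$. The width $w$ of a random square in $\mc T_j$ is close to uniform on $\mc V_j$. Indeed, by the count in Proposition~\ref{prop:sq_cover} item 2, $|\mc T_j|\ge\frac12\alpha_X^2\alpha_Y^2\alpha_Z^2|\mc V_j|^3$, so any set $W$ of widths of density $\beta$ in $\mc V_j$ carries at most about $\beta\cdot 2^{O(d)}$ of the squares. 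For a uniform $w\in\mc V_j$ with $\dim\mc V_j\ge n/2$, the weight $|w|$ is at least $\Omega(n)$ except with probability $2^{-\Omega(n)}$: write $\mc V_j$ in reduced echelon form, so the pivot coordinates of $w$ are uniform bits. Since $2^{O(d)}2^{-\Omega(n)}=o(1)$ (because $d=n^{o(1)}$ regime via small $c_0$), most squares have $|w|\ge c'n$, hence $\Pr_i[w_i=1]\ge c'$. Combining: $\E_i\Pr[\Win_i\mid\mc E]\le (2\eta+\epsilon_1)+(1-c')+c'\val(\mc G)$, where the first term comes from the $\ell_1$ error (winning is a $[0,1]$ function). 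Choosing $\eta,\epsilon_1\ll c'(1-\val(\mc G))$ gives $1-\epsilon_0$. The delicate points we expect to check are the width-uniformity estimate above and the precise form of Lemma~\ref{lemma:non_triv_in_extens_hard} for mixtures, where the players' strategies are fixed and only the input distribution varies.
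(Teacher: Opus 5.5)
Your proposal is correct and follows the paper's proof: uniformize $S(E,F,G)$ into spread pieces (Proposition~\ref{prop:uniformization}), cover each piece by random squares (Proposition~\ref{prop:sq_cover}), apply Lemma~\ref{lemma:non_triv_in_extens_hard} on coordinates with $w_i=1$, show a random square has $\Omega(n)$ non-trivial coordinates, and conclude via Lemma~\ref{lemma:inductive_criteria}. Your pivot-coordinate count for uniform $w\in\mc V_j$ is a harmless variant of the paper's Chernoff bound over $\F_2^n$ (Lemma~\ref{lemma:many_non_triv_coord}); to keep $c$ absolute, note as the paper does that $\val(\mc G)\le 3/4$ (the value is a multiple of $1/4$), so $\eta,\epsilon_1,c_0$ need not depend on $\mc G$.
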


The remainder of this section is devoted to the proof of the above theorem.
Towards this, we fix a 3-player game $\mc G$ as in the statement of the theorem; in particular, we must have that $\val(\mc G) \leq 3/4$.

The main step of the proof is to prove that the game $\mc G^{\otimes n}$ has a hard coordinate when the inputs to the players are conditioned on a product event with sufficiently large measure.
We carry this out in two parts: first, in Section~\ref{sec:hard_coor_assuming_spread}, we prove this statement assuming that each of the three sets in the product event are well spread (as in Definition~\ref{defn:alg_spread}) inside some large subspace of $\F_2^n$; then, in Section~\ref{sec:hard_coor}, we prove the statement for general product events, using our uniformization strategy.


\subsection{Hard Coordinate under Spread Product Events}\label{sec:hard_coor_assuming_spread}

We consider the game $\mc G^{\otimes n} = (\mc X^n\times \mc Y^n\times \mc Z^n,\ \mc A^n\times \mc B^n\times \mc C^n,\ Q^{\otimes n}, V_{pred}^{\otimes n})$ for some sufficiently large $n\in \N$.
Fix any strategies for the $3$ players in this game, and for each $i\in [n]$, let $\Win_i\subseteq \mc X^n\times \mc Y^n\times \mc Z^n$ be the event that this strategy wins the $i$\textsuperscript{th} coordinate of the game.

\begin{lemma}\label{lemma:hard_coor_assuming_spread}
	For any $1\leq d\leq o(n)$, and constant $\epsilon\in(0,1/4)$, there exists a (sufficiently large) integer $r = O_{\epsilon}(d^{16})$, and a (sufficiently small) constant $0<\delta=\delta(\epsilon)<1$, such that the following holds:
	
	Let $\mc V\subseteq \F_2^n$ be a linear subspace of dimension $\dim(\mc V)\geq n-o(n)$, and let $E,F,G\subseteq \mc V$ be such that each of them is $(r,\delta)$-algebraically spread within $\mc V$ (see Definition~\ref{defn:alg_spread}), and such that each of the densities $\frac{\abs{E}}{\abs{\mc V}}, \frac{\abs{F}}{\abs{\mc V}},\frac{\abs{G}}{\abs{\mc V}} \geq 2^{-d}$. 
	Then,
	\[ \E_{i\sim [n]}\sqbrac{\Pr_{Q^{\otimes n}}[\Win_i \mid E\times F\times G]} \leq \frac{1}{2}+\frac{\val(\mc G)}{2}+\epsilon. \]
\end{lemma}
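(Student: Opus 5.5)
We reduce to Proposition~\ref{prop:sq_cover} after transporting the problem from $\mc V$ to $\F_2^{\dim \mc V}$. Fix a linear isomorphism $\phi:\mc V\to\F_2^{m}$ with $m=\dim(\mc V)\ge n-o(n)$. Algebraic spreadness, densities and diagonal-product sets are preserved under $\phi$, and squares $s_{x,y,w}$ with $w\in\mc V$ correspond to squares in $\F_2^m$. Conditioned on $E\times F\times G$, the input $(x,y,z)$ to the three players is $(x,y)\sim U_S$ with $S=S(E,F,G)$ and $z=x+y$. Applying Proposition~\ref{prop:sq_cover} with density parameter $d$ and error $\epsilon$, we obtain $r=O_\epsilon(d^{16})$ and $\delta$ such that $\Norm{\mu-U_S}_1\le\epsilon$, where $\mu$ picks a uniform $(x,y,w)\in\mc T$ and then a uniform point of $s_{x,y,w}$. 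Since $\Win_i$ is an event, we get
\[ \Pr_{U_S}[\Win_i]\le \Pr_\mu[\Win_i]+\epsilon/2 \]
for every $i$, so it suffices to bound $\E_i\Pr_\mu[\Win_i]$.

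Next we handle the squares. For a fixed square $s=s_{x,y,w}\subseteq S$ and a coordinate $i$ with $w_i=1$, Lemma~\ref{lemma:non_triv_in_extens_hard} (the local hardness of squares) gives that the uniform distribution on $s$, with $z=x+y$, has
\[ \Pr[\Win_i]\le\val(\mc G). \]
Concretely, the four points of $s$ restricted to coordinate $i$ are exactly the four GHZ queries, uniformly, and the other coordinates are determined by the player's own input bit in that coordinate. A player strategy for $\mc G^{\otimes n}$ on $s$ therefore yields a strategy for one copy of $\mc G$. For coordinates with $w_i=0$ we use the trivial bound $1$. Hence
\[ \E_i\Pr_\mu[\Win_i]\le \E_{(x,y,w)\sim\mc T}\sqbrac{1-\tfrac{|w|}{n}(1-\val(\mc G))}. \]

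It remains to show that a typical $w$ from $\mc T$ has Hamming weight at least about $n/2$. The main point is controlling the distribution of $w$ under a uniform element of $\mc T$. By Proposition~\ref{prop:sq_cover}, $|\mc T|\ge(1-\epsilon)\alpha_E^2\alpha_F^2\alpha_G^2 2^{3m}\ge (1-\epsilon)2^{3m-6d}$. The number of triples with $w\in\mc V$ of weight at most $(1/2-\epsilon)n$ is at most
\[ 2^{2m}\cdot 2^{-\Omega(\epsilon^2 n)}\cdot 2^{n}, \]
using the Chernoff bound (Fact~\ref{fact:chernoff}) on the number of low-weight vectors in $\F_2^n$. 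Comparing the two counts, the fraction of such triples is at most $2^{n-m+6d-\Omega(\epsilon^2 n)}+\ldots$, which is $o(1)$ because $n-m=o(n)$ and $d=o(n)$. This is the step I expect to be delicate: the count uses all of $\F_2^n$ rather than $\mc V$, so we need the slack $n-m=o(n)$ together with $d=o(n)$ to beat the $2^{-\Omega(\epsilon^2 n)}$ saving.

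Putting these together, outside an $o(1)\le\epsilon$ fraction of $\mc T$ we have $|w|/n\ge 1/2-\epsilon$, and therefore
\[ \E_i\Pr_\mu[\Win_i]\le 1-(1/2-\epsilon)(1-\val)+\epsilon\le \tfrac12+\tfrac{\val}2+2\epsilon. \]
Adding the $\epsilon/2$ loss from the $\ell_1$ approximation and rescaling $\epsilon$ by a constant gives the claimed bound.
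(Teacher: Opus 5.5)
Your proof is correct and follows the paper's argument essentially step for step: you transport to $\F_2^{\dim \mc V}$ and apply Proposition~\ref{prop:sq_cover} for the $\ell_1$ approximation, invoke Lemma~\ref{lemma:non_triv_in_extens_hard} on non-trivial coordinates, and bound the fraction of low-weight widths by comparing a Chernoff count against the lower bound on $\abs{\mc T}$, exactly as in Lemma~\ref{lemma:many_non_triv_coord}. The differences are cosmetic: you use a constant slack $(1/2-\epsilon)n$ and rescale $\epsilon$ at the end instead of the paper's vanishing slack $\theta$, and you use the slightly sharper bound $\tfrac12\Norm{\mu-U_S}_1$ for the error on the event $\Win_i$.
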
 

The remainder of this subsection is devoted to the proof of the Lemma~\ref{lemma:hard_coor_assuming_spread}.
We fix some $1\leq d\leq o(n)$, and constant $\epsilon\in(0,1/4)$, and let $r=O_{\epsilon}(d^{16}),\ \delta=\delta(\epsilon)>0$ be so that Proposition~\ref{prop:sq_cover} holds with the parameters $d,\frac{\epsilon}{2}$.
Let $\mc V\subseteq \F_2^n$ be a linear subspace of dimension $\dim(\mc V)\geq n-o(n)$, and let $E,F,G\subseteq \mc V$ be each $(r,\delta)$-algebraically spread within $\mc V$, and such that each of the densities $\alpha_E=\frac{\abs{E}}{\abs{\mc V}},\ \alpha_F=\frac{\abs{F}}{\abs{\mc V}},\ \alpha_G=\frac{\abs{G}}{\abs{\mc V}}$ is at least $2^{-d}$. 

We start by proving a lemma that demonstrates the usefulness of squares. It says that the no strategy for the game $\mc G^{\otimes n}$ can win on non-trivial coordinates of squares.

\begin{definition}[Non-trivial coordinates]
	For a square $s = s_{x,y,w}\subseteq \F_2^n\times \F_2^n$, as in Definition~\ref{defn:square}, we define its set of \emph{non-trivial coordinates} as 
	\[\mc I_{s} = \set{i\in [n]: w_i\not=0}.\]
	This is well-defined since $w$ is uniquely determined by the square $s$ (see Remark~\ref{remark:sq_rep}).
\end{definition}

\begin{lemma}\label{lemma:non_triv_in_extens_hard}
	Let $s$ be a square, and let $i\in \mc I_{s}$ be a non-trivial coordinate in $s$.
	Then,
	\[ \Pr_{(x,y)\sim s}\sqbrac{(x,y,x+y)\in \Win_i} \leq \val(\mc G) <1. \]
\end{lemma}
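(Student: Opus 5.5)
We embed a single copy of $\mc G$ into coordinate $i$ of the square. Write $s=s_{x_0,y_0,w}$ with $w_i=1$. The four points of $s$ are $(x_0+aw,\,y_0+bw)$ for $a,b\in\F_2$, and the third player's input is $x_0+y_0+(a+b)w$. Sampling $(x,y)\sim s$ uniformly is therefore the same as sampling $(a,b)$ uniformly from $\F_2^2$ and setting $c=a+b$. Since $w_i=1$, the $i$\textsuperscript{th} coordinates are $x_i=(x_0)_i+a$, $y_i=(y_0)_i+b$ and $z_i=(x_0+y_0)_i+c$. The triple $(a,b,c)$ is uniform over $\supp(Q)$, so by itself it is distributed exactly as $Q$.

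Given the fixed $n$-fold strategies $f,g,h$ (each player $j$ maps its $n$-bit question to an $n$-tuple of answers), we build a strategy for the single game $\mc G$ that wins with probability exactly $\Pr_{(x,y)\sim s}[(x,y,x+y)\in\Win_i]$; this probability is then at most $\val(\mc G)$. The construction is as follows. On question $\tilde x\in\F_2$, player~1 sets $a=\tilde x+(x_0)_i$, forms $x=x_0+aw$, and answers $f(x)_i$. Player~2 does the same with $b=\tilde y+(y_0)_i$, forming $y=y_0+bw$ and answering $g(y)_i$. Player~3 sets $c=\tilde z+(x_0+y_0)_i$, forms $z=x_0+y_0+cw$, and answers $h(z)_i$. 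Each player's construction uses only its own question, together with the fixed data $x_0,y_0,w,i$, which may be hardwired into the strategy.

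To check this works, take $(\tilde x,\tilde y,\tilde z)\sim Q$. Then $\tilde z=\tilde x+\tilde y$, so $a+b+c=\tilde x+\tilde y+\tilde z+2(x_0+y_0)_i=0$. Hence $z=x+y$, and $(x,y)$ is uniform on $s$. Moreover $x_i=\tilde x$, $y_i=\tilde y$ and $z_i=\tilde z$. The single-game predicate $V_{pred}((\tilde x,\tilde y,\tilde z),(f(x)_i,g(y)_i,h(z)_i))$ therefore coincides with the event $(x,y,x+y)\in\Win_i$. Taking probabilities gives the claimed bound.

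The only delicate point is the joint-distribution check: the map $(a,b)\mapsto(x,y)$ must be a bijection onto $s$ that pushes the uniform distribution on $\supp(Q)$ to the uniform distribution on $s$. This holds because $w\neq 0$, so the four points are distinct, as noted in Remark~\ref{remark:sq_rep}. The strict inequality $\val(\mc G)<1$ is a hypothesis of the theorem.
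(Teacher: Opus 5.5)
Your proof is correct and follows the paper's argument: both embed a single copy of $\mc G$ into coordinate $i$ by having each player map its one-bit question to the corresponding point of the square and answer with the $i$\textsuperscript{th} component. The only difference is cosmetic. The paper first re-represents the square (via Remark~\ref{remark:sq_rep}) so that $(x_0)_i=(y_0)_i=0$, whereas you absorb these offsets by shifting each player's question by $(x_0)_i$, $(y_0)_i$, or $(x_0+y_0)_i$.
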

\begin{proof}
	Let $x_0,y_0,w \in \F_2^n$ be such that $s = s_{x_0,y_0,w}$, and $w_i\not=0$.
	We may assume that $(x_0)_i=0$ (by possibly replacing $x_0$ with $x_0+w$) and similarly $(y_0)_i=0$.
	Define the points $x_1 = x_0+w,\ y_1=y_0+w,\ z_0=x_0+y_0,\ z_1=z_0+w$.
	Also, define
	\begin{align*}
		\tilde{s} &= \set{(x,y,x+y): (x,y)\in s}
		\\&= \set{(x_0,y_0,z_0), (x_0,y_1,z_1), (x_1,y_0,z_1),(x_1,y_1,z_0)}
		\\&= \set{(x_{\alpha}, y_{\beta}, z_{\gamma}):(\alpha,\beta,\gamma)\in \supp(Q)}
	\end{align*}
	Note that $\tilde{s}\subseteq \supp(Q)^n$ and all points in $\tilde{s}$ are valid inputs to the players in the game $\mc G^{\otimes n}$.
	Moreover, these satisfy $(x_{\alpha}, y_{\beta}, z_{\gamma})_i = (\alpha,\beta,\gamma)$ for each $(\alpha,\beta,\gamma)\in \supp(Q)$.
	
	Observe that the distribution of $(x,y,x+y)$ for $(x,y)\sim s$ is simply the uniform distribution over the 4 points in $\tilde{s}$. Hence, it suffices to show that no strategies $f:\F_2^n\to \mc A,\ g:\F_2^n\to \mc B,\ h:\F_2^n\to \mc C$ for the $i$\textsuperscript{th} coordinate in $\mc G^{\otimes n}$ can win on more than $\val(\mc G)$ fraction of the points $\tilde{s}$.
	For this, consider any such functions $f,g,h$.
	Using these, we construct a strategy for the base game $\mc G$ as follows:
	\begin{enumerate}
		\item The verifier chooses $(\alpha,\beta,\gamma)\sim Q$, and gives them to the three players respectively.
		\item Player 1 outputs $f(x_{\alpha})$, Player 2 outputs $g(y_{\beta})$, and Player 3 outputs $h(z_{\gamma})$.
	\end{enumerate}
	On any input $(\alpha,\beta,\gamma)\in \supp(Q)$, as observed before, the $i$\textsuperscript{th} coordinate of the vector $(x_{\alpha}, y_{\beta}, z_{\gamma})\in \tilde{s}$ equals $(\alpha,\beta,\gamma)$.
	Hence, probability that the above strategy wins the game $\mc G$ equals the fraction of points in $\tilde{s}$ on which $f,g,h$ win the $i$\textsuperscript{th} coordinate of $\mc G^{\otimes n}$.
	By the definition of game value, this is at most $\val(\mc G)$.	
\end{proof}

Next, we show that the spreadness of $E,F,G \subseteq \mc V$ guarantees that $S = S(E, F, G)$ (see Definition~\ref{defn:diag_prod}) is covered uniformly by squares.
For this, let $\mc T$ denote the set of all squares contained in $S$; formally,
	\[ \mc T = \set{(x,y,w)\in \mc V^3 : s_{x,y,w}\subseteq S}. \]
Let $U_S$ denote the uniform distribution over $S$.
Also, let $\mu$ be the distribution on $S$ obtained as follows: Pick a random square in $S$, by picking uniformly at random $(x,y,w)\sim \mc T$; then, output a uniformly random element from the set $s_{x,y,w}$.

\begin{lemma}\label{lemma:spread_ext_sq_cover}
	We have that
	\begin{enumerate}
		\item $\abs{ \frac{\abs{S}}{\abs{\mc V}^2}-\alpha_E\alpha_F\alpha_G} \leq \frac{\epsilon}{2}\cdot \alpha_E\alpha_F\alpha_G$. 
		\item $\abs{ \frac{\abs{\mc T}}{\abs{\mc V}^3}-\alpha_E^2\alpha_F^2\alpha_G^2} \leq \frac{\epsilon}{2} \cdot \alpha_E^2\alpha_F^2\alpha_G^2$. 
		\item $\Norm{\mu - U_S}_1 \leq \frac{\epsilon}{2}$.
	\end{enumerate}
\end{lemma}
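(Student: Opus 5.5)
This lemma is Proposition~\ref{prop:sq_cover} transported from $\F_2^n$ to the subspace $\mc V$. The plan is to fix a linear isomorphism $\phi:\mc V\to\F_2^m$ with $m=\dim(\mc V)$, and apply Proposition~\ref{prop:sq_cover} (with parameters $d$ and $\epsilon/2$, which is exactly how $r,\delta$ were chosen) to the sets $X=\phi(E)$, $Y=\phi(F)$, $Z=\phi(G)$ inside $\F_2^m$.

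First, the hypotheses transfer. Since $\phi$ is a linear bijection, it maps affine subspaces of $\mc V$ of dimension at least $\dim(\mc V)-r$ bijectively to affine subspaces of $\F_2^m$ of dimension at least $m-r$. It also preserves relative densities: $\abs{X}/2^m=\abs{E}/\abs{\mc V}=\alpha_E$, and similarly for $Y,Z$. Hence $X,Y,Z$ are $(r,\delta)$-algebraically spread within $\F_2^m$, each with density at least $2^{-d}$. Nothing in Proposition~\ref{prop:sq_cover} depends on the ambient dimension beyond these facts, so we may use $m$ in place of $n$; $m$ is large since $\dim(\mc V)\ge n-o(n)$.

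Second, the structures transfer. Because $\phi$ is additive, $(x,y)\in S(E,F,G)$ if and only if $(\phi(x),\phi(y))\in S(X,Y,Z)$. Likewise $s_{x,y,w}\subseteq S(E,F,G)$ with $x,y,w\in\mc V$ if and only if $s_{\phi(x),\phi(y),\phi(w)}\subseteq S(X,Y,Z)$. So $\phi\times\phi$ is a bijection $S\to S(X,Y,Z)$, and $\phi^{\times 3}$ is a bijection from $\mc T$ onto the square set $\mc T'$ of Proposition~\ref{prop:sq_cover}. One small point needs checking: any square contained in $S\subseteq\mc V^2$ has $w=x+(x+w)\in\mc V$. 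Therefore restricting $\mc T$ to triples in $\mc V^3$ loses nothing, and $\mu$ is the same distribution whether the square is indexed from inside $\mc V$ or from inside $\F_2^n$.

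Given these bijections, $\abs{S}/\abs{\mc V}^2=\abs{S(X,Y,Z)}2^{-2m}$ and $\abs{\mc T}/\abs{\mc V}^3=\abs{\mc T'}2^{-3m}$. Also, $\mu$ and $U_S$ are the pushforwards under $(\phi\times\phi)^{-1}$ of the corresponding distributions for $S(X,Y,Z)$, and $\ell_1$ distance is invariant under a bijection. All three items then follow directly from the three items of Proposition~\ref{prop:sq_cover} with error $\epsilon/2$. The only step requiring any care is verifying that spreadness and squares are preserved under the isomorphism; the rest is bookkeeping.
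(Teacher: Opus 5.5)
Your proposal is correct and follows the paper's proof, which applies Proposition~\ref{prop:sq_cover} with parameters $d,\epsilon/2$ via the identification $\mc V\cong\F_2^{\dim(\mc V)}$ and notes that squares are coordinate-free. You simply spell out the bookkeeping the paper leaves implicit: transfer of spreadness and densities, the bijections on $S$ and $\mc T$, the automatic fact $w\in\mc V$, and invariance of $\ell_1$ distance under pushforward.
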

\begin{proof}
	This follows directly by Proposition~\ref{prop:sq_cover}, applied with respect to $\mc V \cong \F_2^{\dim(\mc V)}$.
	Note that the definition of squares is \emph{coordinate-free}, and hence the proposition is applicable.
\end{proof}

Before completing the proof of Lemma~\ref{lemma:hard_coor_assuming_spread}, we show that a random square inside $S=S(E, F, G)$ contains many non-trivial coordinates.
\begin{lemma}\label{lemma:many_non_triv_coord}
	Let $(x,y,w)\sim \mc T$, and let $s = s_{x,y,w}$.
	Then,
	\[ \E\abs{\mc I_{s}} \geq \frac{n}{2}\cdot (1-o(1)). \]
\end{lemma}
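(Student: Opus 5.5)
The plan is a counting argument over the width $w$. Write $\abs{\mc I_{s}} = \abs{w}$, the Hamming weight of $w$. We must show that a uniformly random $(x,y,w)\sim\mc T$ has $\abs{w}\geq (1/2-o(1))n$ in expectation. Since $\abs{\mc T}$ is large (Lemma~\ref{lemma:spread_ext_sq_cover}), it suffices to show that the low-weight $w$'s contribute only a vanishing fraction of $\mc T$.

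Fix a threshold $\tau = (1/2-\theta)n$ with $\theta = o(1)$ to be chosen. Every $(x,y,w)\in\mc T$ has $x,y,w\in\mc V$, so the number of triples with $\abs{w}\leq \tau$ is at most
\[ \abs{\mc V}^2\cdot \abs{\set{w\in\F_2^n:\abs{w}\leq\tau}} \leq \abs{\mc V}^2\cdot 2^{n(1-c\theta^2)}, \]
using the standard binomial tail (equivalently Fact~\ref{fact:chernoff} with $\mu=1/2$). By Lemma~\ref{lemma:spread_ext_sq_cover}, $\abs{\mc T}\geq \tfrac12\alpha_E^2\alpha_F^2\alpha_G^2\abs{\mc V}^3\geq \tfrac12 2^{-6d}\abs{\mc V}^3$. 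Hence
\[ \Pr_{(x,y,w)\sim\mc T}\sqbrac{\abs{w}\leq\tau}\leq 2\cdot 2^{6d}\cdot 2^{n-\dim(\mc V)}\cdot 2^{-c\theta^2 n}. \]
Since $d=o(n)$ and $n-\dim(\mc V)=o(n)$, the exponent $6d+(n-\dim\mc V)$ is $o(n)$. We can therefore pick $\theta=o(1)$ with $\theta^2 n$ much larger than $d+(n-\dim\mc V)+1$, for instance $\theta = ((d+n-\dim\mc V+\log n)/n)^{1/3}$. This makes the probability $o(1)$.

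Then
\[ \E\abs{\mc I_s}\geq \tau\cdot(1-o(1)) = \frac n2(1-o(1)). \]

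The only delicate point is making sure the loss from $d$ and from the codimension of $\mc V$ is sublinear. This is exactly what the hypotheses $d=o(n)$ and $\dim\mc V\geq n-o(n)$ provide. No structural information about squares beyond the lower bound on $\abs{\mc T}$ from Lemma~\ref{lemma:spread_ext_sq_cover} is needed.
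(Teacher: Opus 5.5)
Your proof is correct and follows essentially the same route as the paper: you bound the fraction of $\mc T$ with low-weight $w$ by counting all candidate triples, apply the Chernoff tail for the weight of $w$, and divide by the lower bound $\abs{\mc T}\geq \tfrac12 2^{-6d}\abs{\mc V}^3$ from Lemma~\ref{lemma:spread_ext_sq_cover}. The differences are cosmetic: you count $x,y$ inside $\mc V$, so you lose a factor $2^{n-\dim\mc V}$ rather than the paper's $2^{3(n-\dim\mc V)}$, and you choose $\theta$ explicitly rather than as $\max\set{3\sqrt{\kappa},n^{-0.1}}$.
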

\begin{proof}
	For any $0<\theta<1$, we have
	\begin{align*}
		\Pr\sqbrac{\abs{\mc I_{s}} \leq (1-\theta)\cdot \frac{n}{2}} &= \Pr_{(x,y,w)\sim \mc T}\sqbrac{ \abs{\set{i:w_i\not=0}} \leq   (1-\theta)\cdot n/2}
		\\&= \frac{1}{\abs{\mc T} }\sum_{(x,y,w)\in \mc T}\ind \sqbrac{ \abs{\set{i:w_i\not=0}} \leq  (1-\theta)\cdot n/2}
		\\&\leq \frac{1}{\abs{\mc T} }\sum_{x,y,w \in \F_2^n}\ind \sqbrac{ \abs{\set{i:w_i\not=0}} \leq  (1-\theta)\cdot n/2}
		\\&= \frac{2^{3n}}{\abs{\mc T} }\Pr_{w\sim\F_2^n} \sqbrac{ \abs{\set{i:w_i\not=0}} \leq  (1-\theta)\cdot n/2}.
	\end{align*}
	By Fact~\ref{fact:chernoff}, we have, $\Pr_{w\sim\F_2^n} \sqbrac{ \abs{\set{i:w_i\not=0}} \leq  (1-\theta)\cdot n/2} \leq 2^{-\theta^2 n/4}$, and by Lemma~\ref{lemma:spread_ext_sq_cover}, we have 
	\[ \frac{\abs{\mc T}}{2^{3n}} \geq \frac{\abs{\mc V}^3}{2^{3n}} \cdot (1-\epsilon/2) \cdot \alpha_E^2\alpha_F^2\alpha_G^2 \geq 2^{-3\cdot (n-\dim(\mc V))}\cdot \frac{1}{2}\cdot 2^{-6d} \geq 2^{-\kappa n}, \]
	for some $\kappa = o(1)$.
	Combining everything, we get 
	\[ \Pr\sqbrac{\abs{\mc I_{s}} \leq (1-\theta)\cdot \frac{n}{2}} \leq 2^{\kappa n}\cdot 2^{-\theta^2 n/4} = 2^{-(\theta^2-4\kappa)n/4}. \] 
	Now, we choose $\theta = \max\set{{3\sqrt{\kappa}},n^{-0.1}} \leq o(1)$, and so the above probability is $o(1)$.
	Hence,
	\[ \E\abs{\mc I_{s}} \geq \frac{n}{2}\cdot (1-\theta)\cdot  \brac{1-o(1)} \geq \frac{n}{2}\cdot (1-o(1)). \qedhere\]
\end{proof}

Finally, we complete the proof.
\begin{proof}[Proof of Lemma~\ref{lemma:hard_coor_assuming_spread}]
	For any $i\in [n]$, and $S = S(E,F,G)$, we can write
	\begin{align*}
		\Pr_{Q^{\otimes n}}[\Win_i \mid E\times F\times G] &= \Pr_{x,y\sim\F_2^n}\sqbrac{(x,y,x+y)\in \Win_i \mid (x,y,x+y)\in E\times F\times G}
		\\&= \Pr_{x,y\sim\F_2^n}\sqbrac{(x,y,x+y)\in \Win_i \mid (x,y)\in S}
		\\&= \Pr_{(x,y)\sim S}\sqbrac{(x,y,x+y)\in \Win_i}.
	\end{align*}
	By Lemma~\ref{lemma:spread_ext_sq_cover}, this can be bounded as
	\begin{equation}\label{eqn:win_prob}
		\Pr_{Q^{\otimes n}}[\Win_i \mid E\times F\times G] \leq \E_{(x_0,y_0,w)\sim \mc T}\E_{(x,y)\sim s_{x_0,y_0,w}}\sqbrac{\ind\sqbrac{(x,y,x+y)\in \Win_i}}+\frac{\epsilon}{2}.
	\end{equation}
	Observe that for a fixed square $(x_0,y_0,w)\in \mc T,\ s = s_{x_0,y_0,w}$, by Lemma~\ref{lemma:non_triv_in_extens_hard}, we can bound
	\[\E_{(x,y)\sim s_{x_0,y_0,w}}\sqbrac{\ind\sqbrac{(x,y,x+y)\in \Win_i}} \leq 1-(1-\val(\mc G))\cdot \ind\sqbrac{i\in \mc I_s}.\]
	Plugging this into the above, and taking expectation over $i\sim [n]$, we obtain
	\begin{align*}
		\E_{i\sim [n]}\sqbrac{\Pr_{Q^{\otimes n}}[\Win_i \mid E\times F\times G]}&\leq  \E_{i\sim [n]} \E_{(x_0,y_0,w)\sim \mc T}\sqbrac{1-\brac{1-\val(\mc G)}\cdot \ind\sqbrac{i\in \mc I_{s_{x_0,y_0,w}}}} + \frac{\epsilon}{2}
		\\&= 1-\frac{\brac{1-\val(\mc G)}}{n}\E_{(x_0,y_0,w)\sim \mc T}|\mc I_{s_{x_0,y_0,w}}| + \frac{\epsilon}{2}.
		\\&\leq 1-\brac{1-\val(\mc G)}\cdot \frac{1}{2}\cdot (1-o(1)) + \frac{\epsilon}{2} \qquad (\text{by Lemma~\ref{lemma:many_non_triv_coord}})
		\\&\leq \frac{1}{2}+\frac{\val(\mc G)}{2}+\frac{\epsilon}{2}+o(1) \leq \frac{1}{2}+\frac{\val(\mc G)}{2}+\epsilon. \qedhere
	\end{align*}
\end{proof}

\subsection{Hard Coordinate under General Product Events}\label{sec:hard_coor}

Now, we work with general product events.
Consider the game $\mc G^{\otimes n} = (\mc X^n\times \mc Y^n\times \mc Z^n,\ \mc A^n\times \mc B^n\times \mc C^n,\ Q^{\otimes n}, V_{pred}^{\otimes n})$ for some sufficiently large $n\in \N$.
Fix any strategies for the $3$ players in this game, and for each $i\in [n]$, let $\Win_i\subseteq \mc X^n\times \mc Y^n\times \mc Z^n$ be the event that this strategy wins the $i$\textsuperscript{th} coordinate of the game.
We prove that:

\begin{lemma}\label{lemma:hard_coor_general}
	For any constant $\epsilon \in (0,1/4)$, there exists a constant $c=c(\epsilon)>0$, such that the following holds:
	For any sets $E,F,G\subseteq \F_2^n$ with $\Pr_{Q^{\otimes n}}\sqbrac{E\times F\times G} \geq 2^{-n^c}$, it holds that
	\[ \E_{i\sim [n]}\sqbrac{\Pr_{Q^{\otimes n}}[\Win_i \mid E\times F\times G]} \leq \frac{1}{2}+\frac{\val(\mc G)}{2}+\epsilon.\]
\end{lemma}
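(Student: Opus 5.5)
We reduce to Lemma~\ref{lemma:hard_coor_assuming_spread} via the uniformization of Proposition~\ref{prop:uniformization}. Since $Q$ is uniform on $\{(x,y,z):x+y+z=0\}$, conditioning $Q^{\otimes n}$ on $E\times F\times G$ is the same as sampling $(x,y)$ uniformly from $S=S(E,F,G)\subseteq \F_2^n\times\F_2^n$ and setting $z=x+y$. The measure of the event is $\Pr_{Q^{\otimes n}}[E\times F\times G]=\abs{S}/2^{2n}$, so $\alpha:=\abs{S}/\abs{\mc V}^2\geq 2^{-n^c}$ with $\mc V=\F_2^n$.

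\textbf{Decomposition.} Set $\eta=\epsilon/4$. Take $d:=(\log_2(4/\alpha))^{C_\eta}\leq O(n^{cC_\eta})$, and let $r=O_\epsilon(d^{16})$ and $\delta=\delta(\epsilon/4)$ be as in Lemma~\ref{lemma:hard_coor_assuming_spread} with error $\epsilon/4$ and density parameter $d$. Apply Proposition~\ref{prop:uniformization} to $E,F,G$ with parameters $r$, spreadness error $\min(\delta,1/20)$, and $\eta$. This yields disjoint pieces $S(X_j,Y_j,Z_j)\subseteq S$, indexed by $j\in[T]$, that cover all but an $\eta$ fraction of $S$. Each piece has $X_j,Y_j,Z_j$ of density at least $2^{-d}$ and $(r,\delta)$-spread inside cosets of a linear subspace $\mc V_j$ with codimension at most $r\delta^{-3}d = O_\epsilon(d^{17})=O_\epsilon(n^{17cC_\eta})$. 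Choosing $c=c(\epsilon)$ small enough, e.g.\ $17cC_\eta<1/2$, makes this codimension $o(n)$ and also gives $d=o(n)$, which are exactly the hypotheses needed.

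\textbf{Applying the spread lemma per piece.} Each piece lives in translates $x_j+\mc V_j$, $y_j+\mc V_j$, $x_j+y_j+\mc V_j$, whereas Lemma~\ref{lemma:hard_coor_assuming_spread} is stated for subsets of a linear subspace. The main obstacle is to transfer the lemma to this affine setting. We shift the sets to $X_j-x_j$, $Y_j-y_j$, $Z_j-x_j-y_j\subseteq\mc V_j$. The map $(x,y)\mapsto(x+x_j,y+y_j)$ sends squares $s_{x,y,w}$ to squares with the same width $w$, so the square-cover statement (Lemma~\ref{lemma:spread_ext_sq_cover}) and the count of non-trivial coordinates (Lemma~\ref{lemma:many_non_triv_coord}) are unchanged. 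Lemma~\ref{lemma:non_triv_in_extens_hard} applies to any square in $\F_2^n\times\F_2^n$, translated or not. Therefore the proof of Lemma~\ref{lemma:hard_coor_assuming_spread} goes through verbatim for the translated pieces. An equivalent route is to define shifted strategies $f'(x)=f(x+x_j)$, and similarly for $g'$ and $h'$, which are again valid strategies. Either way, for each $j$,
\[\E_{i\sim[n]}\Pr_{(x,y)\sim S(X_j,Y_j,Z_j)}\sqbrac{(x,y,x+y)\in\Win_i}\leq \tfrac12+\tfrac{\val(\mc G)}2+\tfrac\epsilon4 .\]

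\textbf{Averaging.} The uniform distribution on $S$ is a convex combination, with weights $\abs{S(X_j,Y_j,Z_j)}/\abs{S}$, of the uniform distributions on the pieces, plus the leftover part of mass at most $\eta$. On the leftover we bound the winning probability by $1$. Averaging over $i\sim[n]$ and summing over the pieces gives
\[\E_{i\sim[n]}\Pr[\Win_i\mid E\times F\times G]\leq \tfrac12+\tfrac{\val(\mc G)}2+\tfrac\epsilon4+\eta\leq\tfrac12+\tfrac{\val(\mc G)}2+\epsilon. \]
The delicate points are, first, choosing $c$ so that the quasi-polynomial losses in Proposition~\ref{prop:uniformization} still give codimension $o(n)$ and $d=o(n)$, and second, the affine-shift invariance used in the per-piece step.
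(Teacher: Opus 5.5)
Your proof is correct and follows the paper's argument essentially step for step: uniformize $S(E,F,G)$ with Proposition~\ref{prop:uniformization}, apply Lemma~\ref{lemma:hard_coor_assuming_spread} to each spread piece, and average. The differences are cosmetic: you choose $d=(\log_2(4/\alpha))^{C_\eta}$ adaptively, whereas the paper fixes $d=n^{1/20}$ and requires $\alpha\geq 4\cdot 2^{-d^{1/C}}$; and your translation-invariance justification for the affine pieces is the more careful one, since the ``shifted strategies'' route (which you list as an alternative and the paper uses) also relabels the questions in each coordinate $i$, and so needs the remark that this replaces $\mc G$ coordinatewise by a copy with the same support and value.
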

\begin{proof}
	Let $d = n^{1/20}$, and let $r = O_{\epsilon}(d^{16}) \in \N$ and constant $0<\delta = \delta(\epsilon)<1/10$ satisfy the statement of Lemma~\ref{lemma:hard_coor_assuming_spread}, with parameters $d,\epsilon/2$.

	Let $\eta = \epsilon/3$, and let $C=C_{\eta}$ be the constant as in Proposition~\ref{prop:uniformization}.
	Let $d_0 = d^{1/C} = n^{1/(20C)}$ and let $E,F,G\subseteq \F_2^n$ be such that $\alpha= \Pr_{Q^{\otimes n}}\sqbrac{E\times F\times G} \geq 4\cdot 2^{-d_0}$.
	Note that the lemma statement will hold with $c(\epsilon) = 1/(40C)$.
	
	Let $S = S(E,F,G)$ be as in Definition~\ref{defn:diag_prod}; this satisfies $\abs{S} = 2^{2n}\alpha$.
	Now, by Proposition~\ref{prop:uniformization}, (with parameters $r,\delta,\eta$, and the space $\F_2^n$),  we find an integer $T\in \N$, and for each $t\in [T]$, a linear subspace $\mc V_t\subseteq \F_2^n$, points $e_t,f_t\in \mc \F_2^n$, and subsets $E_t\subseteq e_t+\mc V_t,\ F_t\subseteq f_t+\mc V_t,\ G_t\subseteq e_t+f_t+\mc V_t$, such that:
	\begin{enumerate}
		\item For all $t\in [T]$, we have $\dim(\mc V_t) \geq n-  r\delta^{-3}\log_2(4/\alpha)^C = n-O(d^{17}) \geq n-o(n).$
		
		\item $S(E_1,F_1,G_1),\dots,S(E_T,F_T,G_T)$ are disjoint subsets of $S$ such that \[\abs{S\setminus \cup_{t=1}^TS(E_t,F_t,G_t)} \leq \eta \abs{S} = \eta\alpha\cdot 2^{2n}.\]
		
		\item For all $t\in [T]$, we have $\abs{E_t},\abs{F_t},\abs{G_t} \geq 2^{-\log_2(4/\alpha)^C}\abs{\mc V_t} \geq 2^{-d}\abs{\mc V_t}$, and $E_t,F_t,G_t$ are $(r,\delta)$-algebraically spread within $e_t+\mc V_t,f_t+\mc V_t,e+f_t+\mc V_t$ respectively.
	\end{enumerate}
	
	Now, for any $i\in [n]$, we may write
	\begin{align*}
	 \Pr_{Q^{\otimes n}}[\Win_i \mid E\times F\times G]  &= \alpha^{-1}\cdot \Pr_{Q^{\otimes n}}\sqbrac{\Win_i \cap E\times F\times G}
	 \\&= \alpha^{-1}\cdot \Pr_{x,y\sim \F_2^n}\sqbrac{(x,y,x+y)\in \Win_i \cap (x,y)\in S(E,F,G)}
	 \\&\leq \eta+ \sum_{t=1}^T \alpha^{-1}\cdot \Pr_{x,y\sim \F_2^n}\sqbrac{(x,y,x+y)\in \Win_i \cap (x,y)\in S(E_t,F_t,G_t)} 
	 \\&= \eta+ \sum_{t=1}^T \alpha^{-1}\cdot \Pr_{Q^{\otimes n}}\sqbrac{\Win_i \cap E_t\times F_t\times G_t} 
	 \\&= \eta+ \sum_{t=1}^T \alpha^{-1}\cdot \Pr_{Q^{\otimes n}}\sqbrac{\Win_i \mid E_t\times F_t\times G_t} \cdot \frac{\abs{S(E_t,F_t,G_t)}}{2^{2n}}.
	\end{align*}
	Also, by Lemma~\ref{lemma:hard_coor_assuming_spread}, we have for each $t\in [T]$, that
	\[ \E_{i\sim [n]}\sqbrac{\Pr_{Q^{\otimes n}}[\Win_i \mid E_t\times F_t\times G_t]} \leq \frac{1+\val(\mc G)+\epsilon}{2}. \]
	Here, we note that when the inputs are drawn  conditioned on $E_t\times F_t\times G_t$, the three players can shift them by $e_t,\ f_t,\ e_t+f_t\in \F_2^n$ respectively; now the inputs to each player lie in spread subsets of the linear subspace $\mc V_t$, and the lemma is applicable.
	Finally, combining the above two equations, and recalling $\eta=\epsilon/3$,  we get
	\begin{align*}
	 \E_{i\sim [n]}\sqbrac{\Pr_{Q^{\otimes n}}[\Win_i \mid E\times F\times G]}  &\leq \eta+ \sum_{t=1}^T \alpha^{-1}\cdot \E_{i\sim [n]}\sqbrac{\Pr_{Q^{\otimes n}}\sqbrac{\Win_i \mid E_t\times F_t\times G_t}} \cdot \frac{\abs{S(E_t,F_t,G_t)}}{2^{2n}}
	 \\&\leq \eta +\sum_{t=1}^T \alpha^{-1}\cdot \brac{\frac{1+\val(\mc G)+\epsilon}{2}}\cdot \frac{\abs{S(E_t,F_t,G_t)}}{2^{2n}}
	 \\&= \eta + \alpha^{-1}\cdot \brac{\frac{1+\val(\mc G)+\epsilon}{2}} \cdot \frac{\sum_{t=1}^T\abs{S(E_t,F_t,G_t)}}{2^{2n}}
	 \\&\leq \eta + \alpha^{-1}\cdot \brac{\frac{1+\val(\mc G)+\epsilon}{2}} \cdot \frac{\abs{S}}{2^{2n}} 
	 =  \eta+\frac{1+\val(\mc G)+\epsilon}{2} 
	 \\&\leq \frac{1}{2}+\frac{\val(\mc G)}{2}+\epsilon. \qedhere
	\end{align*}
\end{proof}


With the above, the proof of our main theorem is direct:
\begin{proof}[Proof of Theorem~\ref{thm:ghz_unif_parrep}]
	The theorem follows by combining Lemma~\ref{lemma:inductive_criteria} with  Lemma~\ref{lemma:hard_coor_general}, applied with the parameter choice $\epsilon=0.1$, recalling that $\val(\mc G)\leq 3/4$. 
\end{proof}

\begin{proof}[Proof of Theorem~\ref{thm:intro_main}]
	The theorem follows by combining Theorem~\ref{thm:ghz_unif_parrep} and Lemma~\ref{lemma:wlog_unif_dist}. 
\end{proof}


\section{A Concentration Bound}\label{sec:ghz_conc}

We also prove a concentration bound for the GHZ game:

\begin{theorem}\label{thm:ghz_conc}
	Let $\mc G = (\mc X\times \mc Y\times \mc Z,\ \mc A\times \mc B\times \mc C,\ Q, V_{pred})$ be any 3-player game with $\mc X = \mc Y = \mc Z = \F_2$, and with $Q$ the uniform distribution over 
	\[ \supp(Q) =  \set{(x,y,z)\in \F_2^3:x+y+z=0},\]
	and such that $\val(\mc G)<1$.
	Then, for every constant $\epsilon\in (0,1/4)$, there exists a constant $c=c(\epsilon)>0$, such that the following holds for all sufficiently large $n$:\footref{footnote:suff_large_n_eps}
	
	Consider the game $\mc G^{\otimes n}$, and consider any strategies for the 3 players in this game.
	Then, if $Z$ denotes the number of coordinates that the players win, it holds that
	\[ \Pr\sqbrac{Z \geq (\val(\mc G)+\epsilon)\cdot n} \leq \exp\brac{-n^{c}}.\]
\end{theorem}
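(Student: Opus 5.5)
The plan is to prove a conditional version of the hard-coordinate statement and then convert it into a tail bound by a martingale-style (or sequential-sampling) argument, in the spirit of concentration results for parallel repetition.

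\textbf{Step 1: a stronger local statement.} First I would sharpen Lemma~\ref{lemma:hard_coor_general}. Averaging over all $i\sim[n]$ loses the factor $\frac12$, because only about half the coordinates of a random square are non-trivial. For concentration we need the average over coordinates to be at most $\val(\mc G)+\epsilon$. Under the uniform distribution $Q$, a coordinate $i$ with $w_i=0$ is one where the square's four points share the same query $(\alpha,\beta,\alpha+\beta)$. Grouping squares with the same $(x_0,y_0)$ and, for fixed trivial coordinates, pairing widths $w$ that differ only on the trivial set, lets us view a single coordinate $i$ under $\mu$ as a mixture of non-trivial squares in $i$. The cleaner route is \emph{larger cubes}: replace a square by the set $\{(x_0+u,y_0+v): u,v\in \mathrm{span}(w^{(1)},w^{(2)})\}$ built from two independent widths, or more generally $k$ widths. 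The fraction of coordinates that are trivial for all $k$ widths is then $2^{-k}$.

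The counting in Lemma~\ref{lemma:spread_imp_l1_l2} and Proposition~\ref{prop:sq_cover} extends via Theorem~\ref{thm:graph_count} with a graph on $O(1)$ vertices depending on $k$. Each $2^k\times 2^k$ ``grid'' restricted to a nontrivial coordinate $i$ is, up to symmetry, a uniform mixture of copies of $\supp(Q)$ in coordinate $i$, so Lemma~\ref{lemma:non_triv_in_extens_hard} applies. Choosing $k=O(\log 1/\epsilon)$ then gives
\[
\E_{i\sim[n]}\Pr[\Win_i\mid E\times F\times G]\le \val(\mc G)+\epsilon/2
\]
for every product event of measure at least $2^{-n^{c}}$. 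The uniformization step (Proposition~\ref{prop:uniformization}) is unchanged.

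\textbf{Step 2: from average to concentration.} Sample a random subset of $m=n^{c'}$ coordinates $i_1,\dots,i_m$ sequentially. Condition on questions and answers in previously chosen coordinates; as in Lemma~\ref{lemma:inductive_criteria}, this preserves a product structure. By the Step 1 bound applied to the remaining $n-k$ coordinates, each newly chosen coordinate is won with conditional probability at most $\val(\mc G)+\epsilon/2$. This holds whenever the current conditioning event has probability at least $2^{-n^c}$. The contribution of low-probability atoms is handled the standard way: the total probability of reaching such an atom is small after summing over the at most $|\mc A\mc B\mc C|^{O(m)}$ answer histories. An Azuma/Chernoff bound for the sequence then shows that the fraction of sampled coordinates won exceeds $\val(\mc G)+3\epsilon/4$ with probability at most $\exp(-\Omega(\epsilon^2 m))$.

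On the other hand, if $Z\ge(\val(\mc G)+\epsilon)n$, a random $m$-subset sampled without replacement hits won coordinates at rate at least $\val(\mc G)+3\epsilon/4$ with probability at least $1/2$, by hypergeometric concentration. Comparing the two bounds gives $\Pr[Z\ge(\val(\mc G)+\epsilon)n]\le 2\exp(-\Omega(\epsilon^2 m))=\exp(-n^{c})$.

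\textbf{Main obstacle.} I expect Step 1 to be the hardest part, namely replacing the factor $\frac12(1+\val)$ by $\val+\epsilon$. This needs higher-dimensional ``cube'' covers whose count and collision probability are controlled by Theorem~\ref{thm:graph_count}. One must also check that each $2^k$-grid projected to a nontrivial coordinate still induces exactly the uniform $Q$, possibly after conditioning on the other widths, so that the players' strategy can be embedded as in Lemma~\ref{lemma:non_triv_in_extens_hard}.
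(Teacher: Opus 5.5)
Your Step~2 is sound and is a legitimate substitute for the paper's route, which is the induction of Lemma~\ref{lemma:random_set_hard} followed by a union bound over $t$-subsets of won coordinates. Sequentially sampling $m=n^{c'}$ coordinates, using that $\set{R=r}$ is a product event, discarding atoms of mass below $2^{-n^c}$, and finishing with Azuma and hypergeometric concentration all works.

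The genuine gap is Step~1. Your claim that the cube counts ``extend via Theorem~\ref{thm:graph_count}'' is false for $k\ge 2$. The paper can use graph counting for squares only because the substitution $w=x+y+z$ turns the six constraints of a square into the three vertices and three edges of a triangle. Each edge there is a constraint that the sum of two vertex variables lies in $X$, $Y$ or $Z$. For $k=2$, the event $(x_0+W)\times(y_0+W)\subseteq S$ with $W=\mathrm{span}(w^{(1)},w^{(2)})$ imposes $12$ distinct linear constraints on $4$ free variables. Four vertices and their pairwise sums supply only $10$ forms. For instance, with vertices $x_0,y_0,z_j=x_0+y_0+w^{(j)}$, the constraint $x_0+w^{(1)}+w^{(2)}\in X$ becomes $x_0+z_1+z_2\in X$.

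More seriously, this pattern sees quadratic structure, so algebraic spreadness cannot control it at all. Take $X=Y=Z=\set{x:q(x)=0}$ with $q$ a quadratic form whose polarization $B$ has rank close to $n$. These sets have density about $1/2$. They are $(r,\delta)$-algebraically spread for every $r=o(n)$, since restricting to codimension $r$ lowers the rank of $B$ by at most $2r$. Yet $q(x+w_1+w_2)+q(x+w_1)+q(x+w_2)+q(x)=B(w_1,w_2)$, so the $12$ constraints collapse to $10$ independent ones: $q$ vanishes at $x_0,y_0,w_1,w_2$ and $B$ vanishes on all six pairs among them. Hence the density of $2$-cubes is about $2^{-10}$, four times the value $\alpha_X^4\alpha_Y^4\alpha_Z^4=2^{-12}$ predicted for random sets.

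Consequently the first- and second-moment computations of Lemma~\ref{lemma:spread_imp_l1_l2} have no cube analogue under the spreadness that Proposition~\ref{prop:uniformization} supplies. Your assertion that a uniform point of a uniform cube is close to $U_S$ is therefore unsupported. Proving it would require higher-order uniformity, which the whole framework is built to avoid.

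The missing idea is that squares already suffice. The factor-$2$ loss is removed by a soft argument that needs nothing beyond $\abs{\mc T}\ge 2^{3n-o(n)}$.
\begin{itemize}
\item By Remark~\ref{remark:sq_rep}, $\mu$ is exactly the law of the base point $(x_0,y_0)$ when $(x_0,y_0,w)\sim\mc T$. Likewise $\nu_i$, the square distribution conditioned on $w_i=1$, is the law of $(x_0,y_0)$ given $w_i=1$.
\item Since $\log_2\abs{\mc T}=H(X_0,Y_0)+H(W\mid X_0,Y_0)\le 2n+\sum_i H(W_i\mid X_0,Y_0)$, one gets $\E_{i\sim[n]}H(W_i\mid X_0,Y_0)\ge 1-o(1)$.
\item So for most $i$ the bit $W_i$ is nearly unbiased given $(X_0,Y_0)$. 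Conditioning on $w_i=1$ then moves the law of $(x_0,y_0)$ by $o(1)$ in $\ell_1$ on average over $i$ (Lemmas~\ref{lemma:cond_marginal} and~\ref{lemma:distribution_cond_non_trivial}).
\item Under $\nu_i$ every square is nontrivial at $i$, so Lemma~\ref{lemma:non_triv_in_extens_hard} yields $\val(\mc G)+\epsilon$ directly (Lemma~\ref{lemma:improved_hard_coor_general}).
\end{itemize}
The uniformization step is then reused unchanged. Your first, vaguer idea of regrouping squares by base point points in this direction; it is the cube route that fails.
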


The remainder of this section is devoted to the proof of the above theorem.
Towards this, we fix a game $\mc G$ as in the statement of the theorem.
Consider the game $\mc G^{\otimes n} = (\mc X^n\times \mc Y^n\times \mc Z^n,\ \mc A^n\times \mc B^n\times \mc C^n,\ Q^{\otimes n}, V_{pred}^{\otimes n})$ for some sufficiently large $n\in \N$.
Also, fix any strategies for the $3$ players in this game, and for each $i\in [n]$, let $\Win_i\subseteq \mc X^n\times \mc Y^n\times \mc Z^n$ be the event that this strategy wins the $i$\textsuperscript{th} coordinate of the game.
For each $S\subseteq [n]$, we use $\Win_S$ to denote the event $\land_{i\in S}\Win_i$.


\subsection{An Improved Bound for Product Events}

The first step in our proof is to prove an improved version of Lemma~\ref{lemma:hard_coor_general}, where we don't lose a ``factor of $2$.''
We prove the following:

\begin{lemma}\label{lemma:improved_hard_coor_general}
	For any constant $\epsilon \in (0,1/4)$, there exists $c=c(\epsilon)>0$, such that the following holds:
	For any sets $E,F,G\subseteq \F_2^n$ with $\Pr_{Q^{\otimes n}}\sqbrac{E\times F\times G} \geq 2^{-n^c}$, it holds that
	\[ \E_{i\sim [n]}\sqbrac{\Pr_{Q^{\otimes n}}[\Win_i \mid E\times F\times G]} \leq \val(\mc G)+\epsilon.\]
\end{lemma}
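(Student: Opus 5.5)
The factor $2$ in Lemma~\ref{lemma:hard_coor_general} comes from a single square $s_{x,y,w}$: a typical width $w$ has only about $n/2$ non-trivial coordinates, and we conceded a winning probability of $1$ on the trivial ones. The plan is to replace squares by higher-dimensional \emph{boxes}, so that all but a $2^{-k}$ fraction of coordinates are non-trivial. Fix a constant $k=k(\epsilon)$ with $2^{-k}\le \epsilon/4$. For $x,y\in\F_2^n$ and a $k$-dimensional subspace $U\subseteq\F_2^n$, define the box
\[ b_{x,y,U}=\set{(x+u,y+u'):u,u'\in U}, \]
which requires $x+U\subseteq E$, $y+U\subseteq F$ and $x+y+U\subseteq G$. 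I then follow the outline of Section~\ref{sec:ghz_parrep}: uniformize via Proposition~\ref{prop:uniformization}, and within each spread piece approximate $U_S$ by the distribution $\mu_k$ obtained by picking a uniform box inside $S$ and then a uniform point of it.

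\textbf{Hardness of boxes.} Say a coordinate $i$ is non-trivial for $b_{x,y,U}$ if some $u\in U$ has $u_i=1$. For such $i$, write $U=U'\oplus\langle u\rangle$. Then the box is a disjoint union of the squares $s_{x+a,\,y+a',\,u}$ over $a,a'\in U'$, each of which has $i$ as a non-trivial coordinate. So the uniform distribution on the box is a uniform mixture of such squares. By Lemma~\ref{lemma:non_triv_in_extens_hard}, the probability of winning coordinate $i$ on the box is at most $\val(\mc G)$. If $U$ is spanned by $k$ independent uniform vectors, a fixed coordinate is trivial with probability $2^{-k}$. As in Lemma~\ref{lemma:many_non_triv_coord}, a Chernoff bound together with the lower bound $\abs{\mc T_k}\ge 2^{-o(n)}2^{(k+2)n}$ on the number $\abs{\mc T_k}$ of boxes shows that a random box in $S$ has $(1-2^{-k}-o(1))n$ non-trivial coordinates in expectation. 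Hence, given the distribution approximation, the argument of Lemma~\ref{lemma:hard_coor_assuming_spread} yields $\E_i\Pr[\Win_i\mid\cdot]\le \val(\mc G)+2^{-k}+\epsilon/2+o(1)$. The decomposition step is then verbatim as in Lemma~\ref{lemma:hard_coor_general}, with $\eta=\epsilon/4$.

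\textbf{Counting boxes.} I need the analogue of Lemma~\ref{lemma:spread_imp_l1_l2}: $\Norm{\Gamma_k}_1$ and $\Norm{\Gamma_k}_2^2$ should agree, up to a factor $1\pm\gamma$, with the values predicted for random sets of the same densities. Here $\Gamma_k(x,y)$ is the normalized number of boxes in $S$ through $(x,y)$. The Cauchy--Schwarz argument of Proposition~\ref{prop:sq_cover} then gives $\Norm{\mu_k-U_S}_1\le\epsilon/4$. For $k=1$, the substitution $w=x+y+z$ turned the square count into a triangle count, and Theorem~\ref{thm:graph_count} applied. For general $k$, write $w_j=x+y+z_j$ for basis vectors of $U$. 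The constraints then read $\ind_E(x+\sum_{j\in J}w_j)$, $\ind_F(\cdots)$, $\ind_G(\cdots)$ for $J\subseteq[k]$, which involve more than two of the free variables at once. This is a hypergraph pattern, not a graph pattern.

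\textbf{Main obstacle.} The box count is the step I expect to be hardest. I would first try an induction on $k$. A $k$-box is a square whose "vertices" are $(k-1)$-boxes, so I would define $E^{(k-1)},F^{(k-1)},G^{(k-1)}$ to be the base points of $(k-1)$-boxes inside $E,F,G$. Concretely, $E^{(k-1)}$ would consist of the pairs $(x,U')$ with $x+U'\subseteq E$, and similarly for $F$ and $G$; in these lifted sets, boxes correspond to squares. I would then try to show that these lifted sets inherit combinatorial spreadness and lower bounded left-marginals from the algebraic spreadness of $E,F,G$, using Theorem~\ref{thm:algspread_conv} with $r$ enlarged by a factor depending only on $k$. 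This would allow Lemma~\ref{lemma:algspread_imp_combspread} and Theorem~\ref{thm:graph_count} to be reapplied, with densities $2^{-O(2^k d)}$. Since $k$ is a constant, this only affects $r=O_\epsilon(d^{16})$ and the constant $c(\epsilon)$. If the inheritance of spreadness fails, the fallback is to apply a Kelley--Meka type $L^p$ bound, as in the footnote to Theorem~\ref{thm:algspread_conv}, to each of the $2^k$ convolution layers in turn.
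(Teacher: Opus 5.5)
\textbf{There is a genuine gap in the box-counting step.} Your hardness-of-boxes step is correct: a box is a disjoint union of squares $s_{x+a,y+a',u}$ with $u_i=1$, so Lemma~\ref{lemma:non_triv_in_extens_hard} applies. The counting step, however, is not merely unproven. For $k\ge 2$ it is false under algebraic spreadness, and so is the conclusion $\Norm{\mu_k-U_S}_1\le\epsilon/4$ that you need.

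\textbf{A counterexample.} Take $n$ odd, $H=\set{v\in\F_2^n:v_1=0}$, and $q(v)=\sum_{j=1}^{(n-1)/2}v_{2j}v_{2j+1}$, which is nondegenerate on $H$ with polar form $B$. Let $R$ be a random subset of $\F_2^n\setminus H$ of density $1/2$, and set $E=F=G=A:=\set{v\in H:q(v)=0}\cup R$.
\begin{itemize}
\item \emph{$A$ is spread.} Restricting $q$ to an affine subspace of codimension $r'$ lowers its rank by at most $2r'$. Hence $A$ has density $\frac12\pm o(1)$ on every affine subspace of codimension $o(n)$, so it is $(r,\delta)$-spread for every $r=o(n)$ and constant $\delta$, with $d=O(1)$.
\item \emph{Boxes in $H$.} For $x,y\in H$ and $U\subseteq H$, we have $b_{x,y,U}\subseteq S$ iff $\langle x,y,U\rangle$ is totally singular for $q$.
\item \emph{The excess count.} Fix $k=2$ and $(x,y)\in S\cap(H\times H)$. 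If $U=\langle h_1,h_2\rangle\subseteq H$ (probability $1/4$), completing to a box needs only $q(h_j)=B(x,h_j)=B(y,h_j)=0$ and $B(h_1,h_2)=0$. That is $7$ conditions, probability $2^{-7}$. The random-set prediction is $2^{-9}$ for the $9$ new points. The discrepancy arises because the condition $B(h_1,h_2)=0$ is shared by the three cosets $x+U$, $y+U$, $x+y+U$.
\item \emph{Everywhere else.} For every other $U$, and for every $(x,y)\in S$ outside $H\times H$, a direct check gives probability $2^{-9}$.
\end{itemize}
So the number of $2$-boxes through $(x,y)$ is $7/4$ times larger on $S\cap(H\times H)$ than elsewhere, and this set is a quarter of $S$. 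Hence $\mu_2$ gives it mass $7/19$, and $\Norm{\mu_2-U_S}_1\approx 9/38$. Even $E=F=G=\set{q=0}$ alone already gives $\Norm{\Gamma_2}_1$ four times the random prediction. Larger $k$ only makes this worse.

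The same example shows why your induction stalls: the lifted sets are not spread. For instance, $\set{q=0}\cap(\set{q=0}+h)\subseteq\set{x:B(x,h)=q(h)}$ lies in an affine hyperplane. Algebraic spreadness controls only graph-like patterns. That is exactly why Theorem~\ref{thm:graph_count} works for squares after the substitution $w=x+y+z$, whereas $k$-boxes with $k\ge 2$ detect quadratic structure. Neither FHHK-type counting nor your Kelley--Meka fallback can establish a statement that is false.

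\textbf{The missing idea: no new pattern is needed.} The paper keeps squares and removes the factor $2$ by conditioning on $w_i=1$.
\begin{itemize}
\item By Remark~\ref{remark:sq_rep}, $\mu$ is exactly the law of the corner $(x_0,y_0)$ when $(x_0,y_0,w)\sim\mc T$. Likewise $\nu_i$ (sample from $\mc T$ conditioned on $w_i=1$) is the law of $(x_0,y_0)$ given $W_i=1$.
\item Since $\abs{\mc T}\ge 2^{3n-o(n)}$ by Lemma~\ref{lemma:spread_ext_sq_cover}, the chain rule $H(X,Y,W)\le 2n+\sum_i H(W_i\mid X,Y)$ forces $\E_i H(W_i\mid X,Y)\ge 1-o(1)$.
\item So for most $i$, $W_i$ is nearly uniform and nearly independent of $(X,Y)$. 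This gives $\E_i\Norm{\mu-\nu_i}_1=o(1)$, which is Lemma~\ref{lemma:cond_marginal}.
\item Under $\nu_i$ every sampled square has $i$ non-trivial, so Lemma~\ref{lemma:non_triv_in_extens_hard} bounds the winning probability by $\val(\mc G)$.
\end{itemize}
The rest of Section~\ref{sec:ghz_parrep} then goes through unchanged, using only the square counts already proved.
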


The proof of the above lemma exactly follows the proof in Section~\ref{sec:ghz_parrep}, so we only mention how to improve it.
We note that the only place where we lost a factor of 2 in the bound was in the proof of Lemma~\ref{lemma:hard_coor_assuming_spread}; the proof of Lemma~\ref{lemma:hard_coor_general}, using uniformization, goes through as is if an improved version of Lemma~\ref{lemma:hard_coor_assuming_spread} is proved.
Now, in Lemma~\ref{lemma:hard_coor_assuming_spread}, we proceed in exactly the same manner upto Equation~\eqref{eqn:win_prob}, which says (after taking expectation over $i\sim [n]$)
\[ \E_{i\sim [n]}\sqbrac{\Pr_{Q^{\otimes n}}[\Win_i \mid E\times F\times G]} \leq \E_{i\sim [n]}\E_{(x_0,y_0,w)\sim \mc T}\E_{(x,y)\sim s_{x_0,y_0,w}}\sqbrac{\ind\sqbrac{(x,y,x+y)\in \Win_i}}+\frac{\epsilon}{2}.  \]
The factor 2 loss now occurred because when $(x_0,y_0,w)\sim \mc T$, the event $w_i=1$ occurs with probability roughly 1/2 (see Lemma~\ref{lemma:many_non_triv_coord}), and we can only apply Lemma~\ref{lemma:non_triv_in_extens_hard} when this  holds.
To improve upon this bound, we will replace the distribution with one where we are already conditioning on the event $w_i=1$, and only then apply Lemma~\ref{lemma:non_triv_in_extens_hard}; this requires proving a stronger version of Lemma~\ref{lemma:many_non_triv_coord}, which says that the distribution of $(x,y)$ in the above statement remains unaffected even when the square is sampled conditioned on $w_i=1$.
Formally, we prove:

\begin{lemma}\label{lemma:distribution_cond_non_trivial}
	Recall the distribution $\mu$ on $S$ obtained as follows: choose $(x_0,y_0,w)\sim \mc T$, and let $(x,y)\sim s_{x_0,y_0,w}$.
	For any $i\in [n]$, define the distribution $\nu_i$ as follows: choose $(x_0,y_0,w) \sim \mc T$ conditioned on $w_i=1$, and let $(x,y)\sim s_{x_0,y_0,w}$.
	Then, we have
	\[ \E_{i\sim [n]}\Norm{\mu-\nu_i}_1 \leq o(1). \]
\end{lemma}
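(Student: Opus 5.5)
The plan is to express everything through the square-counting function $\Gamma$ of Lemma~\ref{lemma:spread_imp_l1_l2} and a signed variant, and to bound the averaged $\ell_1$ distance by an $\ell_2$ computation, in the same spirit as the proof of Proposition~\ref{prop:sq_cover}. I work inside $\mc V\cong\F_2^{\dim \mc V}$ as in Lemma~\ref{lemma:spread_ext_sq_cover}; for readability I write it as if $\mc V=\F_2^n$, and the $o(n)$ codimension only enters as a $2^{o(n)}$ factor. Write $a=\alpha_E\alpha_F\alpha_G$.

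\textbf{Step 1: formulas for $\mu$ and $\nu_i$.} Define $\Gamma_i(x,y)=\E_{w}\sqbrac{\ind[s_{x,y,w}\subseteq S]\,\ind[w_i=1]}$. The event $w_i=1$ is invariant under the four re-representations of a square (Remark~\ref{remark:sq_rep}). So the same computation as Equation~\eqref{eqn:square_count} gives
\[\mu=\frac{\Gamma}{\Norm{\Gamma}_1 2^{2n}},\qquad \nu_i=\frac{\Gamma_i}{\Norm{\Gamma_i}_1 2^{2n}},\]
as functions on $S$. Set $D_i=2\Gamma_i-\Gamma$, so that $D_i(x,y)=-\E_w\sqbrac{\ind[s_{x,y,w}\subseteq S]\,(-1)^{w_i}}$. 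The normalizations satisfy $\abs{2\Norm{\Gamma_i}_1-\Norm{\Gamma}_1}\le\Norm{D_i}_1$. A routine triangle inequality then gives
\[\Norm{\mu-\nu_i}_1\le \frac{2\Norm{D_i}_1}{\Norm{\Gamma}_1}.\]
By Lemma~\ref{lemma:spread_imp_l1_l2} we also have $\Norm{\Gamma}_1\ge(1-\epsilon)a^2$. So it suffices to show $\E_i\Norm{D_i}_1=o(a^2)$.

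\textbf{Step 2: Cauchy--Schwarz and the second moment.} Since $D_i$ is supported on $S$, which has measure $\approx a$, Cauchy--Schwarz gives
\[\E_i\Norm{D_i}_1\le \brac{2a\cdot \E_i\Norm{D_i}_2^2}^{1/2}.\]
Now expand the square and average over $i$, using $\E_i(-1)^{w_i+w'_i}=1-2\abs{w+w'}/n$:
\[\E_i\Norm{D_i}_2^2=\E_{x,y,w,w'}\sqbrac{\ind[s_{x,y,w}\subseteq S]\,\ind[s_{x,y,w'}\subseteq S]\brac{1-\tfrac{2\abs{w+w'}}{n}}}.\]
It then suffices to show that this is $o(1)\cdot\Norm{\Gamma}_2^2$, since Lemma~\ref{lemma:spread_imp_l1_l2} gives $\Norm{\Gamma}_2^2\le(1+\epsilon)a^3$. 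Plugging in, the Cauchy--Schwarz bound becomes $(o(1)\,a^4)^{1/2}=o(a^2)$, as required.

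\textbf{Step 3 (main obstacle): most pairs of squares through a common point have $\abs{w+w'}\approx n/2$.} This is the analogue of Lemma~\ref{lemma:many_non_triv_coord} for pairs. The integrand is at most $1$, and it is at most $2\theta$ in absolute value whenever $\abs{u}\in[(1-\theta)n/2,(1+\theta)n/2]$, where $u=w+w'$. For the remaining pairs, drop the indicators except to substitute $w'=w+u$ with $(x,y,w,u)$ uniform. The two-sided Chernoff bound (Fact~\ref{fact:chernoff}) then bounds their contribution by $2^{-\Omega(\theta^2 n)}$. On the other hand, Lemma~\ref{lemma:spread_imp_l1_l2} gives $\Norm{\Gamma}_2^2\ge(1-\epsilon)a^3\ge 2^{-O(d)-o(n)}=2^{-\kappa n}$ with $\kappa=o(1)$, where the $2^{-o(n)}$ accounts for passing from $\mc V$ to $\F_2^n$.

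Choose $\theta=\max\set{C\sqrt{\kappa},n^{-0.1}}=o(1)$. Then the bad pairs contribute $o(1)\cdot\Norm{\Gamma}_2^2$ and the good pairs contribute at most $2\theta\Norm{\Gamma}_2^2$. Hence $\E_i\Norm{D_i}_2^2=o(1)\Norm{\Gamma}_2^2$, which completes the argument.

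The delicate point is this last step: we need the $2^{-o(n)}$ lower bound on the pair count $\Norm{\Gamma}_2^2$, which is exactly the second-moment estimate supplied by the graph-counting Theorem~\ref{thm:graph_count}, in order to absorb the Chernoff error.
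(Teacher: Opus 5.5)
Your proof is correct, but it takes a different route from the paper's.

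\textbf{The paper's route.} It notes (as your Step~1 effectively does) that $\mu$ and $\nu_i$ are the $(x_0,y_0)$-marginals of a uniform $(x_0,y_0,w)\in\mc T$, unconditioned and conditioned on $w_i=1$. It then invokes the general Lemma~\ref{lemma:cond_marginal}. That lemma is an entropy argument: $H(X,Y,W)\ge 3n-o(n)$ and $H(X,Y)\le 2n$ force $H(W_i\mid X,Y)\ge 1-o(1)$ for most $i$. So $W_i$ is nearly unbiased given $(X,Y)$, and Lemma~\ref{lemma:bin_ent_ineq} converts this into $\ell_1$-closeness of the marginals. This uses only the first-moment bound $\abs{\mc T}\ge 2^{3n-o(n)}$, and it holds for an arbitrary set $\mc T$ of that size.

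\textbf{Your route.} You work with the signed function $D_i=2\Gamma_i-\Gamma$ and pass to $\ell_2$ by Cauchy--Schwarz on $S$. Averaging the character $(-1)^{w_i+w'_i}$ over $i$ reduces everything to a Chernoff bound on $\abs{w+w'}$ for pairs of squares through a common point. This is in the spirit of Lemma~\ref{lemma:many_non_triv_coord} and Proposition~\ref{prop:sq_cover}.

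\textbf{What each buys.} Your route avoids information theory, but it genuinely depends on the flatness of $\Gamma$. Your final bound is $2\bigl(o(1)\cdot \abs{S}2^{-2n}\Norm{\Gamma}_2^2/\Norm{\Gamma}_1^2\bigr)^{1/2}$. So you need the upper bound $\Norm{\Gamma}_2^2\le(1+\epsilon)\alpha_E^3\alpha_F^3\alpha_G^3$ from Lemma~\ref{lemma:spread_imp_l1_l2}, i.e.\ the graph-counting Theorem~\ref{thm:graph_count}. Your argument is therefore tied to the spread setting, whereas the paper's lemma is a generic statement about large sets.

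Your closing remark slightly misplaces where that second-moment estimate is used. The lower bound $\Norm{\Gamma}_2^2\ge 2^{-o(n)}$ in Step~3 is free, since $\Norm{\Gamma}_2^2\ge\Norm{\Gamma}_1^2$. It is the \emph{upper} bound in Step~2 that needs the pair count.

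Quantitatively the two proofs give comparable bounds: roughly the fourth root of the deficiency $\kappa$. Yours additionally has an $n^{-\Omega(1)}$ floor coming from the choice of $\theta$.
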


In the proof, we will need the following technical lemma, whose proof is deferred to Appendix~\ref{sec:app_marginal}.

\begin{lemma}\label{lemma:cond_marginal}
	Let $\mc T \subseteq \F_2^n\times\F_2^n\times \F_2^n$ be any set of size at least $2^{3n-o(n)}$.
	Let $(X,Y,W)\sim \mc T$ be chosen uniformly at random.
	Then,	\[ \E_{i\sim [n]}\Norm{P_{X,Y|W_i=1}-P_{X,Y}}_1 \leq o(1). \]
    Here, $P_{X,Y}$ denotes the marginal distribution of $(X,Y)$, and $P_{X,Y|W_i=1}$ denotes the marginal distribution of $(X,Y)$ conditioned on the event $W_i=1$.
\end{lemma}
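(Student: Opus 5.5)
Here $\mc T$ is an arbitrary set of size $2^{3n-o(n)}$, with no structure assumed. The plan is to use an information-theoretic (Raz-style) argument via KL divergence and the chain rule. Write $(X,Y,W)$ uniform on $\mc T$. Since $\abs{\mc T}\geq 2^{3n-o(n)}$, we have $H(X,Y,W)\geq 3n-o(n)$. Because $H(X,Y)\le 2n$, this gives $H(W\mid X,Y)\geq n-o(n)$. Equivalently, the mutual information between $(X,Y)$ and $W$ satisfies
\[ I(XY;W)=H(W)-H(W\mid XY)\le n-(n-o(n))=o(n). \]

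The chain rule then gives
\[ I(XY;W)=\sum_{i=1}^n I(XY;W_i\mid W_{<i})\ \geq\ \sum_{i=1}^n I(XY;W_i)-\text{(correction)}. \]
The correction has to be handled, and the cleanest route is different from the naive inequality, which fails in general. I would use
\[ \sum_i I(XY;W_i)\le I(XY;W)+\Bigl(\sum_i H(W_i)-H(W)\Bigr). \]
This holds because $\sum_i I(XY;W_i)=\sum_i H(W_i)-\sum_i H(W_i\mid XY)$, while $\sum_i H(W_i\mid XY)\geq H(W\mid XY)$ by subadditivity. Now $\sum_i H(W_i)\le n$, and $H(W)\geq H(W\mid XY)\geq n-o(n)$, so $\sum_i H(W_i)-H(W)\le o(n)$. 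Hence $\sum_i I(XY;W_i)\le o(n)$, that is, $\E_i I(XY;W_i)=o(1)$.

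Next I convert to $\ell_1$ distance. We have $I(XY;W_i)=\sum_b \Pr[W_i=b]\, D(P_{XY\mid W_i=b}\,\|\,P_{XY})$. The entropy bounds also give $H(W_i)\geq 1-o(1)$ on average over $i$: from $\sum_i H(W_i)\ge H(W)\ge n-o(n)$ and $H(W_i)\le 1$, we get $\E_i(1-H(W_i))=o(1)$. Since $H(W_i)\ge 1-\theta$ forces $\Pr[W_i=1]$ to be near $1/2$, Markov's inequality gives $\Pr[W_i=1]\geq 1/4$ for all but an $o(1)$ fraction of $i$. For those $i$, we get $D(P_{XY\mid W_i=1}\|P_{XY})\le 4I(XY;W_i)$. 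Pinsker then gives $\Norm{P_{XY\mid W_i=1}-P_{XY}}_1\le \sqrt{2\ln 2\cdot 4 I(XY;W_i)}$. Averaging over $i$ with Jensen yields $O(\sqrt{\E_i I})=o(1)$. The bad $i$ contribute at most $2$ each, and there is only an $o(1)$ fraction of them, so they add $o(1)$ in total.

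The main obstacle is exactly the correction step: showing that the per-coordinate informations sum to $o(n)$ despite dependence among the coordinates of $W$. The inequality above resolves this using only the near-maximality of $H(W)$. A second point to check is that $\Pr[W_i=1]$ is bounded away from $0$ for most $i$, which is needed to pass from mutual information to the conditional divergence; this follows from the same entropy slack.
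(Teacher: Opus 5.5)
Your proof is correct and is essentially the paper's argument: both rest on $H(X,Y,W)\ge 3n-o(n)$, $H(X,Y)\le 2n$ and the subadditivity $H(W\mid X,Y)\le\sum_i H(W_i\mid X,Y)$. Your bound on $\sum_i I(XY;W_i)$ is just $\sum_i I(XY;W_i)\le\sum_i\bigl(1-H(W_i\mid XY)\bigr)\le n-H(W\mid XY)$, so the ``correction step'' is not really an obstacle. The only difference is the conversion to $\ell_1$: you go through KL divergence and Pinsker, with a separate Markov argument for $\Pr[W_i=1]\ge 1/4$, whereas the paper applies $\abs{\Pr[B=1]-\tfrac12}^2\le 1-H(B)$ for binary $B$ conditionally on each $(x,y)$, uses Cauchy--Schwarz, and computes $\Norm{P_{X,Y|W_i=1}-P_{X,Y}}_1=\E_{(x,y)\sim P_{X,Y}}\abs{\Pr[W_i=1\mid x,y]/\Pr[W_i=1]-1}$ directly (incidentally, your route gives error $O(\sqrt{\delta})$ rather than the paper's $O(\delta^{1/4})$ when $\abs{\mc T}\ge 2^{(3-\delta)n}$).
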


\begin{proof}[Proof of Lemma~\ref{lemma:distribution_cond_non_trivial}]
	First, observe the distribution $\mu$ is the same as the following distribution: choose $(x_0,y_0,w)\sim \mc T$, and output $(x_0,y_0)$.
	This can seen from Equation~\eqref{eqn:square_count}: essentially, this follows by Remark~\ref{remark:sq_rep}, as every square with $w\not=0$ occurs 4 times in $\mc T$, and each point in such a square is chosen with probability $1/4$ after the square is chosen; also, each square with $w=0$ occurs only once, however the point within this square is chosen with probability $1$ after such a square is chosen.
	By a similar reasoning, for every $i\in [n]$, the distribution $\nu_i$ is the same as the following: choose $(x_0,y_0,w) \sim \mc T \mid w_i=1$, and output $(x_0,y_0)$.
	
	Now, observe that by Lemma~\ref{lemma:spread_ext_sq_cover}, we have
	\[ \abs{\mc T} \geq \abs{\mc V}^3 \cdot (1-\epsilon/2)\cdot \alpha_E^2\alpha_F^2\alpha_G^2 \geq 2^{3n-(3(n-\dim(\mc V))+1+6d)} \geq 2^{3n-o(n)}.\]
	The result now follows by Lemma~\ref{lemma:cond_marginal}.
\end{proof}

Finally, we complete the proof of the improved bound:
\begin{proof}[Proof of Lemma~\ref{lemma:improved_hard_coor_general}]
	As observed before, it suffices to improve Lemma~\ref{lemma:hard_coor_assuming_spread}, and we have
	\[ \E_{i\sim [n]}\sqbrac{\Pr_{Q^{\otimes n}}[\Win_i \mid E\times F\times G]} \leq \E_{i\sim [n]}\E_{(x_0,y_0,w)\sim \mc T}\E_{(x,y)\sim s_{x_0,y_0,w}}\sqbrac{\ind\sqbrac{(x,y,x+y)\in \Win_i}}+\frac{\epsilon}{2}.  \]
	By Lemma~\ref{lemma:distribution_cond_non_trivial} and Lemma~\ref{lemma:non_triv_in_extens_hard}, the above is at most 
	\begin{align*}
		 \E_{i\sim [n]}\E_{(x_0,y_0,w)\sim \mc T \mid w_i=1}\E_{(x,y)\sim s_{x_0,y_0,w}}\sqbrac{\ind\sqbrac{(x,y,x+y)\in \Win_i}}+\frac{\epsilon}{2} + o(1) 
		 &\leq  \val(\mc G)+\frac{\epsilon}{2} + o(1)
		 \\&\leq \val(\mc G)+\epsilon. \qedhere
	\end{align*}
\end{proof}


\subsection{Random Sets are Hard under Parallel Repetition}

We prove a strengthening of Theorem~\ref{thm:ghz_unif_parrep}, by using an argument similar to the proof of Lemma~\ref{lemma:inductive_criteria}.

\begin{lemma}\label{lemma:random_set_hard}
	For every constant $\epsilon\in (0,1/4)$, there exists $c=c(\epsilon)>0$, such that for every integer $1\leq t\leq n^c$, it holds that
	\[ \E_{|S|=t} \sqbrac{\Pr_{Q^{\otimes n}}\sqbrac{\Win_S}} \leq 2\cdot \brac{\val(\mc G)+\epsilon}^t ,\]
	where $\E_{|S|=t}$ denotes that the expectation is over a uniformly chosen subset $S\subseteq [n]$ of size $t$.
\end{lemma}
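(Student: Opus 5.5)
We argue by induction on $t$, in the spirit of the proof of Lemma~\ref{lemma:inductive_criteria}. Write $p_t = \E_{|S|=t}\Pr[\Win_S]$ and $\beta = \val(\mc G)+\epsilon$. The plan is to show, for each $t<n^c$, a recursion of the form
\[ p_{t+1} \leq \beta'\, p_t + 2^{-n^{c'}} \]
with $\beta' = \val(\mc G)+\epsilon/2 + o(1)$. Unrolling this recursion gives $p_t \leq \beta^t + t\cdot 2^{-n^{c'}}$. Since $t\leq n^c$ with $c$ much smaller than $c'$, and $\beta\geq \epsilon$, we have $\beta^t \geq \epsilon^{n^c} \gg t\, 2^{-n^{c'}}$. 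Hence $p_t\leq 2\beta^t$.

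\textbf{Step 1 (sampling the set).} Choose a uniform $S$ of size $t$ and then a uniform $i\notin S$. The pair $S\cup\{i\}$ is a uniform $(t+1)$-set, so
\[ p_{t+1} = \E_S\Big[\Pr[\Win_S]\cdot \E_{i\notin S}\Pr[\Win_i \mid \Win_S]\Big]. \]
By Lemma~\ref{lemma:large_subset_random}, averaging over $i\notin S$ is within $O(t/n)=o(1)$ of averaging over $i\sim[n]$. It therefore suffices to bound $\E_{i\sim[n]}\Pr[\Win_i\mid \Win_S]$ by $\val(\mc G)+\epsilon/2+o(1)$ whenever $\Pr[\Win_S]$ is not tiny.

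\textbf{Step 2 (reduction to product events).} Fix $S$. Condition further on the questions and answers in the coordinates of $S$, writing $(q_S,a_S)$ for a fixing consistent with $\Win_S$. The answers of player $1$ on $S$ are a function of $x$ alone, and similarly for the other players. Hence the event of a given fixing is a product event $E\times F\times G$, and $\Win_S$ is a disjoint union of at most $|\text{questions}\times\text{answers}|^{3t} = 2^{O(t)}$ such events. Call an event light if its measure is below $2^{-n^{c_0}}$, where $c_0=c(\epsilon/2)$ is the constant from Lemma~\ref{lemma:improved_hard_coor_general}. The light events contribute at most $2^{O(t)}\cdot 2^{-n^{c_0}} \leq 2^{-n^{c_0}/2}$ in total, since $t\leq n^c\ll n^{c_0}$. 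On each heavy event, Lemma~\ref{lemma:improved_hard_coor_general} (with $\epsilon/2$) bounds the average winning probability over $i\sim[n]$ by $\val(\mc G)+\epsilon/2$. Summing over all events gives
\[ \Pr[\Win_S]\cdot\E_i\Pr[\Win_i\mid\Win_S] \leq (\val(\mc G)+\epsilon/2)\Pr[\Win_S] + 2^{-n^{c_0}/2}. \]
Plugging this into Step 1 yields the recursion, with the $o(1)$ loss from $i\in S$ absorbed into $\beta'\leq\beta$ for large $n$.

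\textbf{Main obstacle.} The delicate part is the quantitative bookkeeping. The additive error $2^{-n^{c_0}/2}$ must stay negligible relative to $\beta^t$, which forces $c<c_0$. The number of fixings $2^{O(t)}$ must not overwhelm the threshold. We also use that the coordinate $i$ is averaged over $[n]$, including the coordinates in $S$, which is harmless.
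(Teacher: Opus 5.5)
Your proposal is correct and follows the paper's proof essentially step for step. The paper also conditions on the questions and answers in $S$ to obtain at most $2^{O(t)}$ product events, applies Lemma~\ref{lemma:improved_hard_coor_general} to the heavy ones, discards the light ones, and uses Lemma~\ref{lemma:large_subset_random} to pass from $i\notin S$ to $i\sim[n]$, yielding $p_{t+1}\le(\val(\mc G)+\epsilon)p_t+2^{-n^{c_0}/2}$. The only cosmetic difference is in absorbing the additive error $t\cdot 2^{-n^{c_0}/2}$ into $\beta^t$: you use $\beta\ge\epsilon$, while the paper uses that $\val(\mc G)\ge 1/4$ unless $\val(\mc G)=0$, and both work.
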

\begin{proof}
	Let $\epsilon\in (0,1/4)$ be any constant, and let $0<c\leq 1$ be as in Lemma~\ref{lemma:improved_hard_coor_general} with respect to the parameter $\epsilon/2$.
	We show that the lemma holds with constant $c/2$.

	More formally, by induction on $t$, we show that for every $1\leq t \leq \lfloor n^{c/2}\rfloor$, 
	\[\E_{|S|=t} \sqbrac{\Pr_{Q^{\otimes n}}\sqbrac{\Win_S}} \leq \brac{\val(\mc G)+\epsilon}^t + t\cdot 2^{-n^c/2} \leq 2\cdot (\val(\mc G)+\epsilon)^t.\]
	Note that the second inequality holds as we may assume that $\val(\mc G)\geq 1/4$ (or else $\val(\mc G)=0$ and there is nothing to prove) in which case $t\cdot 2^{-n^c/2} \leq (1/4)^{n^{c/2}} \leq (1/4)^t \leq (\val(\mc G)+\epsilon)^t$.
	
	The base case $t=1$ holds trivially as $\Pr_{Q^{\otimes n}}[\Win_i]\leq \val(\mc G)$ for each $i\in [n]$.
	For the inductive step, consider any $1\leq t< \lfloor n^{c/2}\rfloor$, and suppose that $\E_{|T|=t} \sqbrac{\Pr_{Q^{\otimes n}}\sqbrac{\Win_T}} \leq \brac{\val(\mc G)+\epsilon}^t+t\cdot 2^{-n^c/2}$.
	Then, we have
	\begin{align*}
		 \E_{|S|=t+1} \sqbrac{\Pr_{Q^{\otimes n}}\sqbrac{\Win_S}} 
		 &= \E_{|T|=t}\E_{i\sim [n]\setminus T} \sqbrac{\Pr_{Q^{\otimes n}}\sqbrac{\Win_{T}\land \Win_i}} 
		 \\&= \E_{|T|=t}\sqbrac{\Pr_{Q^{\otimes n}}\sqbrac{\Win_{T}}\cdot \E_{i\sim [n]\setminus T} \sqbrac{\Pr_{Q^{\otimes n}}\sqbrac{\Win_i \mid \Win_T}}}.
	\end{align*}
	With the above expression in mind, we fix any set $T\subseteq [n]$ of size $t$, and calculate an upper bound for $\E_{i\sim [n]\setminus T} \sqbrac{\Pr\sqbrac{\Win_i \mid \Win_T}}$.
	Let $R \in (\F_2\times \F_2\times \F_2 \times \mc A\times \mc B\times \mc C)^T$  be the random variable consisting of the questions and answers of the players in coordinates given by set $T$; note that this random variable takes at most $(8|\mc A||\mc B||\mc C|)^t \leq 2^{O(t)}$ possible values.
    Also, observe that the event $\Win_T$ is a deterministic function of $R$.
	Let $\mc R$ be the set of all values $r$ of the random variable $R$ that satisfy the event $\Win_T$, and let $\mc R' \subseteq \mc R$ be the set of all $r\in \mc R$ such that $\Pr_{Q^{\otimes n}}[R=r] \geq 2^{-n^c}$.
	Note that by Lemma~\ref{lemma:improved_hard_coor_general}, we have that for each $r\in \mc R'$,  \[ \E_{i\sim [n]}\Pr_{Q^{\otimes n}}\sqbrac{\Win_i\mid R=r} \leq \val(\mc G)+\frac{\epsilon}{2}, \]
	since the event $R=r$ is a product event with respect to the three players.
    
	Hence, we get
	\begin{align*}
		\E_{i\sim [n]\setminus T}\sqbrac{\Pr_{Q^{\otimes n}}\sqbrac{\Win_i\mid \Win_T}} &\leq \E_{i\sim [n]}\sqbrac{\Pr_{Q^{\otimes n}}\sqbrac{\Win_i\mid \Win_T}} + \frac{2t}{n} \qquad \text{(Lemma~\ref{lemma:large_subset_random})}
		\\&= \sum_{r\in \mc R} \frac{\E_{i\sim [n]}\sqbrac{\Pr_{Q^{\otimes n}}\sqbrac{\Win_i\mid R=r}}\cdot \Pr_{Q^{\otimes n}}[R=r]}{\Pr_{Q^{\otimes n}}[\Win_T]} + o(1)
		\\&\leq  \sum_{r\in \mc R'}\brac{\val(\mc G)+\frac{\epsilon}{2}}\cdot \frac{\Pr_{Q^{\otimes n}}[R=r]}{\Pr_{Q^{\otimes n}}[\Win_T]} +\sum_{r\in \mc R\setminus \mc R'} \frac{1\cdot 2^{-n^c}}{\Pr_{Q^{\otimes n}}[\Win_T]}+ o(1)
		\\&\leq \brac{\val(\mc G)+\frac{\epsilon}{2}} + \frac{(8|\mc A||\mc B||\mc C|)^t\cdot 2^{-n^c}}{\Pr_{Q^{\otimes n}}[\Win_T]} +o(1)
		\\&\leq \val(\mc G)+\epsilon + \frac{2^{O(n^{c/2})-n^c}}{\Pr_{Q^{\otimes n}}[\Win_T]} \leq \val(\mc G) + \epsilon+ \frac{2^{-n^c/2}}{\Pr_{Q^{\otimes n}}[\Win_T]}. 
	\end{align*}

	Combining this with the above, and using the inductive hypothesis, we get
	\begin{align*}
		 \E_{|S|=t+1} \sqbrac{\Pr_{Q^{\otimes n}}\sqbrac{\Win_S}} 
		 &= \E_{|T|=t}\sqbrac{\Pr_{Q^{\otimes n}}\sqbrac{\Win_{T}}\cdot \brac{\val(\mc G) + \epsilon+ \frac{2^{-n^c/2}}{\Pr_{Q^{\otimes n}}[\Win_T]}}}.
		 \\&\leq \E_{|T|=t}\sqbrac{\Pr_{Q^{\otimes n}}\sqbrac{\Win_{T}}}\cdot \brac{\val(\mc G) + \epsilon} + 2^{-n^{c}/2}.
		 \\&\leq \brac{\brac{\val(\mc G)+\epsilon}^t + t\cdot 2^{-n^c/2} }\cdot \brac{\val(\mc G) + \epsilon} + 2^{-n^{c}/2}
		 \\&\leq \brac{\val(\mc G)+\epsilon}^{t+1} + (t+1)\cdot 2^{-n^c/2}. \qedhere 
	\end{align*}
\end{proof}


\subsection{Proof of the Concentration Bound}

Finally, we complete the proof of our concentration bound, using standard arguments~\cite{Rao11}:

\begin{proof}[Proof of Theorem~\ref{thm:ghz_conc}]
	Consider any constant $\epsilon\in (0,1/4)$, let $0<c< 1$ be as in Lemma~\ref{lemma:random_set_hard} with the parameter $\epsilon/2$, and let $t = \lfloor n^c\rfloor$.
	
	Let $Z$ be the random variable denoting the number of coordinates won by the players.
	Whenever $Z\geq (\val(\mc G)+\epsilon)\cdot n$, we pick a uniformly random subset $S$ of size $t$ from the coordinates that the players won.
	Note that for any fixed subset $T\subseteq [n]$ of size $t$, the probability that $S$ equals $T$ is at most $\binom{(\val(\mc G)+\epsilon)\cdot n}{t}^{-1}$. 
    Hence, we have 
    \begin{align*}
        \Pr\sqbrac{Z\geq (\val(\mc G)+\epsilon)\cdot n}
		&\leq \sum_{T\subseteq [n], |T|=t} \Pr[S=T] \cdot \Pr[\Win_T]
		\\&\leq \sum_{T\subseteq [n], |T|=t} \binom{(\val(\mc G)+\epsilon)\cdot n}{t}^{-1} \cdot \Pr[\Win_T].
    \end{align*}
    Now, using Lemma~\ref{lemma:random_set_hard}, we get
    
	\begin{align*}
		\Pr\sqbrac{Z\geq (\val(\mc G)+\epsilon)\cdot n} &\leq \binom{n}{t}\cdot  \binom{(\val(\mc G)+\epsilon)\cdot n}{t}^{-1}\cdot 2\cdot\brac{\val(\mc G)+\epsilon/2}^t 
		\\&\leq \brac{\frac{n}{(\val(\mc G)+\epsilon)\cdot n-t}}^t \cdot 2\cdot\brac{\val(\mc G)+\epsilon/2}^t
        \\&\leq 2 \cdot \brac{\frac{\val(\mc G)+\epsilon/2}{\val(\mc G)+\epsilon - o(1)}}^t 
		\\&= 2\cdot \brac{1-\frac{\epsilon/2-o(1)}{\val(\mc G)+\epsilon-o(1)}}^t
		\leq 2\cdot \brac{1-\epsilon/4}^t
		\\&\leq 2^{1-\epsilon t/4} \leq 2^{-n^{c/2}}. &\qedhere
	\end{align*}
\end{proof}


\bibliographystyle{alpha}
\bibliography{main.bib}

\appendix
\section{Marginals of a Conditional Distribution}\label{sec:app_marginal}

In this section, we prove Lemma~\ref{lemma:cond_marginal}.
Our proof will use some basic information theory, and the reader is referred to~\cite{CT06} for an excellent introduction to information theory.
Recall that for a random variable $X$ over a finite set $\Omega$, its entropy is defined as \[H(X) = -\sum_{x\in \Omega}\Pr[X=x]\cdot \log_2(\Pr[X=x]).\]
 
We have the following simple fact:

\begin{lemma}\label{lemma:bin_ent_ineq}
	Let $X\in \set{0,1}$ be a binary valued random variable.
	Then,
	\[ \abs{\Pr[X=1]-\frac{1}{2}}^2\leq 1-H(X).\] 
\end{lemma}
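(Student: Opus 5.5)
Let $p=\Pr[X=1]$, so $H(X)=h(p)$ with $h$ the binary entropy function in bits. Since $|p-\tfrac12|\le \tfrac12$ and $h$ is symmetric about $1/2$, the statement reduces to the one-variable inequality
\[ 1-h(p)\ \ge\ (p-\tfrac12)^2 \quad\text{for all } p\in[0,1]. \]
I will substitute $p=\tfrac{1+x}{2}$ with $x\in[-1,1]$. The claim then reads $1-h\bigl(\tfrac{1+x}{2}\bigr)\ge x^2/4$, and by symmetry it suffices to treat $x\in[0,1]$.

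The cleanest route is the Taylor expansion. We have
\[ 1-h\Bigl(\tfrac{1+x}{2}\Bigr)=\frac{1}{2\ln 2}\bigl[(1+x)\ln(1+x)+(1-x)\ln(1-x)\bigr]=\frac{1}{\ln 2}\sum_{k\ge1}\frac{x^{2k}}{2k(2k-1)}, \]
which can be checked by expanding the logarithms. This series has only nonnegative terms, so keeping just the $k=1$ term gives
\[ 1-h\Bigl(\tfrac{1+x}{2}\Bigr)\ \ge\ \frac{x^2}{2\ln 2}\ \ge\ \frac{x^2}{4}, \]
since $2\ln 2<4$. Translating back, $1-H(X)\ge 2(p-\tfrac12)^2/\ln 2\ge (p-\tfrac12)^2$. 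The endpoints $x=\pm1$ (that is, $p\in\{0,1\}$) follow by continuity, or directly, since then $1-H(X)=1\ge 1/4$.

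As an alternative, Pinsker's inequality gives $D(\mathrm{Ber}(p)\,\|\,\mathrm{Ber}(1/2))\ge \frac{2}{\ln 2}(p-\tfrac12)^2$ in bits, and this divergence equals $1-h(p)$ exactly. Either argument works. The only real obstacle is verifying the series identity, or equivalently checking that the second derivative of $g(x)=1-h(\tfrac{1+x}{2})-x^2/4$ is nonnegative. For the latter, $g''(x)=\frac{1}{\ln 2}\cdot\frac{1}{1-x^2}-\frac12>0$, and together with $g(0)=g'(0)=0$ this gives $g\ge0$ by convexity. I will present this convexity argument as the main proof since it is the shortest.
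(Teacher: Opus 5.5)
Your proof is correct and follows the paper's route: both reduce the claim to the one-variable inequality $1-h(\tfrac12+\delta)\ge\delta^2$ for $\delta\in[0,\tfrac12]$. The paper only asserts that this inequality holds, whereas your convexity argument ($g(0)=g'(0)=0$ and $g''(x)=\frac{1}{\ln 2\,(1-x^2)}-\frac12>0$), or equivalently the series or Pinsker computation, actually proves it, and even gives the stronger constant $2/\ln 2$ in place of $1$.
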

\begin{proof}
	Let $\delta = \abs{\Pr[X=1]-\frac{1}{2}}$.
	The lemma statement is equivalent to the inequality
	 \[ \delta^2 \leq 1-H(X) = 1+\brac{\frac{1}{2}+\delta}\log_2\brac{\frac{1}{2}+\delta} + \brac{\frac{1}{2}-\delta}\log_2\brac{\frac{1}{2}-\delta},\]
	 which is true for all $\delta\in [0,1/2]$.
\end{proof}

Next, we prove the main lemma of this section:
\begin{proof}[Proof of Lemma~\ref{lemma:cond_marginal}]
	Suppose that $\abs{\mc T} \geq 2^{(3-\delta)n}$ for some $\delta=o(1)$.
	Then, we have
	\[ (3-\delta)n \leq H(X,Y,W) = H(X,Y) + H(W\mid X,Y) \leq 2n + \sum_{i=1}^n H(W_i\mid X,Y), \] 
	which implies \[\E_{i\sim [n]} \sqbrac{H(W_i\mid X,Y)} \geq 1-\delta.\]
	Define the set \[\mc G = \set{i\in [n] : H(W_i| X,Y) \geq 1-\sqrt{\delta}}.\]
	Then, by Markov's inequality, we have $\Pr_{i\sim [n]}\sqbrac{i\not\in \mc G} \leq \sqrt{\delta}$.
	 
	Now, consider any $i\in \mc G$.
	By Cauchy-Schwarz and Lemma~\ref{lemma:bin_ent_ineq}, we have
	\begin{align*}
		\brac{\E_{(x,y)\sim P_{X,Y}}\abs{\Pr\sqbrac{W_i=1 |X=x,Y=y}-\frac{1}{2}}}^2 &\leq \E_{(x,y)\sim P_{X,Y}}\abs{\Pr\sqbrac{W_i=1 | x,y}-\frac{1}{2}}^2
		\\&\leq \E_{(x,y)\sim P_{X,Y}}\sqbrac{1-H(W_i|x,y)}
		\\&= 1-H(W_i|X,Y) \leq \sqrt{\delta}.
	\end{align*}
	Hence, by the triangle inequality, we have
	\[ \abs{\Pr\sqbrac{W_i=1 }-\frac{1}{2}}\leq \E_{(x,y)\sim P_{X,Y}}\abs{\Pr\sqbrac{W_i=1 | x,y}-\frac{1}{2}} \leq \delta^{1/4}. \]
	By the above, we get
	\begin{align*}
		\Norm{P_{X,Y|W_i=1}-P_{X,Y}}_1  &= \sum_{x,y}\abs{\Pr[x,y|W_i=1]-\Pr[x,y]}
		\\&= \E_{(x,y)\sim P_{X,Y}} \abs{\frac{\Pr[W_i=1|x,y]}{\Pr[W_i=1]}-1}
		\\&\leq \E_{(x,y)\sim P_{X,Y}} \abs{\frac{\Pr[W_i=1|x,y]}{1/2}-1} \\&\qquad+ \E_{(x,y)\sim P_{X,Y}}\sqbrac{\Pr[W_i=1|x,y]}\cdot \abs{\frac{1}{\Pr[W_i=1]}-\frac{1}{1/2}}
		\\&\leq 2\cdot \delta^{1/4} +\Pr[W_i=1]\cdot \frac{\delta^{1/4}}{\Pr[W_i=1]\cdot 1/2}		
		\\&\leq 4\delta^{1/4}.
	\end{align*}
	Hence,
	\[ \E_{i\sim [n]}\Norm{P_{X,Y|W_i=1}-P_{X,Y}}_1 \leq 4\delta^{1/4} + 2\cdot \Pr_{i\sim [n]}[i\not\in \mc G] \leq 4\delta^{1/4}+2\delta^{1/2} \leq o(1). \qedhere\]
\end{proof}


\end{document}